\documentclass[12pt]{amsart}

\copyrightinfo{2000}{American Mathematical Society}

\newcommand{\cA}{{\mathcal A}}

\newcommand{\cD}{{\mathcal D}}
\newcommand{\cE}{{\mathcal E}}
\newcommand{\cL}{{\mathcal L}}

\newcommand{\bN}{{\mathbb N}}
\newcommand{\bZ}{{\mathbb Z}}
\newcommand{\bQ}{{\mathbb Q}}
\newcommand{\bR}{{\mathbb R}}
\newcommand{\bC}{{\mathbb C}}

\textwidth=14cm
\textheight=21.7cm
\topmargin=0.0cm
\oddsidemargin=1.0cm
\evensidemargin=1.0cm

\numberwithin{equation}{section}

\newtheorem{Theorem}{Theorem}[section]
\newtheorem{Lemma}{Lemma}[section]
\newtheorem{Corollary}{Corollary}[section]

\newtheorem{Remark}{Remark}[section]
\newtheorem{Example}{Example}[section]
\newtheorem{Proposition}{Proposition}[section]

\author{A.~Yu.~Khrennikov}
\address{International Center for Mathematical Modelling in Physics
and Cognitive Sciences MSI, V\"axj\"o University, \ SE-351 95,
V\"axj\"o, \ Sweden. \ Phone: +46\,(470)\,708790 \, Fax: +46\,(470)\,84004}
\email{andrei.khrennikov@msi.vxu.se}

\author{V.~M.~Shelkovich}
\address{Department of Mathematics, St.-Petersburg State Architecture
and Civil Engineering University, \ 2 Krasnoarmeiskaya 4, 190005,
St. Petersburg, \ Russia. \ Phone: +7\,(812)\,2517549 \, Fax: +7\,(812)\,3165872}
\email{shelkv@vs1567.spb.edu}

\title[Non-Haar $p$-adic wavelets and their application]
{Non-Haar $p$-adic wavelets and their application to
pseudo-differential operators and equations}

\thanks{The second author (V.~S.) was supported in part by
DFG Project 436 RUS 113/809, DFG Project 436 RUS 113/951, and
Grant 05-01-04002-NNIOa of Russian Foundation for Basic Research.}

\subjclass[2000]{Primary 11F85, 42C40, 47G30; Secondary 26A33, 46F10}

\keywords{$p$-Adic compactly supported wavelet bases; $p$-adic pseudo-differential
operators; fractional operators; $p$-adic pseudo-differential
equations, $p$-adic Lizorkin spaces.}

\date{ }

\begin{document}

\begin{abstract}
In this paper a countable family of new compactly supported {\em non-Haar}
$p$-adic wavelet bases in ${\cL}^2(\bQ_p^n)$ is constructed.
We use the wavelet bases in the following applications: in the theory
of $p$-adic pseudo-differential operators and equations.
Namely, we study the connections between wavelet analysis and spectral
analysis of $p$-adic pseudo-differential operators. A criterion for a
multidimensional $p$-adic wavelet to be an eigenfunction for a
pseudo-differential operator is derived. We prove that these
wavelets are eigenfunctions of the fractional operator.
In addition, $p$-adic wavelets are used to construct
solutions of linear and semi-linear pseudo-differential
equations.
Since many $p$-adic models use pseudo-differential operators
(fractional operator), these results can be intensively used in
these models.
\end{abstract}

\maketitle

\section{Introduction}
\label{s1}

According to the well-known Ostrovsky theorem, there are two equal in
rights ``universes'': the real ``universe'' and the $p$-adic one.
The real ``universe'' is based on the field $\bR$ of real numbers,
which is defined as the completion of the field of rational numbers
$\bQ$ with respect to the usual Euclidean distance between rational
numbers. In its turn, the $p$-adic ``universe'' is based on the field
$\bQ_p$ of $p$-adic numbers, which is defined as the completion
of $\bQ$ with respect to the $p$-adic norm $|\cdot|_p$. This norm
is defined as follows: if an arbitrary rational number $x\ne 0$ is
represented as $x=p^{\gamma}\frac{m}{n}$, where $\gamma=\gamma(x)\in \bZ$
and the integers $m$, $n$ are not divisible by $p$, then
\begin{equation}
\label{1}
|x|_p=p^{-\gamma}, \quad x\ne 0, \qquad |0|_p=0.
\end{equation}
The norm $|\cdot|_p$ satisfies the strong triangle inequality
$|x+y|_p\le \max(|x|_p,|y|_p)$ and is {\em non-Archimedean}.

During a few hundred years theoretical physics has been developed
on the basis of real (and later also complex) numbers. However, in the last 20
years the field of $p$-adic numbers ${\bQ}_p$ (as well as its
algebraic extensions) has been intensively used in theoretical
and mathematical physics, stochastics, cognitive sciences and
psychology~\cite{Ar-Dr-V},~\cite{Av-Bik-Koz-O},~\cite{Kh1}--~\cite{Kh4},
~\cite{Kh-Koz3},~\cite{Koch3},~\cite{Koz-Os-Av-1},~\cite{Vl-V-Z}--~\cite{V2}
(see also the references therein). However, since a $p$-adics is a young area
there are many unsufficiently studied problems which have been intensively
studied in the  real ``universe''.  One of them is the $p$-adic wavelet theory.

Nowadays it is difficult to find an engineering area where wavelets
(in the real setting) are not applied. There is a general scheme for the
construction of wavelets in the real setting, which was developed
in the early nineties. This scheme is based on the notion of the
{\em multiresolution analysis} introduced by Y.~Meyer and
S.~Mallat~\cite{Mallat-1},~\cite{Meyer-1}. The $p$-adic wavelet theory
is now in conceptual stage of investigation.
In this theory the situation is as follows.

In 2002, S.~V.~Kozyrev~\cite{Koz0} found a compactly supported
$p$-adic wavelet basis for ${\cL}^2(\bQ_p)$ which is an analog of
the real Haar basis:
\begin{equation}
\label{62.0-1}
\theta_{k;j a}(x)=p^{-j/2}\chi_p\big(p^{-1}k(p^{j}x-a)\big)
\Omega\big(|p^{j}x-a|_p\big), \quad x\in \bQ_p,
\end{equation}
$k=1,2,\dots,p-1$, $j\in \bZ$, $a\in I_p=\bQ_p/\bZ_p$,
where $\Omega(t)$ is the characteristic
function of the segment $[0,1]\subset\bR$,
the function $\chi_p(\xi x)$ is an {\em additive character} of the field $\bQ_p$
for every fixed $\xi \in \bQ_p$ (see Sec.~\ref{s2}).
Kozyrev's wavelet basis (\ref{62.0-1}) is generated by dilatations and
translations of the wavelet functions:
\begin{equation}
\label{62.0-1-0}
\theta_{k}(x)=\chi_p(p^{-1}kx)\Omega(x|_p), \quad x\in \bQ_p, \quad k=1,2,\dots,p-1.
\end{equation}
Multidimensional $p$-adic bases obtained by direct multiplying
out the wavelets (\ref{62.0-1}) were considered in~\cite{Al-Kh-Sh3}.
The Haar wavelet basis (\ref{62.0-1}) was extended to
the ultrametric spaces in~\cite{Kh-Koz1},~\cite{Kh-Koz2},~\cite{Koz2}.

J.~J.~Benedetto and R.~L.~Benedetto~\cite{Ben-Ben},
R.~L.~Benedetto~\cite{Ben1} suggested a method for finding wavelet
bases on the locally compact abelian groups with compact open
subgroups, which includes the $p$-adic setting. They did not develop
the {\em multiresolution analysis {\rm(}MRA{\rm)}}, their method
being based on the {\em theory of wavelet sets}. Their method only
allows the construction of wavelet functions whose
Fourier transforms are the characteristic functions
of some sets (see~\cite[Proposition~5.1.]{Ben-Ben}).
Note that Kozyrev's wavelet basis (\ref{62.0-1}) can be constructed
in the framework of Benedettos' approach~\cite[5.1.]{Ben-Ben}.

The notion of $p$-adic MRA was introduced and a general scheme for
its construction was described in~\cite{S-Sk-1}.
To construct a $p$-adic analog of a classical MRA we need a proper $p$-adic
{\em refinement equation}. In our preprint~\cite{Kh-Sh1}, the following
conjecture was proposed: the equality
\begin{equation}
\label{62.0-3**}
\phi(x)=\sum_{r=0}^{p-1}\phi\Big(\frac{1}{p}x-\frac{r}{p}\Big),
\quad x\in \bQ_p,
\end{equation}
can be considered as a {\em refinement equation}. A solution $\phi$ to this
equation ({\it a refinable function}) is the characteristic function of the unit disc
\begin{equation}
\label{62.0-3**-1}
\phi(x)=\Omega\big(|x|_p\big), \quad x\in \bQ_p.
\end{equation}
The equation (\ref{62.0-3**}) reflects {\it natural} ``self-similarity'' of
the space $\bQ_p$: the unit disc $B_{0}(0)=\{x: |x|_p \le 1\}$ is represented
by a sum of $p$ mutually {\it disjoint} discs
$$
B_{0}(0)=B_{-1}(0)\cup\Big(\cup_{r=1}^{p-1}B_{-1}(r)\Big),
$$
where $B_{-1}(r)=\bigl\{x: |x-r|_p \le p^{-1}\bigr\}$
(see formula (\ref{79}) in Proposition~\ref{pr1-int}).
The equation  (\ref{62.0-3**}) is an analog of the {\em refinement equation}
generating the Haar MRA in the real analysis.
Using this idea, the notion of $p$-adic MRA was introduced and a general
scheme for its construction was described in~\cite{S-Sk-1}.
The scheme was realized for construction $2$-adic Haar MRA
with using (\ref{62.0-3**-1}) as the {\em generating refinement equation}.
In contrast to the real setting, the {\em refinable function} $\phi$ generating
the Haar MRA is {\em periodic}, which {\em never holds} for real
refinable functions. Due to this fact, there exist {\em infinity many different}
orthonormal wavelet bases in the same Haar MRA (see~\cite{S-Sk-1}). One of them
coincides with Kozyrev's wavelet basis (\ref{62.0-1}).
From the standpoint of results of the papers~\cite{Kh-Sh-Sk},~\cite{Al-Ev-Sk},
in~\cite{S-Sk-1} all compactly supported wavelet Haar bases were constructed.

It turned out that the above-mentioned $p$-adic wavelets are eigenfunctions of
$p$-adic pseudo-differential operators~\cite{Al-Kh-Sh3}--\cite{Al-Kh-Sh5},
~\cite{Kh-Sh1},~\cite{Kh-Sh2},~\cite{Koz0},~\cite{Koz2} (see also Sec.~\ref{s4}).
Thus the spectral theory of $p$-adic pseudo-differential operators is related
to the wavelet theory. On the other hand, it is well-known that numerous models
connected with $p$-adic differential equations use pseudo-differential operators
(see~\cite{Kh2},~\cite{Koch3},~\cite{Vl-V-Z} and the references therein).
This is closely related to the fact that for the $p$-adic analysis associated with the
mapping $\bQ_p \to \bC$, the operation of differentiation is {\it not defined\/},
and as a result, many models connected with $p$-adic differential equations
use pseudo-differential operators, in particular, the fractional operator
$D^{\alpha}$  (see the above-mentioned papers and books).
These two facts imply that study of wavelets is important since it gives
a new powerful technique for solving $p$-adic problems.

{\bf Contents of the paper.}
The main goal of this paper is to construct a {\em countable family of
new compactly supported non-Haar $p$-adic wavelet bases} in ${\cL}^2(\bQ_p)$.
Another goal is to study the connections between
{\em wavelet analysis and spectral analysis of $p$-adic pseudo-differential operators}.
In addition, we use our results to solve the Cauchy problems for $p$-adic
pseudo-differential equations.

In Sec.~\ref{s2}, we recall some facts from the theory of $p$-adic
distributions~\cite{G-Gr-P}, \cite{Taib1}--~\cite{Vl-V-Z}. In particular,
in Subsec.~\ref{s2.2}, some facts from the theory of the $p$-adic Lizorkin
spaces of test functions $\Phi(\bQ_p^n)$ and distributions $\Phi'(\bQ_p^n)$
are recalled (for details, see~\cite{Al-Kh-Sh3}).

In Sec.~\ref{s3}, {\it non-Haar $p$-adic compactly supported
wavelet bases} are introduced.
In Subsec.~\ref{s3.1}, we construct the non-Haar basis
(\ref{62.1}) which was introduced in the preprint~\cite{Kh-Sh1}
(for the brief review see~\cite{Kh-Sh2}).
In contrast to (\ref{62.0-1}), for the basis (\ref{62.1}) the number
of generating wavelet functions is not minimal, for example,
for $p=2$ we have $2^{m-1}$ wavelet functions (instead
of one as it is for (\ref{62.0-1}) and for classical wavelet
bases in real analysis).
The basis (\ref{62.1}) is the {\em non-Haar} wavelet basis, since it cannot be
constructed in the framework of the $p$-adic Haar MRA (see~\cite{S-Sk-1} and Theorem~\ref{th2}).
According to Remark~\ref{rem1}, Kozyrev's wavelet basis (\ref{62.0-1}) is
a particular case of the basis (\ref{62.1}) for $m=1$.
According to the same remark, our non-Haar wavelet basis (\ref{62.1}) can be
obtained by using the algorithm developed by the Benedettos~\cite{Ben-Ben}.
However, using our approach, we obtained the {\em explicit formulas} (\ref{62.1})
for this basis. Moreover, our technique allows to produce new wavelet bases
(see in Subsec.~\ref{s3.2}).
In Subsec.~\ref{s3.2}, using the proof scheme of~\cite[Theorem~1]{S-Sk-1},
we construct {\em infinitely many new different non-Haar wavelet bases}
(\ref{109-11}), (\ref{101-11}), (\ref{108-11}) which are distinct from the
basis (\ref{62.1}). These new bases cannot be obtained in the framework of
the standard scheme of the MRA~\cite{S-Sk-1}.
Our bases given by formulas (\ref{109-11}),
(\ref{101-11}), (\ref{108-11}) {\it cannot be constructed} by Benedettos'
method~\cite{Ben-Ben}.
For example, it is easy to see that the Fourier transform of our
generating wavelet-functions $\psi_{s}^{(m)[1]}(x)$, $s\in J_{p;m}$ defined by
(\ref{101-11}), (\ref{108-11}) and all their shifts {\em are not characteristic
functions} (see Remark~\ref{rem2}).
In Subsec.~\ref{s3.3}, $n$-dimensional non-Haar wavelet bases (\ref{62.8})
and (\ref{62.8-1}) are introduced as $n$-direct products of the corresponding
one-dimensional non-Haar wavelet bases.
All above wavelets belong to the Lizorkin space of test functions $\Phi(\bQ_p^n)$.
In Subsec.~\ref{s3.4}, the characterizations of the spaces of Lizorkin test functions and
distributions in terms of wavelets are given (see Lemma~\ref{lem-w-1**} and
Proposition~\ref{pr-w-2**}), which are very useful for solution of $p$-adic
pseudo-differential equations. The assertions of the type of
Lemma~\ref{lem-w-1**} and Proposition~\ref{pr-w-2**} were stated for ultrametric
Lizorkin spaces in~\cite{Al-Koz}.

In Sec.~\ref{s4}, the spectral theory of one class of $p$-adic multidimensional
pseudo-differential operators (\ref{64.3}) (which were introduced in~\cite{Al-Kh-Sh3})
is studied.
In Subsec.~\ref{s4.1},~\ref{s4.2}, we recall some facts on this class of
pseudo-differential operators defined in the Lizorkin space ${\cD}'(\bQ_p^n)$.
Our operators (\ref{64.3}) include the fractional
operator~\cite[\S2]{Taib1},~\cite[III.4.]{Taib3} and the
pseudo-differential operators studied in~\cite{Koch3},~\cite{Z1},~\cite{Z2}.
The Lizorkin spaces are {\it invariant\/} under our
pseudo-differential operators.
In Subsec.~\ref{s4.3}, by Theorems~\ref{th4.1},~\ref{th4.1-1} the
criterion (\ref{64.1***}) for multidimensional $p$-adic
pseudo-differential operators (\ref{64.3}) to have multidimensional
wavelets (\ref{62.8}) and (\ref{62.8-1}) as eigenfunctions is derived.
In particular, the multidimensional wavelets (\ref{62.8}) and (\ref{62.8-1})
are eigenfunctions of the Taibleson fractional operator (see
Corollaries~\ref{cor5}--\ref{cor7}).

In Sec.~\ref{s5}, the results of Sec.~\ref{s3},~\ref{s4} are used to solve
the Cauchy problems for $p$-adic evolutionary pseudo-differential equations.
Note that the Cauchy problem (\ref{76-sl}) was solved in~\cite{Al-Kh-Sh5}
for a particular case. These results give significant advance in the theory
of $p$-adic pseudo-differential equations. Moreover, since many $p$-adic models
use pseudo-differential operators (in particular, fractional operator),
these results can be used in applications.

It easy to see that formulas (\ref{109-11}), (\ref{101-11}), (\ref{108-11})
do not give description of all {\em non-Haar wavelet bases}.

Due to the results of Sec.~\ref{s3}, there arise
two important problems: to construct an analog of MRA scheme and describe all
compactly supported {\em non-Haar wavelet bases}. It is necessary
to verify if all {\em non-Haar wavelet bases} are given by formulas (\ref{109-11}),
(\ref{101-11}), (\ref{108-11})?

Taking into account representation (\ref{62.0-5}), it is natural to suggest that in this
case we must use the {\em refinement type equation}:
$$
\phi(x)=\sum_{b}\phi\Big(\frac{1}{p^m}x-\frac{b}{p^m}\Big),
\quad x\in \bQ_p,
$$
instead of the {\em Haar refinement equation} (\ref{62.0-3**}),
where $b=0$ or $b=b_{r}p^{r}+b_{r+1}p^{r+1}+\cdots+b_{m-1}p^{m-1}$,
$r=0,1,\dots,m-1$, $0\le b_j\le p-1$, $b_r\ne 0$.
This equation reflects the geometric fact that
the unit disc $B_{0}=\{x: |x|_p \le 1\}$ is represented by a sum of $p^m$
mutually {\em disjoint} discs $B_{-m}(b)=\{x:|x-b|_p \le p^{-m}\}$.

\section{Preliminary results in $p$-adic analysis}
\label{s2}

\subsection{$p$-Adic functions and distributions.}\label{s2.1}
We shall systematically use the notations and results from~\cite{Vl-V-Z}.
Let $\bN$, $\bZ$, $\bC$ be the sets of positive integers, integers,
complex numbers, respectively.

Any $p$-adic number $x\in\bQ_P$, $x\ne 0$, is represented in the {\em canonical form}
\begin{equation}
\label{8.1}
x=p^{\gamma}(x_0 + x_1p + x_2p^2 + \cdots )
\end{equation}
where $\gamma=\gamma(x)\in \bZ$, \ $x_k=0,1,\dots,p-1$, $x_0\ne 0$,
$k=0,1,\dots$. The series is convergent in the $p$-adic norm $|\cdot|_p$,
and one has $|x|_p=p^{-\gamma}$.
The {\it fractional part} of a number $x\in \bQ_p$ (given by (\ref{8.1}))
is defined as follows
\begin{equation}
\label{8.2**}
\{x\}_p=\left\{
\begin{array}{lll}
0,\quad \text{if} \quad \gamma(x)\ge 0 \quad  \text{or} \quad x=0,&&  \\
p^{\gamma}(x_0+x_1p+x_2p^2+\cdots+x_{|\gamma|-1}p^{|\gamma|-1}),
\quad \text{if} \quad \gamma(x)<0. && \\
\end{array}
\right.
\end{equation}

The space $\bQ_p^n=\bQ_p\times\cdots\times\bQ_p$ consists of points
$x=(x_1,\dots,x_n)$, where $x_j \in \bQ_p$, $j=1,2\dots,n$, \ $n\ge 2$.
The $p$-adic norm on $\bQ_p^n$ is
\begin{equation}
\label{8}
|x|_p=\max_{1 \le j \le n}|x_j|_p, \quad x\in \bQ_p^n,
\end{equation}
where $|x_j|_p$, $x_j\in \bQ_p$, is defined by (\ref{1}), $j=1,\dots,n$.
Denote by $B_{\gamma}^n(a)=\{x: |x-a|_p \le p^{\gamma}\}$ the ball
of radius $p^{\gamma}$ with the center at a point $a=(a_1,\dots,a_n)\in \bQ_p^n$
and by $S_{\gamma}^n(a)=\{x: |x-a|_p = p^{\gamma}\}
=B_{\gamma}^n(a)\setminus B_{\gamma-1}^n(a)$ its boundary (sphere),
$\gamma \in \bZ$. For $a=0$ we set $B_{\gamma}^n(0)=B_{\gamma}^n$ and
$S_{\gamma}^n(0)=S_{\gamma}^n$. For the case $n=1$ we will omit the
upper index $n$.
Here
\begin{equation}
\label{9}
B_{\gamma}^n(a)=B_{\gamma}(a_1)\times\cdots\times B_{\gamma}(a_n),
\end{equation}
where $B_{\gamma}(a_j)=\{x_j: |x_j-a_j|_p \le p^{\gamma}\}$ is a
disc of radius $p^{\gamma}$ with the center at a point $a_j\in \bQ_p$,
$j=1,2\dots,n$. Any two balls in $\bQ_p^n$ either are
disjoint or one contains the other. Every point of the ball is its
center.

\begin{Proposition}
\label{pr1-int}
{\rm (~\cite[I.3, Examples 1,2.]{Vl-V-Z})}
The disc $B_{\gamma}$ is represented by the sum of $p^{\gamma-\gamma'}$
{\em disjoint} discs $B_{\gamma'}(a)$, $\gamma'<\gamma$:
\begin{equation}
\label{79.0}
B_{\gamma}=B_{\gamma'}\cup\cup_{a}B_{\gamma'}(a),
\end{equation}
where $a=0$ and
$a=a_{-r}p^{-r}+a_{-r+1}p^{-r+1}+\cdots+a_{-\gamma'-1}p^{-\gamma'-1}$
are the centers of the discs $B_{\gamma'}(a)$, \
$0\le a_j\le p-1$, $j=-r,-r+1,\dots,-\gamma'-1$, $a_{-r}\ne 0$, \,
$r=\gamma,\gamma-1,\gamma-2,\dots,\gamma'+1$.
In particular, the disc $B_{0}$ is represented by the sum of $p$
{\em disjoint} discs
\begin{equation}
\label{79}
B_{0}=B_{-1}\cup\cup_{r=1}^{p-1}B_{-1}(r),
\end{equation}
where $B_{-1}(r)=\{x\in S_{0}: x_0=r\}=r+p\bZ_p$, $r=1,\dots,p-1$;
$B_{-1}=\{|x|_p\le p^{-1}\}=p\bZ_p$; and
$S_{0}=\{|x|_p=1\}=\cup_{r=1}^{p-1}B_{-1}(r)$. Here all the discs
are disjoint.
\end{Proposition}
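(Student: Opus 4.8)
The plan is to read off (\ref{79.0}) as the partition of the additive group $B_{\gamma}=\{x:|x|_p\le p^{\gamma}\}$ into cosets of its open subgroup $B_{\gamma'}=\{x:|x|_p\le p^{\gamma'}\}$ (recall $\gamma'<\gamma$), using the uniqueness of the canonical expansion (\ref{8.1}) to single out a distinguished system of coset representatives, namely the centers $a$ listed in the statement. The formula (\ref{79}) will then be the special case $\gamma=0$, $\gamma'=-1$.

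First I would rephrase membership in terms of digits: $x\in B_{\gamma}$ iff $p^{\gamma}x\in\bZ_p$, iff $x$ has a unique canonical expansion $x=\sum_{k\ge-\gamma}c_kp^{k}$ with $c_k\in\{0,1,\dots,p-1\}$ (leading zeros allowed). For such an $x$ I would define its \emph{principal part} $a(x)=\sum_{k=-\gamma}^{-\gamma'-1}c_kp^{k}$, the truncation retaining only the powers $p^{-\gamma},\dots,p^{-\gamma'-1}$ (so $a(x)=0$ if all these digits vanish). This block of exponents has $\gamma-\gamma'$ entries, and sorting the possible values of $a(x)$ by the index $-r$ of their nonzero digit of largest $p$-adic absolute value (with $r$ running from $\gamma$ down to $\gamma'+1$), together with the value $a=0$, reproduces precisely the list of centers in the statement; a digit count gives $p^{\gamma-\gamma'}$ of them, matching the asserted number of discs.

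Next I would identify the fibers of $x\mapsto a(x)$ with the discs on the right of (\ref{79.0}). If $a(x)=a$ then $x-a=\sum_{k\ge-\gamma'}c_kp^{k}$, so $|x-a|_p\le p^{\gamma'}$ and $x\in B_{\gamma'}(a)$. Conversely, if $x\in B_{\gamma'}(a)$ with $a$ in the list, then $x-a$ has a canonical expansion supported on exponents $\ge-\gamma'$, and juxtaposing it with the (non-overlapping) digit block of $a$ yields the canonical expansion of $x$; in particular $x\in B_{\gamma}$ and $a(x)=a$. Since every $x\in B_{\gamma}$ lies in exactly one such fiber, the discs $B_{\gamma'}(a)$ are pairwise disjoint and cover $B_{\gamma}$, which is (\ref{79.0}). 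Disjointness can alternatively be deduced from the quoted dichotomy for discs of equal radius, since distinct listed centers $a\ne a'$ satisfy $|a-a'|_p\ge p^{\gamma'+1}$ (their difference is a nonzero combination of $p^{-\gamma},\dots,p^{-\gamma'-1}$, whence $p^{\gamma}(a-a')$ is a nonzero integer of absolute value $<p^{\gamma-\gamma'}$). Finally, (\ref{79}) is the instance $\gamma=0,\ \gamma'=-1$: the principal part of $x\in\bZ_p$ is its leading digit $c_0$, the centers are $0$ and $r=1,\dots,p-1$, one has $B_{-1}(r)=r+p\bZ_p$ and $B_{-1}=p\bZ_p$, and the partition is the digitwise splitting $\bZ_p=\bigsqcup_{r=0}^{p-1}(r+p\bZ_p)$.

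The only point requiring care is the index bookkeeping between $\gamma$, $\gamma'$ and the exponents of the canonical expansion — in particular the verification that $a-a'$ cannot have $p$-adic absolute value $\le p^{\gamma'}$ once $a\ne a'$. Beyond that there is no genuine obstacle, as one expects since the statement is quoted from \cite{Vl-V-Z}.
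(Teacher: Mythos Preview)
Your argument is correct. Note that the paper does not actually prove this proposition: it is stated as a quotation from \cite[I.3, Examples~1,2.]{Vl-V-Z} and used as a preliminary fact, so there is no proof in the paper to compare against. Your coset/digit-truncation approach is the standard one and is essentially what one finds in the cited reference.
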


We call covering (\ref{79.0}), (\ref{79}) the {\it canonical covering} of the disc $B_{0}$.

A complex-valued function $f$ defined on $\bQ_p^n$ is called
{\it locally-constant} if for any $x\in \bQ_p^n$ there exists
an integer $l(x)\in \bZ$ such that
$$
f(x+y)=f(x), \quad y\in B_{l(x)}^n.
$$
Let ${\cE}(\bQ_p^n)$ and ${\cD}(\bQ_p^n)$ be the
linear spaces of locally-constant $\bC$-valued functions on $\bQ_p^n$
and locally-constant $\bC$-valued functions with compact supports
(so-called test functions), respectively; ${\cD}(\bQ_p)$,
${\cE}(\bQ_p)$~\cite[VI.1.,2.]{Vl-V-Z}.
If $\varphi \in {\cD}(\bQ_p^n)$, according to Lemma~1 from~\cite[VI.1.]{Vl-V-Z},
there exists $l\in \bZ$, such that
$$
\varphi(x+y)=\varphi(x), \quad y\in B_l^n, \quad x\in \bQ_p^n.
$$
The largest of the numbers $l=l(\varphi)$ is called the
{\it parameter of constancy} of the function $\varphi$.
Let us denote by ${\cD}^l_N(\bQ_p^n)$ the finite-dimensional space of
test functions from ${\cD}(\bQ_p^n)$ with supports in the ball $B_N^n$
and with parameters of constancy $\ge l$~\cite[VI.2.]{Vl-V-Z}.
We have ${\cD}^l_N(\bQ_p^n) \subset {\cD}^{l'}_{N'}(\bQ_p^n)$, \
$N\le N'$, \ $l\ge l'$.

\begin{Lemma}
\label{lem-four-1}
{\rm(~\cite[VI.5.,(5.2')]{Vl-V-Z})}
Any function $\varphi \in {\cD}^l_N(\bQ_p^n)$ can be represented as a {finite}
linear combination
$$
\varphi(x)=\sum_{\nu=1}^{p^{n(N-l)}}\varphi(a^{\nu})\Delta_{l}(x-a^{\nu}),
\quad x\in \bQ_p^n,
$$
where $\Delta_{l}(x-a^{\nu})=\Omega(p^{-l}|x-a^{\nu}|_p)](x)$ is the characteristic
function of the ball $B_{l}^n(a^{\nu})$, and the points
$a^{\nu}=(a_1^{\nu},\dots a_n^{\nu})\in B_N^n$ do not depend on $\varphi$
and are such that the bolls
$B_{l}^n(a^{\nu})$, $\nu=1,\dots,p^{n(N-l)}$, are disjoint and cover
the ball $B_{N}^n$.
\end{Lemma}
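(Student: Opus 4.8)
The plan is to build the required partition of $B_N^n$ first and then exploit the local constancy of $\varphi$ on each cell. First I would apply Proposition~\ref{pr1-int} with $\gamma=N$ and $\gamma'=l$ to each coordinate disc $B_N\subset\bQ_p$: this represents $B_N$ as a union of $p^{N-l}$ pairwise disjoint discs of radius $p^l$ whose centers $a_j^{(1)},\dots,a_j^{(p^{N-l})}$ are completely determined by $p,N,l$ and do not depend on $\varphi$. Forming the $n$-fold products of these discs and using the product structure (\ref{9}) of balls in $\bQ_p^n$, I obtain a family of $p^{n(N-l)}$ pairwise disjoint balls $B_l^n(a^\nu)$, $\nu=1,\dots,p^{n(N-l)}$, with centers $a^\nu=(a_1^\nu,\dots,a_n^\nu)$ ranging over all $n$-tuples built from the coordinate centers, and such that $\bigcup_\nu B_l^n(a^\nu)=B_N^n$. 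These centers are fixed independently of $\varphi$, as the statement demands.

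Next I would use that $\varphi\in\cD^l_N(\bQ_p^n)$ has parameter of constancy $\ge l$, i.e.\ $\varphi(x+y)=\varphi(x)$ for every $y\in B_l^n$ and every $x$. Fix $\nu$ and take $x\in B_l^n(a^\nu)$; since $B_l^n(a^\nu)=a^\nu+B_l^n$, we have $x=a^\nu+y$ with $y\in B_l^n$, hence $\varphi(x)=\varphi(a^\nu)$. Thus $\varphi$ is constant on each ball $B_l^n(a^\nu)$ with value $\varphi(a^\nu)$. Since $\operatorname{supp}\varphi\subset B_N^n$ and the balls $B_l^n(a^\nu)$ partition $B_N^n$, the function $\varphi$ vanishes outside $\bigcup_\nu B_l^n(a^\nu)$, and adding up its restrictions to the cells gives
$$
\varphi(x)=\sum_{\nu=1}^{p^{n(N-l)}}\varphi(a^\nu)\,\mathbf 1_{B_l^n(a^\nu)}(x)
=\sum_{\nu=1}^{p^{n(N-l)}}\varphi(a^\nu)\Delta_l(x-a^\nu),\qquad x\in\bQ_p^n,
$$
where in the last equality I used that $\Delta_l(\,\cdot\,-a^\nu)=\Omega(p^{-l}|\,\cdot\,-a^\nu|_p)$ is exactly the characteristic function of $B_l^n(a^\nu)$. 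The sum is finite and its coefficients are the sampled values $\varphi(a^\nu)$, which is the claimed representation.

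I do not expect a genuine obstacle here; the only points needing a little care are bookkeeping. One is checking that the coordinatewise product of the canonical coverings really is a disjoint covering of $B_N^n$ of cardinality exactly $p^{n(N-l)}$, which is immediate from (\ref{9}) and Proposition~\ref{pr1-int}. The other is observing that if $\operatorname{supp}\varphi$ is strictly contained in $B_N^n$, then some coefficients $\varphi(a^\nu)$ are simply $0$, so that the identity still holds verbatim.
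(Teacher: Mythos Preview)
The paper does not supply its own proof of this lemma; it merely quotes the statement from \cite[VI.5.,(5.2')]{Vl-V-Z}. So there is nothing to compare against directly.

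That said, your argument is correct and is essentially the standard one behind the cited result: apply the one-dimensional canonical covering of Proposition~\ref{pr1-int} with $\gamma=N$, $\gamma'=l$ in each coordinate, use the product formula~(\ref{9}) to obtain the $p^{n(N-l)}$ disjoint balls $B_l^n(a^\nu)$ covering $B_N^n$, and then read off constancy of $\varphi$ on each cell from the parameter-of-constancy condition. The bookkeeping points you flag (that the product of coverings has the right cardinality and is disjoint, and that coefficients over cells outside $\operatorname{supp}\varphi$ simply vanish) are handled correctly.
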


Denote by ${\cD}'(\bQ_p^n)$ the set of all linear functionals on
${\cD}(\bQ_p^n)$~\cite[VI.3.]{Vl-V-Z}.

The Fourier transform of $\varphi\in {\cD}(\bQ_p^n)$ is defined by the
formula
$$
F[\varphi](\xi)=\int_{\bQ_p^n}\chi_p(\xi\cdot x)\varphi(x)\,d^nx,
\quad \xi \in \bQ_p^n,
$$
where $d^n x=dx_1\cdots dx_n$ is the Haar measure such that
$\int_{|\xi|_p\le 1}\,d^nx=1$;
$\chi_p(\xi\cdot x)=\chi_p(\xi_1x_1)\cdots\chi_p(\xi_nx_n)$;
$\xi\cdot x$ is the scalar product of vectors and
$\chi_p(\xi_jx_j)=e^{2\pi i\{\xi_jx_j\}_p}$
are additive characters, $\{x\}_p$ is the {\it fractional part} (\ref{8.2**})
of a number $x\in \bQ_p$.

\begin{Lemma}
\label{lem-four-2}
{\rm(~\cite[Lemma~A.]{Taib1},~\cite[III,(3.2)]{Taib3},~\cite[VII.2.]{Vl-V-Z})}
The Fourier transform is a linear isomorphism ${\cD}(\bQ_p^n)$ into
${\cD}(\bQ_p^n)$. Moreover,
\begin{equation}
\label{12}
\varphi(x) \in {\cD}^l_N(\bQ_p^n) \quad \text{iff} \quad
F\big[\varphi(x)\big](\xi) \in {\cD}^{-N}_{-l}(\bQ_p^n).
\end{equation}
\end{Lemma}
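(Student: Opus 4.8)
The plan is to reduce the whole statement to one explicit computation --- the Fourier transform of the characteristic function of a ball --- and then propagate it through the finite decomposition supplied by Lemma~\ref{lem-four-1}. Linearity of $F$ is immediate from linearity of the integral, so the content is the equivalence (\ref{12}) together with surjectivity.

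First I would compute, for a ball $B_l^n(a)$ with center $a\in B_N^n$,
$$
F\big[\Delta_{l}(\cdot-a)\big](\xi)=\int_{B_l^n(a)}\chi_p(\xi\cdot x)\,d^nx
=\chi_p(\xi\cdot a)\prod_{j=1}^{n}\int_{|y_j|_p\le p^{l}}\chi_p(\xi_j y_j)\,dy_j .
$$
By orthogonality of additive characters (see~\cite{Vl-V-Z}) one has $\int_{|y|_p\le p^{l}}\chi_p(\xi y)\,dy=p^{l}\,\Omega\big(p^{l}|\xi|_p\big)$: the integrand is identically $1$ when $|\xi|_p\le p^{-l}$ and integrates to $0$ otherwise. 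Since $\Omega\big(p^{l}|\xi|_p\big)=\Delta_{-l}(\xi)$, this gives the key identity
$$
F\big[\Delta_{l}(\cdot-a)\big](\xi)=p^{nl}\,\chi_p(\xi\cdot a)\,\Delta_{-l}(\xi).
$$

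Next, given $\varphi\in{\cD}^{l}_{N}(\bQ_p^n)$, Lemma~\ref{lem-four-1} writes $\varphi=\sum_{\nu}\varphi(a^{\nu})\,\Delta_{l}(\cdot-a^{\nu})$ as a finite sum with fixed $a^{\nu}\in B_N^n$, so by linearity
$$
F[\varphi](\xi)=p^{nl}\,\Delta_{-l}(\xi)\sum_{\nu}\varphi(a^{\nu})\,\chi_p(\xi\cdot a^{\nu}).
$$
The prefactor $\Delta_{-l}$ shows the support of $F[\varphi]$ lies in $B_{-l}^n$. For the parameter of constancy, if $|\eta|_p\le p^{-N}$ and $|a^{\nu}|_p\le p^{N}$ then $|\eta\cdot a^{\nu}|_p\le 1$, hence $\{\eta\cdot a^{\nu}\}_p=0$ and $\chi_p\big((\xi+\eta)\cdot a^{\nu}\big)=\chi_p(\xi\cdot a^{\nu})$; thus $F[\varphi]$ is constant on cosets of $B_{-N}^n$. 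Therefore $F[\varphi]\in{\cD}^{-N}_{-l}(\bQ_p^n)$, which is the forward direction of (\ref{12}), and in particular $F$ maps ${\cD}(\bQ_p^n)$ into ${\cD}(\bQ_p^n)$.

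Finally, for the reverse implication and bijectivity I would invoke the Fourier inversion formula $F^{-1}[\psi](x)=\int_{\bQ_p^n}\chi_p(-\xi\cdot x)\psi(\xi)\,d^n\xi$ on ${\cD}(\bQ_p^n)$ (see~\cite{Vl-V-Z}). Since $F^{-1}$ has exactly the same structure as $F$ up to the sign of the character --- which affects neither supports nor constancy parameters --- the computation above applied to $F^{-1}$ yields: $\psi\in{\cD}^{k}_{M}(\bQ_p^n)$ implies $F^{-1}[\psi]\in{\cD}^{-M}_{-k}(\bQ_p^n)$. Applying this with $\psi=F[\varphi]\in{\cD}^{-N}_{-l}(\bQ_p^n)$ recovers $\varphi=F^{-1}\big[F[\varphi]\big]\in{\cD}^{l}_{N}(\bQ_p^n)$, giving the ``if'' part of (\ref{12}); and the mutually inverse pair $F$, $F^{-1}$ delivers the asserted isomorphism. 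The only delicate point is the index bookkeeping --- the Fourier transform swaps the support radius $N$ with $-l$ and the constancy parameter $l$ with $-N$ --- so I would keep careful track of signs; the character-orthogonality integral is the one genuine computation and is routine.
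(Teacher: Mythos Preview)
The paper does not give its own proof of Lemma~\ref{lem-four-2}; it is quoted from the literature (Taibleson~\cite{Taib1,Taib3}, Vladimirov--Volovich--Zelenov~\cite{Vl-V-Z}) without argument. Your proposal is precisely the standard proof found in those references: compute $F[\Delta_l(\cdot-a)]$ via character orthogonality, push this through the finite decomposition of Lemma~\ref{lem-four-1}, and close with Fourier inversion. It is correct.

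One small point you glossed over: when checking that $F[\varphi]$ has constancy parameter $\ge -N$, you verified that each factor $\chi_p(\xi\cdot a^\nu)$ is constant on $B_{-N}^n$-cosets but did not explicitly say the same for the prefactor $\Delta_{-l}(\xi)$. This is automatic --- the space ${\cD}^{l}_{N}$ is nontrivial only when $l\le N$, so $-N\le -l$ and $\Delta_{-l}$ is already constant on the finer $B_{-N}^n$-cosets --- but it is worth one sentence to make the argument airtight.
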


The Fourier transform $F[f]$ of a distribution
$f\in {\cD}'(\bQ_p^n)$ is defined by the relation
$\langle F[f],\varphi\rangle=\langle f,F[\varphi]\rangle$,
for all $\varphi\in {\cD}(\bQ_p^n)$.

Let $A$ be a matrix and $b\in \bQ_p^n$. Then for a distribution
$f\in{\cD}'(\bQ_p^n)$ the following relation holds~\cite[VII,(3.3)]{Vl-V-Z}:
\begin{equation}
\label{14}
F[f(Ax+b)](\xi)
=|\det{A}|_p^{-1}\chi_p\big(-A^{-1}b\cdot \xi\big)F[f(x)]\big(A^{-1}\xi\big),
\end{equation}
where $\det{A} \ne 0$.
According to~\cite[IV,(3.1)]{Vl-V-Z},
\begin{equation}
\label{14.1}
F[\Omega(p^{-k}|\cdot|_p)](x)=p^{nk}\Omega(p^k|x|_p), \quad k\in \bZ, \qquad x \in \bQ_p^n.
\end{equation}
In particular, $F[\Omega(|\xi|_p)](x)=\Omega(|x|_p)$. Here
$\Omega(t)$ is the characteristic function of the segment $[0,1]\subset\bR$,

\subsection{$p$-Adic Lizorkin spaces}\label{s2.2}
According to~\cite{Al-Kh-Sh3},~\cite{Al-Kh-Sh4}, the $p$-adic
{\it Lizorkin space of test functions\/} is defined as
$$
\Phi(\bQ_p^n)=\{\phi: \phi=F[\psi], \, \psi\in \Psi(\bQ_p^n)\},
$$
where $\Psi(\bQ_p^n)
=\{\psi(\xi)\in \cD(\bQ_p^n): \psi(0)=0\}$.
The space $\Phi(\bQ_p^n)$ can be equipped with the
topology of the space $\cD(\bQ_p^n)$ which makes it a
complete space. In view of Lemma~\ref{lem-four-2}, the Lizorkin space
$\Phi(\bQ_p^n)$ admits the following characterization.

\begin{Lemma}
\label{lem1}
{\rm (~\cite{Al-Kh-Sh3},~\cite{Al-Kh-Sh4})}
{\rm (a)} $\phi\in \Phi(\bQ_p^n)$ iff $\phi\in \cD(\bQ_p^n)$ and
\begin{equation}
\label{54}
\int_{\bQ_p^n}\phi(x)\,d^nx=0.
\end{equation}

{\rm (b)} $\phi \in {\cD}^l_N(\bQ_p^n)\cap\Phi(\bQ_p^n)$, i.e.,
$\int_{B^n_{N}}\phi(x)\,d^nx=0$,
iff \ $\psi=F^{-1}[\phi]\in {\cD}^{-N}_{-l}(\bQ_p^n)\cap\Psi(\bQ_p^n)$,
i.e., $\psi(\xi)=0$, $\xi \in B^n_{-N}$.
\end{Lemma}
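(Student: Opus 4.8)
The plan is to read off both statements directly from Lemma~\ref{lem-four-2}; the only extra ingredient is the elementary identity $F^{-1}[\phi](0)=\int_{\bQ_p^n}\phi(x)\,d^nx$, which is immediate from the definition of the Fourier transform and $\chi_p(0)=1$.

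\emph{Part (a).} First I would unwind the definition of $\Phi(\bQ_p^n)$. If $\phi\in\Phi(\bQ_p^n)$, then $\phi=F[\psi]$ with $\psi\in\Psi(\bQ_p^n)$; by Lemma~\ref{lem-four-2} we get $\phi\in\cD(\bQ_p^n)$, and since $F$ is a bijection of $\cD(\bQ_p^n)$ onto itself, $\psi=F^{-1}[\phi]$. Hence $\int_{\bQ_p^n}\phi(x)\,d^nx=F^{-1}[\phi](0)=\psi(0)=0$, because $\psi\in\Psi(\bQ_p^n)$. Conversely, if $\phi\in\cD(\bQ_p^n)$ satisfies (\ref{54}), set $\psi:=F^{-1}[\phi]\in\cD(\bQ_p^n)$; then $\psi(0)=F^{-1}[\phi](0)=\int_{\bQ_p^n}\phi(x)\,d^nx=0$, so $\psi\in\Psi(\bQ_p^n)$ and $\phi=F[\psi]\in\Phi(\bQ_p^n)$.

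\emph{Part (b).} Here I would repeat the argument of (a) while tracking the support and the parameter of constancy. Put $\psi:=F^{-1}[\phi]$, so $\phi=F[\psi]$; applying equivalence (\ref{12}) of Lemma~\ref{lem-four-2} to the test function $\psi$ gives that $\psi\in\cD^{-N}_{-l}(\bQ_p^n)$ if and only if $\phi\in\cD^{l}_{N}(\bQ_p^n)$. When this holds $\phi$ is supported in $B_N^n$, so (\ref{54}) reduces to $\int_{B_N^n}\phi(x)\,d^nx=0$, which by part (a) is equivalent to $\phi\in\Phi(\bQ_p^n)$; this yields the first reformulation. For the second, I would note that $\psi\in\cD^{-N}_{-l}(\bQ_p^n)$ has parameter of constancy $\ge -N$, hence is constant on the ball $B_{-N}^n$ centered at the origin, so $\psi\in\Psi(\bQ_p^n)$ (i.e.\ $\psi(0)=0$) holds if and only if $\psi$ vanishes identically on $B_{-N}^n$. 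Chaining these equivalences gives: $\phi\in\cD^l_N(\bQ_p^n)\cap\Phi(\bQ_p^n)$ iff $\psi\in\cD^{-N}_{-l}(\bQ_p^n)\cap\Psi(\bQ_p^n)$ iff $\psi(\xi)=0$ for all $\xi\in B_{-N}^n$.

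I do not expect a genuine obstacle: the lemma is essentially a dictionary translation through the Fourier transform. The one point requiring care is the index bookkeeping in part (b), namely checking that under $F$ (equivalently $F^{-1}$, up to the harmless reflection $\xi\mapsto-\xi$, which preserves balls centered at $0$ and bounds on the parameter of constancy) the pair (parameter of constancy, support radius) transforms precisely as $(l,N)\leftrightarrow(-N,-l)$, as recorded in (\ref{12}).
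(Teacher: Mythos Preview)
Your proposal is correct. The paper does not actually supply a proof of this lemma: it is quoted from \cite{Al-Kh-Sh3},~\cite{Al-Kh-Sh4}, and the only hint given is the phrase ``In view of Lemma~\ref{lem-four-2}, the Lizorkin space $\Phi(\bQ_p^n)$ admits the following characterization.'' Your argument is precisely the spelling-out of that hint---reading the conditions $\psi(0)=0$ and $\phi\in\cD^l_N$ through the Fourier isomorphism of Lemma~\ref{lem-four-2}---so you are taking the same approach the authors indicate.
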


Suppose that $\Phi'(\bQ_p^n)$ and $\Psi'(\bQ_p^n)$ denote the topological
dual of the spaces $\Phi(\bQ_p^n)$ and $\Psi(\bQ_p^n)$, respectively.
We call $\Phi'(\bQ_p^n)$ the space of  $p$-adic {\em Lizorkin
distributions}.
The space $\Phi'(\bQ_p^n)$ can be obtained from $\cD'(\bQ_p^n)$ by
``sifting out'' constants. Thus two distributions in $\cD'(\bQ_p^n)$
differing by a constant are indistinguishable as elements of $\Phi'(\bQ_p^n)$.

We define the Fourier transform of
$f\in \Phi'(\bQ_p^n)$ and $g\in \Psi'(\bQ_p^n)$
respectively by formulas $\langle F[f],\psi\rangle=\langle f,F[\psi]\rangle$,
for all $\psi\in \Psi(\bQ_p^n)$, and
$\langle F[g],\phi\rangle=\langle g,F[\phi]\rangle$, for all
$\phi\in \Phi(\bQ_p^n)$.
It is clear that $F[\Phi'(\bQ_p^n)]=\Psi'(\bQ_p^n)$
and $F[\Psi'(\bQ_p^n)]=\Phi'(\bQ_p^n)$~\cite{Al-Kh-Sh3}.

Recall that in the real setting the Lizorkin spaces were introduced in the
excellent papers by P.~I.~Lizorkin~\cite{Liz1},~\cite{Liz3}.

\section{Non-Haar $p$-adic wavelet bases}
\label{s3}

\subsection{One non-Haar wavelet basis in ${\cL}^2(\bQ_p)$.}\label{s3.1}
It is well known that
$\bQ_p=B_{0}(0)\cup\cup_{\gamma=1}^{\infty}S_{\gamma}$, where
$S_{\gamma}=\{x\in \bQ_p: |x|_p = p^{\gamma}\}$. Due to
(\ref{8.1}), $x\in S_{\gamma}$, $\gamma\ge 1$, if and only if
$x=x_{-\gamma}p^{-\gamma}+x_{-\gamma+1}p^{-\gamma+1}+\cdots+x_{-1}p^{-1}+\xi$,
where $x_{-\gamma}\ne 0$, $\xi \in B_{0}(0)$. Since
$x_{-\gamma}p^{-\gamma}+x_{-\gamma+1}p^{-\gamma+1}
+\cdots+x_{-1}p^{-1}\in I_p$, we have a ``natural'' decomposition of
$\bQ_p$ into a union of mutually  disjoint discs:
$$
\bQ_p=\bigcup\limits_{a\in I_p}B_{0}(a).
$$
Therefor,
$$
I_p=\{a=p^{-\gamma}\big(a_{0}+a_{1}p+\cdots+a_{\gamma-1}p^{\gamma-1}\big):
\qquad\qquad\qquad\qquad
$$
\begin{equation}
\label{62.0**}
\gamma\in \bN; \, a_j=0,1,\dots,p-1; \, j=0,1,\dots,\gamma-1\}
\end{equation}
is a {\em ``natural'' set of shifts} for $\bQ_p$, which will
be used in the sequel.
This set $I_p$ can be identified with the factor group $\bQ_p/\bZ_p$.

Let
$$
J_{p;m}=\{s=p^{-m}\big(s_{0}+s_{1}p+\cdots+s_{m-1}p^{m-1}\big):
\qquad\qquad\qquad\qquad
$$
\begin{equation}
\label{62.0*}
\qquad
s_j=0,1,\dots,p-1; \, j=0,1,\dots,m-1; s_0\ne 0\},
\end{equation}
where $m\ge 1$ is a {\it fixed} positive integer.

Let us introduce the set of $(p-1)p^{m-1}$ functions
\begin{equation}
\label{62.1-11}
\theta_{s}^{(m)}(x)=\chi_p(sx)\Omega\big(|x|_p\big), \quad s\in J_{p;m},
\quad x\in \bQ_p,
\end{equation}
and the family of functions generated by their dilatations and shifts:
\begin{equation}
\label{62.1}
\theta_{s;\,j a}^{(m)}(x)=p^{-j/2}\chi_p\big(s(p^{j}x-a)\big)
\Omega\big(|p^{j}x-a|_p\big), \quad x\in \bQ_p,
\end{equation}
where $s\in J_{p;m}$, $j\in \bZ$, $a\in I_p$, $\Omega(t)$ is the
characteristic function of the segment $[0,1]\subset\bR$.

\begin{Theorem}
\label{th2}
The functions {\rm(\ref{62.1})} form an orthonormal {\em non-Haar $p$-adic wavelet
basis} in ${\cL}^2(\bQ_p)$.
\end{Theorem}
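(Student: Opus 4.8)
The plan is to verify the three defining properties of an orthonormal wavelet basis: (i) each $\theta_{s;ja}^{(m)}$ has unit norm, (ii) distinct such functions are mutually orthogonal in $\cL^2(\bQ_p)$, and (iii) the collection is complete. For normalization, note that $|\theta_{s;ja}^{(m)}(x)|=p^{-j/2}\Omega(|p^jx-a|_p)$, and the substitution $y=p^jx-a$ (with $|dx|_p=p^j\,dy$ on the relevant set, using the invariance of Haar measure under translation and the scaling rule) gives $\|\theta_{s;ja}^{(m)}\|_2^2=p^{-j}\int_{\bQ_p}\Omega(|y|_p)\,dy=p^{-j}\cdot p^{j}\cdot 1=1$; one must only be careful that $\Omega(|p^jx-a|_p)=1$ describes the ball $B_{-j}(p^{-j}a)$ of measure $p^{-j}$.

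For orthogonality I would split into cases according to whether the scales $j$ agree. If $j=j'$ but $a\ne a'$ (with both in $I_p$), the supports $\{|p^jx-a|_p\le 1\}$ and $\{|p^jx-a'|_p\le 1\}$ are disjoint balls, since $a\ne a'$ in $I_p=\bQ_p/\bZ_p$ forces $|a-a'|_p\ge p$; hence the inner product is $0$. If $j=j'$, $a=a'$, but $s\ne s'$ in $J_{p;m}$, then after the change of variables the inner product becomes $\int_{B_0}\chi_p((s-s')x)\,dx$, and since $|s-s'|_p\le p^m$ but $s-s'\notin\bZ_p$ (the digits of $s,s'$ in positions $0,\dots,m-1$ differ, so $|s-s'|_p\ge p$), the standard character-integral formula $\int_{B_0}\chi_p(\eta x)\,dx=0$ whenever $|\eta|_p>1$ applies. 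The remaining case $j<j'$ (say) is handled by observing that on the support of $\theta_{s';j'a'}^{(m)}$ the function $p^jx-a$ is constant modulo $\bZ_p$, so $\chi_p(s(p^jx-a))$ is constant there; factoring it out leaves $\int \Omega(|p^{j'}x-a'|_p)\chi_p(s'(p^{j'}x-a'))\,dx$ up to constants, which vanishes by the same character-integral argument applied on the smaller ball since $s'\notin\bZ_p$. One also needs that the support of the finer wavelet is either contained in or disjoint from the coarser one — true because balls in $\bQ_p$ are nested or disjoint — so that the ``constant'' claim is legitimate.

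Completeness is the main obstacle. The cleanest route is to exhibit, for each fixed scale $j$ and each fixed shift $a$, that the span of $\{\Omega(|p^jx-a'|_p):a'\equiv a\} $ together with $\{\theta_{s;ja}^{(m)}:s\in J_{p;m}\}$ recovers all characteristic functions of the $p^m$ sub-balls of radius $p^{-j-m}$ inside $B_{-j}(p^{-j}a)$; this is a finite-dimensional linear-algebra statement about the $(p-1)p^{m-1}$ characters $\chi_p(s\cdot)$, $s\in J_{p;m}$, on the group $p^{-j}\bZ_p/p^{-j-m}\bZ_p\cong (\bZ/p^m\bZ)$, whose nontrivial characters are exactly indexed by $J_{p;m}$ — so together with the constant function they form a basis of functions on that finite group. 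Passing from this local statement to global completeness: the indicator functions of all balls of all radii are dense in $\cL^2(\bQ_p)$ by Lemma~\ref{lem-four-1} and density of $\cD(\bQ_p)$, and the telescoping identity $\Omega(|p^jx-a|_p)=\sum_{s}\,c_s\,\theta^{(m)}_{s;ja}(x)+(\text{coarser indicators})$ lets one express any such indicator as a finite combination of our wavelets. Finally, I would invoke Remark~\ref{rem1} (which identifies the $m=1$ case with Kozyrev's basis) as a sanity check, and note that the ``non-Haar'' qualifier — that this basis does not arise from the $p$-adic Haar MRA of~\cite{S-Sk-1} when $m\ge 2$ — follows because the number of generating functions $(p-1)p^{m-1}$ exceeds the $p-1$ forced by any MRA with dilation $p$, so no appeal to MRA machinery is available and the direct verification above is in fact necessary.
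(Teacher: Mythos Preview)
Your orthogonality argument is sound in outline, modulo a slip in the cross-scale case: with $j<j'$ the support of $\theta^{(m)}_{s';j'a'}$ is the \emph{larger} ball (of Haar measure $p^{j'}$), so it is the coarser phase $\chi_p\big(s'(p^{j'}x-a')\big)$ that is constant on the smaller support of $\theta^{(m)}_{s;ja}$, not the other way round. After swapping the roles the argument goes through and is equivalent to the paper's direct computation via the identity $\Omega(|p^jx-a|_p)\,\Omega(|p^{j'}x-a'|_p)=\Omega(|p^jx-a|_p)\,\Omega(|p^{j'-j}a-a'|_p)$.

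The completeness argument, however, has a genuine gap. You assert that the nontrivial characters of $\bZ/p^m\bZ$ are exactly indexed by $J_{p;m}$, so that together with the constant function they form a basis of the $p^m$-dimensional space of functions on that group. This is false for $m\ge 2$: the set $J_{p;m}$ contains only the $(p-1)p^{m-1}$ elements $s=p^{-m}(s_0+\cdots+s_{m-1}p^{m-1})$ with $s_0\ne 0$, i.e.\ the characters of \emph{exact} order $p^m$, whereas $\bZ/p^m\bZ$ has $p^m-1$ nontrivial characters in all. Your local span therefore has dimension $(p-1)p^{m-1}+1<p^m$, and the telescope cannot recover all $p^m$ sub-ball indicators from a single coarser indicator together with the scale-$j$ wavelets alone; the missing characters are precisely those supplied by wavelets at the intermediate scales. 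The paper avoids this by exploiting dilation/shift invariance of the system and checking Parseval's identity for the single function $\Omega(|x|_p)$: one finds $(\Omega(|\cdot|_p),\theta^{(m)}_{s;ja})=p^{-j/2}$ for $a=0$, $j\ge m$, and $0$ otherwise, so that $\sum_{j\ge m}(p-1)p^{m-1}p^{-j}=1$. A correct structural alternative in the spirit of your attempt is on the Fourier side: $F[\theta^{(m)}_{s;ja}]$ is supported on the sphere $S_{m-j}$, and as $j$ runs over $\bZ$ these spheres partition $\bQ_p\setminus\{0\}$.
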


\begin{proof}
1. Consider the scalar product
$$
\big(\theta_{s';\,j'  a'}^{(m)}(x),\theta_{s;\,j a}^{(m)}(x)\big)
=p^{-(j+j')/2}\int_{\bQ_p}\chi_p\big(s'(p^{j'}x-a')-s(p^{\gamma}x-a)\big)
$$
\begin{equation}
\label{62.3}
\quad
\times
\Omega\big(|p^{j}x-a|_p\big)\Omega\big(|p^{j'}x-a'|_p\big)\,dx.
\end{equation}
If $j \le j'$, according to formula~\cite[VII.1]{Vl-V-Z},~\cite{Koz0}
\begin{equation}
\label{62.4-1}
\Omega\big(|p^{j}x-a|_p\big)\Omega\big(|p^{j'}x-a'|_p\big)
=\Omega\big(|p^{j}x-a|_p\big)\Omega\big(|p^{j'-j}a-a'|_p\big),
\end{equation}
(\ref{62.3}) can be rewritten as
$$
\big(\theta_{s';\, j' a'}^{(m)}(x),\theta_{s;\,j a}^{(m)}(x)\big)
=p^{-(j+j')/2}\Omega\big(|p^{j'-j}a-a'|_p\big)
\qquad\qquad\qquad\qquad
$$
\begin{equation}
\label{62.4}
\times
\int_{\bQ_p}
\chi_p\big(s'(p^{j'}x-a')-s(p^{j}x-a)\big)\Omega\big(|p^{j}x-a|_p\big)\,dx.
\end{equation}

Let $j < j'$. Making the change of variables $\xi=p^{j}x-a$
and taking into account (\ref{14.1}), we obtain from (\ref{62.4})
$$
\big(\theta_{s';\, j' a'}^{(m)}(x),\theta_{s;\,j a}^{(m)}(x)\big)
=p^{-(j+j')/2}\chi_p\big(s'(p^{j'-j}a-a')\big)
\qquad\qquad\qquad\qquad
$$
$$
\qquad\qquad
\times
\Omega\big(|p^{j'-j}a-a'|_p\big)
\int_{\bQ_p}\chi_p\big((p^{j'-j}s'-s)\xi\big)
\Omega\big(|\xi|_p\big)\,d\xi
$$
$$
=p^{-(j+j')/2}\chi_p\big(s'(p^{j'-j}a-a')\big)
\quad
$$
\begin{equation}
\label{62.5}
\times
\Omega\big(|p^{j'-j}a-a'|_p\big)\Omega\big(|p^{j'-j}s'-s|_p\big).
\end{equation}
Since
$$
\begin{array}{rclrcl}
\displaystyle
p^{j'-j}s'&=&p^{j'-j-m}
\big(s_{0}'+s_{1}'p+\cdots+s_{j-1}'p^{m-1}\big), \smallskip \\
s&=&p^{-m}\big(s_{0}+s_{1}p+\cdots+s_{j-1}p^{m-1}\big), \\
\end{array}
$$
where $s_{0}',s_{0}\ne 0$, $j'-j\le 1$, it is clear that the
fractional part $\{p^{j'-j}s'-s\}_p\ne 0$. Thus
$\Omega\big(|p^{j'-j}s'-s|_p\big)=0$ and
$\big(\theta_{s';\, j' a'}^{(m)}(x),\theta_{s;\, j a}^{(m)}(x)\big)=0$.

Consequently, the scalar product $\big(\theta_{s';\, j' a'}^{(m)}(x),
\theta_{s;\,j a}^{(m)}(x)\big)=0$ can be nonzero
only if $j=j'$. In this case (\ref{62.5}) implies
$$
\big(\theta_{s';\, j a'}^{(m)}(x),\theta_{s;\,j a}^{(m)}(x)\big)
\qquad\qquad\qquad\qquad\qquad\qquad\qquad\qquad
$$
\begin{equation}
\label{62.6}
=p^{-j}
\chi_p\big(s'(a-a')\big)\Omega\big(|a-a'|_p\big)\Omega\big(|s'-s|_p\big),
\end{equation}
where $\Omega\big(|a-a'|_p\big)=\delta_{a'a}$,
$\Omega\big(|s'-s|_p\big)=\delta_{s's}$, and $\delta_{s's}$,
$\delta_{a'a}$ are the Kronecker symbols.

Since $\int_{\bQ_p}\Omega\big(|p^{j}x-a|_p\big)\,dx=p^{j}$, formulas
(\ref{62.5}), (\ref{62.6}) imply that
\begin{equation}
\label{62.7}
\big(\theta_{s';\, j' a'}^{(m)}(x),\theta_{s;\,j a}^{(m)}(x)\big)
=\delta_{s's}\delta_{j'j}\delta_{a'a}.
\end{equation}
Thus the system of functions (\ref{62.1}) is orthonormal.

To prove the completeness of the system of functions (\ref{62.1}),
we repeat the corresponding proof~\cite{Koz0} almost word for word.
Recall that the system of the characteristic functions of the discs
$B_{k}(0)$ is complete in ${\cL}^2(\bQ_p)$. Consequently, taking into
account that the system of functions
$\{\theta_{s;\,j a}^{(m)}(x): s\in J_{p;m}; j\in \bZ, a\in I_p\}$
is invariant under dilatations and shifts, in order to prove that
it is a complete system, it is sufficient to verify the Parseval identity
for the characteristic function $\Omega(|x|_p)$.

If $0\le j$, according to (\ref{62.4-1}), (\ref{14.1}),
$$
\big(\Omega(|x|_p),\theta_{s;\,j a}^{(m)}(x)\big)
=p^{-j/2}\Omega\big(|-a|_p\big)
\int_{\bQ_p}\chi_p\big(s(p^{j}x-a)\big)\Omega\big(|x|_p\big)\,dx
$$
$$
\qquad\qquad
=p^{-j/2}\chi_p\big(-sa)\big)
\Omega\big(|sp^{j}|_p\big)\Omega\big(|-a|_p\big)
$$
\begin{equation}
\label{62.7-11}
\qquad\quad
=\left\{
\begin{array}{lll}
0, && a\ne 0, \\
0, && a=0, \quad j \le m-1, \\
p^{-j/2}, && a=0, \quad j \ge m. \\
\end{array}
\right.
\end{equation}
If $0> j$, according to (\ref{62.4-1}), (\ref{14.1}),
$$
\big(\Omega(|x|_p),\theta_{s;\,j a}^{(m)}(x)\big)
=p^{-j/2}\Omega\big(|p^{-j}a|_p\big)
\int_{\bQ_p}\chi_p\big(s(p^{j}x-a)\big)
\Omega\big(|p^{j}x-a|_p\big)\,dx
$$
\begin{equation}
\label{62.7-12}
=p^{j/2}\Omega\big(|p^{-j}a|_p\big)
\int_{\bQ_p}\chi_p\big(s\xi\big)\Omega\big(|\xi|_p\big)\,d\xi
=p^{j/2}\Omega\big(|p^{-j}a|_p\big)\Omega\big(|s|_p\big)=0.
\end{equation}
Thus,
$$
\sum_{s\in J_{p;m};j\in \bZ,a\in I_p}
\big|\big(\Omega(|x|_p),\theta_{s;\,j a}^{(m)}(x)\big)\big|^2
=\sum_{j=m}^{\infty}\sum_{s\in J_{p;m}}p^{-j}
\qquad\qquad\qquad\quad
$$
$$
\qquad\qquad
=p^{m-1}(p-1)\frac{p^{-m}}{1-p^{-1}}=1
=\big|\big(\Omega(|x|_p),\Omega(|x|_p)\big|^2.
$$

Thus the system of functions (\ref{62.1}) is an orthonormal basis in
${\cL}^2(\bQ_p)$.

2. Since elements of basis (\ref{62.1}) can be obtained
by dilatations and shifts of the set of $(p-1)p^{m-1}$ functions
(\ref{62.1-11}), it is the $p$-adic wavelet basis.

3. According to~\cite{S-Sk-1}, the Haar wavelet functions are constructed by
the Haar type {\em refinement equation} (\ref{62.0-3**}). In particular, it is easy
to see that Kozyrev's wavelet functions (\ref{62.0-1-0}) can be expressed in terms
of the {\em refinable function} (\ref{62.0-3**-1}) as
\begin{equation}
\label{62.0-1-2}
\theta_{k}(x)=\chi_p(p^{-1}kx)\Omega\big(|x|_p\big)
=\sum_{r=0}^{p-1}h_{k r}\phi\Big(\frac{1}{p}x-\frac{r}{p}\Big),
\quad x\in \bQ_p,
\end{equation}
where $h_{k r}=e^{2\pi i\{\frac{kr}{p}\}_p}$, $r=0,1,\dots,p-1$, \
$k=1,2,\dots,p-1$.
In this case the wavelet functions $\theta_{k}(x)=\chi_p(p^{-1}kx)\Omega\big(|x|_p\big)$
takes values in the set $\{e^{2\pi i\frac{kr}{p}}:r=0,1,\dots,p-1\}$ of $p$ elements
on the discs $B_{-1}(r)$, $r=0,1,\dots,p-1$.
Thus any wavelet function $\theta_{k}(x)$ is represented as a linear combination
of the characteristic functions of the disks of the radius of $p^{-1}$.

In contrast to the Kozyrev wavelet basis (\ref{62.0-1}), the number
of generating wavelet functions (\ref{62.1-11}) for the wavelet basis (\ref{62.1})
is not minimal. For example, if $p=2$, then we have $2^{m-1}$
wavelet functions (instead of one as it is for (\ref{62.0-1}) and for classical
wavelet bases in real analysis.

Let $B_{0}=\cup_{b}B_{-m}(a)\cup B_{-m}$ be the {\it canonical covering}
(\ref{79.0}) of the disc $B_{0}$ with $p^{m}$ discs, $m\ge 1$, where
$b=0$ and $b=b_{r}p^{r}+b_{r+1}p^{r+1}+\cdots+b_{m-1}p^{m-1}$
is the center of the discs $B_{-m}$ and $B_{-m}(b)$, respectively,
$0\le b_j\le p-1$, $j=r,r+1,\dots,m-1$, $b_r\ne 0$, \ $r=0,1,2,\dots,m-1$.

For $x\in B_{-m}(b)$, $s\in J_{p;m}$, we have
$x=b+p^{m}\big(y_{0}+y_{1}p+y_{2}p^2+\cdots\big)$,
$s=p^{-m}\big(s_{0}+s_{1}p+\cdots+s_{m-1}p^{m-1}\big)$, $s_0\ne 0$;
$sx=sb+\xi$, $\xi\in \bZ_p$;
and $\{sx\}_p=\{sb\}_p=\{p^{r-m}\big(b_{r}+a_{r+1}p+\cdots+b_{m-1}p^{m-r-1}\big)
\big(s_{0}+s_{1}p+\cdots+s_{m-1}p^{m-1}\big)\}_p$, \
$r=0,1,2,\dots,m-1$, (see (\ref{8.2**})).
Thus,
$$
\theta_{s}^{(m)}(x)=\chi_p(sx)\Omega\big(|x|_p\big)
\qquad\qquad\qquad\qquad\qquad\qquad\qquad\qquad
$$
\begin{equation}
\label{62.0-2-1*}
\qquad
=\left\{
\begin{array}{lll}
0, && |x|_p\ge p, \\
e^{2\pi i\{sa\}_p}, && x\in B_{-m}(b), \quad
b=\sum_{l=r}^{m-1}b_{l}p^{l}, \\
1, && x\in B_{-m}, \\
\end{array}
\right.
\end{equation}
where $0\le b_j\le p-1$, $j=r,\dots,m-1$, $b_r\ne 0$, $r=0,1,\dots,m-1$;
\, $s=p^{-m}\big(s_{0}+s_{1}p+\cdots+s_{m-1}p^{m-1}\big)$,
$0\le s_j\le p-1$, $j=0,1,\dots,m-1$, $s_0\ne 0$.
Thus the function $\theta_{s}^{(m)}(x)=\chi_p(sx)\Omega\big(|x|_p\big)$
takes values in the set $\{e^{2\pi i\{sb\}_p}:b=\sum_{l=r}^{m-1}b_{l}p^{l},
0\le b_j\le p-1, j=r,\dots,m-1, b_r\ne 0, r=0,1,\dots,m-1\}$
of $p^m$ elements on the discs $B_{-m}(b)$. By using (\ref{62.0-2-1*}),
one can see that in contrast to the Kozyrev wavelet functions (\ref{62.0-1-0}),
any wavelet function $\theta_{s}^{(m)}(x)$ is represented as a
linear combination of the characteristic functions of the disks of the radius
of $p^{-m}$:
\begin{equation}
\label{62.0-5}
\theta_{s}^{(m)}(x)=\chi_p(sx)\Omega\big(|x|_p\big)
=\sum_{b}{\tilde h}_{s b}\phi\Big(\frac{1}{p^m}x-\frac{b}{p^m}\Big),
\end{equation}
$x\in \bQ_p$, where ${\tilde h}_{s b}=e^{2\pi i\{sb\}_p}$,
$b=0$ or $b=b_{r}p^{r}+b_{r+1}p^{r+1}+\cdots+b_{m-1}p^{m-1}$,
$r=0,1,\dots,m-1$, $0\le b_j\le p-1$, $b_r\ne 0$; $s\in J_{p;m}$.

According to formulas (\ref{62.0-2-1*}), (\ref{62.0-5}), the wavelet
function (\ref{62.1-11}) {\em cannot be represented} in terms of the
{\em Haar refinable function} $\phi(x)=\Omega\big(|x|_p\big)$ (which is
a solution of the Haar {\it refinement equation} (\ref{62.0-3**})), i.e.,
in the form
$$
\theta_{s}^{(m)}(x)=\sum_{a\in I_p}\beta_a\phi(p^{-1}x-a),
\quad \beta_a\in \bC.
$$
Consequently, the wavelet basis (\ref{62.1}) is non-Haar type for $m\ge 2$.
\end{proof}

Making the change of variables $\xi=p^{j}x-a$ and taking into
account (\ref{14.1}), we obtain
$\int{\bQ_p}\theta_{s;\,j a}^{(m)}(x)\,dx
=p^{j/2}\int\limits_{\bQ_p}\chi_p\big(s\xi\big)\Omega\big(|\xi|_p\big)\,d\xi
=p^{j/2}\Omega\big(|s|_p\big)=0$,
i.e., according to Lemma~\ref{lem1}, the wavelet function
$\theta_{s;\,j a}^{(m)}(x)$ belongs to the Lizorkin space $\Phi(\bQ_p)$.

\begin{Remark}
\label{rem1} \rm
1. Kozyrev's wavelet basis (\ref{62.0-1}) is a particular case of the wavelet
basis (\ref{62.1}) for $m=1$. Indeed, in this case we have
$\theta_{s;\,j a}^{(1)}(x)\equiv \theta_{k;\,j a}(x)$,
where $s\equiv p^{-1}k$, $k=1,2,\dots,p-1$; \,$j\in \bZ$, $a\in I_p$.

2. Our non-Haar wavelet basis (\ref{62.1}) can be constructed in the framework of the
approach developed by J.~J.~Benedetto and R.~L.~Benedetto~\cite{Ben-Ben}
\footnote{Here we follow the referee report of the previous version of our paper.}.
In the notation~\cite{Ben-Ben}, $H^{\perp}$ is the ball $B_0$, and $A^*_1:\bQ_p\to\bQ_p$
is multiplication by $p^{-1}$, and $W=(A^*_1)^{m-1}H^{\perp}=B_{m-1}$.
In addition, the ``choice coset of representatives'' $\cD$ in~\cite{Ben-Ben}
is the set (\ref{62.0**}) of shifts $I_p$, and the set $J_{p;m}$ given by (\ref{62.0*})
is precisely the set $\cD\cap\big((A^*_1W)\setminus W\big)$ that appears in
equation~\cite[(4.1)]{Ben-Ben}.
The set $J_{p;m}$ consists of $N=p^{m}-p^{m-1}$ elements, where $N$ is the number
of wavelet generators.

The algorithm of~\cite{Ben-Ben} starts with $N$ sets $\Omega{s,0}$ and $N$
local translation functions $T_s$, one for each $s\in J_{p;m}$. In order to
construct the wavelets (\ref{62.1}) by using the algorithm of~\cite{Ben-Ben},
we set
$$
\Omega_{s,0}=B_0\big(p^{-m}(s_{1}p+\cdots+s_{m-1}p^{m-1})\big)
\qquad\qquad\quad
$$
$$
\qquad
=p^{-m}(s_{1}p+\cdots+s_{m-1}p^{m-1})+B_0\subset B_{m-1},
$$
i.e., remove the term $s_0$ and add the set $B_0=H^{\perp}$. We also define
$$
T_s:B_{m-1}\to B_{m}\setminus B_{m-1} \quad \text{by} \quad T_s(w)=w+s_0p^{-m},
$$
so that $T_s$ maps $\Omega_{s,0}$ to $B_0(s)$ by translation. It is easy to verify
that this data fit the requirements of~\cite{Ben-Ben}: each $\Omega_{s,0}$ is
$(\tau,\cD)$-congruent to $H^{\perp}=B_0$, the union of all such sets contains
a neighborhood of $0$, each $T_s$ has the form required by formula~\cite[(4.1)]{Ben-Ben},
and for each $s\ne s'$ in $J_{p;m}$, one of the two compatibility
condition~\cite[(4.2) or (4.3) ]{Ben-Ben} holds.
Thus, by using the algorithm of~\cite{Ben-Ben}, one can produce the wavelets (\ref{62.1}).
Moreover, according to (\ref{o-64.7}), any wavelet $\theta_{s}^{(m)}$ is the Fourier transform
of the characteristic function of each disc $B_0(s)$, $s\in J_{p;m}$. Recall that
the algorithm of~\cite{Ben-Ben} only allows the construction of wavelet functions whose
Fourier transforms are the characteristic functions
of some sets (see~\cite[Proposition~5.1.]{Ben-Ben}).
\end{Remark}

\subsection{Countable family of non-Haar wavelet bases in ${\cL}^2(\bQ_p)$.}\label{s3.2}
Now, using the proof scheme of~\cite[Theorem~1]{S-Sk-1}, we construct
infinitely many different non-Haar wavelet bases, which are distinct from the basis (\ref{62.1}).

In what follows, we shall write the $p$-adic number
$a=p^{-\gamma}\big(a_{0}+a_{1}p+\cdots+a_{\gamma-1}p^{\gamma-1}\big)\in I_p$, \
$a_{j}=0,1,\dots,p-1$, \ $j=0,1,\dots,\gamma-1$, in the form $a=\frac{r}{p^{\gamma}}$,
where $r=a_{0}+a_{1}p+\cdots+a_{\gamma-1}p^{\gamma-1}$.

Since the $p$-adic norm is non-Archimedean, it is easy to see that the wavelet
functions (\ref{62.1-11}) have the following  property:
\begin{equation}
\label{100-11}
\theta_{s}^{(m)}(x\pm 1)=\chi_p(\pm s)\theta_{s}^{(m)}(x), \quad s\in J_{p;m}.
\end{equation}

\begin{Theorem}
\label{th4-11}
Let $\nu=1,2,\dots$. The functions
\begin{equation}
\label{101-11}
\psi_{s}^{(m)[\nu]}(x)=\sum_{k=0}^{p^\nu-1}\alpha_{s;k}\theta_{s}^{(m)}\Big(x-\frac{k}{p^\nu}\Big),
\qquad s\in J_{p;m},
\end{equation}
are wavelet functions if and only if
\begin{equation}
\label{108-11}
\alpha_{s;k}=p^{-\nu}\sum_{r=0}^{p^\nu-1}\gamma_{s;r}e^{-2i\pi\frac{-s+r}{p}k},
\end{equation}
where $\gamma_{s;k}\in \bC$, $|\gamma_{s;k}|=1$, $k=0,1,\dots,p^\nu-1$, \,$s\in J_{p;m}$.
\end{Theorem}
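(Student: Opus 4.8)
The plan is to reduce the statement to a pair of conditions: first, that the candidate function $\psi_s^{(m)[\nu]}$ together with all its dilations and shifts is orthonormal; and second, that the resulting system is complete in $\cL^2(\bQ_p)$. Since the functions $\theta_s^{(m)}(x-k/p^\nu)$ for $s\in J_{p;m}$, $k=0,1,\dots,p^\nu-1$ are precisely the elements of the orthonormal basis (\ref{62.1}) supported in the ball $B_0$ at level $j=0$, the passage from the old basis to the new one at each fixed scale is implemented by the $(p-1)p^{m-1}\cdot p^\nu$-dimensional matrix whose blocks (indexed by $s$) are the $p^\nu\times p^\nu$ matrices $(\alpha_{s;k})$. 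The system (\ref{101-11}) and all its dilations/shifts is an orthonormal basis of $\cL^2(\bQ_p)$ if and only if each of these blocks is a unitary matrix; this follows by the standard change-of-basis argument, using that shifts by $I_p$ and dilations by powers of $p$ act on both systems compatibly, together with property (\ref{100-11}) which shows how shifting $\psi_s^{(m)[\nu]}$ outside the fundamental domain reproduces the same function up to a unimodular factor.

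So the first step I would carry out is to make this reduction precise: fix $j\in\bZ$ and $a\in I_p$, expand $\psi_{s;\,ja}^{(m)[\nu]}$ in terms of the $\theta_{s';\,j'a'}^{(m)}$, and check (using (\ref{100-11}) to handle the carries when $k/p^\nu$ overflows) that the Gram matrix of the new system is block-diagonal in $s$ with $s$-block equal to $(\alpha_{s;\cdot})(\alpha_{s;\cdot})^*$. This identifies the orthonormality-plus-completeness requirement with: for each $s\in J_{p;m}$, the matrix $M_s=(\alpha_{s;k})_{0\le k\le p^\nu-1}$ (viewed suitably as a $p^\nu\times p^\nu$ matrix indexed by $k$ and by a dual parameter) is unitary. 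The key observation is that the natural "Fourier" structure here is dictated by (\ref{100-11}): the operator of translation by $1/p^\nu$ acts on $\psi_s^{(m)[\nu]}$, and because $\theta_s^{(m)}(x\pm 1)=\chi_p(\pm s)\theta_s^{(m)}(x)$, the relevant characters are $k\mapsto e^{-2i\pi\frac{-s+r}{p}k}$, $r=0,1,\dots,p^\nu-1$. Hence the correct diagonalizing transform is exactly the finite Fourier-type transform appearing in (\ref{108-11}).

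The second step is then purely linear-algebraic: I would show that $M_s$ is unitary if and only if $\alpha_{s;k}$ has the form (\ref{108-11}) with $|\gamma_{s;r}|=1$. The "if" direction is a direct computation: substituting (\ref{108-11}), the rows of $M_s$ are (up to normalization by $p^{-\nu}$) obtained from the unitary finite-Fourier matrix by multiplying columns by the unimodular scalars $\gamma_{s;r}$, hence $M_sM_s^*=I$. The "only if" direction: diagonalize the cyclic-type shift structure — $M_s$ must intertwine the translation-by-$1/p^\nu$ action, which is diagonalized precisely by the characters $e^{-2i\pi\frac{-s+r}{p}k}$, so $M_s$ is, up to the fixed unitary Fourier matrix, a diagonal matrix $\mathrm{diag}(\gamma_{s;r})$; unitarity of $M_s$ then forces $|\gamma_{s;r}|=1$. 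I expect the main obstacle to be the careful bookkeeping in the first step — correctly tracking how shifting by $k/p^\nu$ and by general $a\in I_p$ interacts, and verifying that the cross terms between different scales $j\ne j'$ and the "overflow" shifts all vanish or collapse via (\ref{100-11}) — rather than the final linear algebra, which is routine once the right finite Fourier basis has been identified. Completeness then comes for free: a unitary change of basis on an orthonormal basis of $\cL^2(\bQ_p)$ yields an orthonormal basis, and the wavelet (dilation/shift) structure of the new system is manifest from (\ref{101-11}), so the new functions form a $p$-adic wavelet basis; that it is non-Haar follows as in Theorem~\ref{th2} since the generating functions are built from the $\theta_s^{(m)}$ with $m\ge2$.
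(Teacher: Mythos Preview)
Your approach is correct and matches the paper's: the paper likewise reduces the problem to showing that for each $s$ the $p^\nu\times p^\nu$ matrix $D_s$ taking $\{\theta_s^{(m)}(\cdot-k/p^\nu)\}_{k}$ to $\{\psi_s^{(m)[\nu]}(\cdot-r/p^\nu)\}_{r}$ is unitary, recognizes $D_s$ as a twisted circulant (its rows are $A_s^r u_s$ for the cyclic-with-$\chi_p(-s)$-twist matrix $A_s$), and then characterizes unitarity by diagonalizing $A_s$ with the characters $e^{2i\pi(-s+r)k/p^\nu}$, obtaining exactly (\ref{108-11}) with $|\gamma_{s;r}|=1$. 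One small clarification: your ``$M_s=(\alpha_{s;k})$'' is only the first \emph{row} of the paper's $D_s$---the remaining rows are produced by the shifts $r/p^\nu$ together with the overflow rule (\ref{100-11}), precisely the bookkeeping you flag as the main obstacle.
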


\begin{proof}
Let $\psi_{s}^{(m)[\nu]}(x)$ be defined by (\ref{101-11}), $s\in J_{p;m}$.
According to Theorem~\ref{th2}, $\{\theta_{s}^{(m)}(\cdot-a), s\in J_{p;m}, a\in I_p\}$
is an orthonormal system. Hence, taking into account (\ref{100-11}),
we see that $\psi_{s}^{(m)[\nu]}$ is orthogonal to $\psi_{s}^{[\nu]}(x)(\cdot-a)$
whenever $a\in I_p$, $a\ne \frac{k}{p^\nu}$, $k=0,1,\dots p^\nu-1$; \ $\nu=1,2,\dots$.
Thus the system  $\{\psi_{s}^{(m)[\nu]}(x-a), s\in J_{p;m}, a\in I_p\}$ is
orthonormal if and only if the system  consisting of the functions
$$
\psi_{s}^{(m)[\nu]}\Big(x-\frac{r}{p^\nu}\Big)
=\chi_p(-s)\alpha_{s;p^\nu-r}\theta_{s}^{(m)}(x)
+\chi_p(-s)\alpha_{s;p^\nu-r+1}\theta_{s}^{(m)}\Big(x-\frac{1}{p^\nu}\Big)
$$
$$
\qquad\qquad\qquad\qquad
+\cdots
+\chi_p(-s)\alpha_{s;p^\nu-1}\theta_{s}^{(m)}\Big(x-\frac{r-1}{p^\nu}\Big)
$$
$$
\qquad\qquad
+\alpha_{s;0}\theta_{s}^{(m)}\Big(x-\frac{r}{p^\nu}\Big)
+\alpha_{s;1}\theta_{s}^{(m)}\Big(x-\frac{r+1}{p^\nu}\Big)
$$
\begin{equation}
\label{104-11}
\qquad\qquad
+\cdots
+\alpha_{s;p^\nu-1-r}\theta_{s}^{(m)}\Big(x-\frac{p^\nu-1}{p^\nu}\Big),
\end{equation}
$r=0,\dots,p^\nu-1$, $s\in J_{p;m}$, is orthonormal, $\nu=1,2,\dots$.
Set
$$
\Xi_{\nu}^{[0]}=\left(\theta_{s}^{(m)},
\theta_{s}^{(m)}\left(\cdot-\frac{1}{p^\nu}\right),\dots,
\theta_{s}^{(m)}\left(\cdot-\frac{p^\nu-1}{p^\nu}\right)\right)^T,
$$
$$
\Xi_{\nu}^{[\nu]}=\left(\psi_{s}^{(m)[\nu]},
\psi_{s}^{(m)[\nu]}\left(\cdot-\frac{1}{p^\nu}\right),\dots,
\psi_{s}^{(m)[\nu]}\left(\cdot-\frac{p^\nu-1}{p^\nu}\right)\right)^T,
$$
where $T$ is the transposition operation.
By (\ref{104-11}), we have $\Xi_{s}^{[\nu]}=D_{s}\Xi_{s}^{[0]}$,
where
$$
D_{s}=
\qquad\qquad\qquad\quad\qquad\qquad\qquad\qquad\quad\qquad\qquad
\qquad\qquad\qquad\qquad\qquad
$$
\begin{equation}
\label{105-11}
\left(
\begin{array}{ccccc}
\alpha_{s;0}&\alpha_{s;1}&\ldots&\alpha_{s;p^\nu-2}&\alpha_{s;p^\nu-1} \\
\chi_p(-s)\alpha_{s;p^\nu-1}&\alpha_{s;0}&\ldots&\alpha_{s;p^\nu-3}&\alpha_{s;p^\nu-2} \\
\chi_p(-s)\alpha_{s;p^\nu-2}&\chi_p(-s)\alpha_{s;p^\nu-1}&\ldots&\alpha_{s;p^\nu-4}&\alpha_{s;p^\nu-3} \\
\hdotsfor{5} \\
\chi_p(-s)\alpha_{s;2}&\chi_p(-s)\alpha_{s;3}&\ldots&\alpha_{s;0}&\alpha_{s;1} \\
\chi_p(-s)\alpha_{s;1}&\chi_p(-s)\alpha_{s;2}&\ldots&\chi_p(-s)\alpha_{s;p^\nu-1}&\alpha_{s;0} \\
\end{array}
\right),
\end{equation}
and $s\in J_{p;m}$; \,$\nu=1,2,\dots$.
Due to orthonormality of  $\{\psi_{s}^{(m)[\nu]}(x)(\cdot-a), s\in J_{p;m}, a\in I_p\}$,
the coordinates of  $\Xi_{s}^{[\nu]}$ form an orthonormal system
if and only if the matrixes $D_{s}$ are unitary, $s\in J_{p;m}$.

Let $u_{s}=(\alpha_{s;0},\alpha_{s;1},\dots,\alpha_{s;p^\nu-1})^{T}$ be a vector
and let
$$
A_{s}=\left(
\begin{array}{cccccc}
0&0&\ldots&0&0&\chi_p(-s) \\
1&0&\ldots&0&0&0 \\
0&1&\ldots&0&0&0 \\
\hdotsfor{6} \\
0&0&\ldots&1&0&0 \\
0&0&\ldots&0&1&0 \\
\end{array}
\right)
$$
be a $p^\nu\times p^\nu$ matrix, $s\in J_{p;m}$.
It is not difficult to see that
$$
A_{s}^ru_{\nu}
=\big(\chi_p(-s)\alpha_{s;p^\nu-r},\chi_p(-s)\alpha_{s;p^\nu-r+1},\dots,
\chi_p(-s)\alpha_{s;p^\nu-1},
$$
$$
\qquad\quad\qquad\qquad
\alpha_{s;0},\alpha_{s;1},\dots,\alpha_{s;p^\nu-r-1}\big)^{T},
$$
where $r=0,1,\dots,p^s-1$, \,$s\in J_{p;m}$; \,$\nu=1,2,\dots$. Thus we have
$$
D_{s}=\big(u_{s},A_{s}u_{s},\dots,A_{s}^{p^\nu-1}u_{s}\big)^T.
$$
Consequently, to describe all unitary matrixes $D_{s}$, we should find all vectors
$u_{s}=(\alpha_{s;0},\alpha_{s;1},\dots,\alpha_{s;p^\nu-1})^{T}$
such that the system of vectors $\{A_{s}^ru_{s}, r=0,\dots,p^\nu-1\}$ is orthonormal,
$s\in J_{p;m}$.
We already have such a vector $u_0=(1,0,\dots,0,0)^{T}$
because the matrix $D_0=\big(u_0,Au_0,\dots,A^{p^\nu-1}u_0\big)^T$
is the identity matrix.

Now we prove that the system
$\{A_{s}^ru_{s}, r=0,\dots,p^\nu-1\}$ is orthonormal if and only if
$u_{s}=B_{s}u_0$, where $B_{s}$ is a unitary matrix such that
$A_{s}B_{s}=B_{s}A_{s}$, \,$s\in J_{p;m}$. Indeed,
if $u_{s}=B_{s}u_0$, $B_{s}$ is a unitary matrix such that $A_{s}B_{s}=B_{s}A_{s}$,
then $A_{s}^ru_{s}=B_{s}A_{s}^ru_0$, \ $r=0,1,\dots,p^\nu-1$, \,$s\in J_{p;m}$.
Since the system $\{A_{s}^ru_0, r=0,1,\dots,p^\nu-1\}$ is orthonormal and the
matrix $B_{s}$ is unitary, the vectors $A_{s}^ru_{s}$, \ $r=0,1,\dots,p^\nu-1$
are also orthonormal, $s\in J_{p;m}$.
Conversely, if the system $A_{s}^ru_{s}$, \ $r=0,1,\dots,p^\nu-1$ is
orthonormal, taking into account that  $\{A_{s}^ru_0, r=0,1,\dots,p^\nu-1\}$
is also an orthonormal system, we conclude that
there exists a unitary matrix $B_{s}$ such that $A_{s}^ru_{s}=B_{s}(A_{s}^ru_0)$, \
$r=0,1,\dots,p^\nu-1$. Since $A_{s}^{p^\nu}u_{s}=\chi_p(-s)u_{s}$ and
$A_{s}^{p^\nu}u_0=\chi_p(-s)u_0$, we obtain additionally
$A_{s}^{p^\nu}u_{s}=B_{s}A_{s}^{p^\nu}u_0$. It follows
from the above relations that $(A_{s}B_{s}-B_{s}A_{s})(A_{s}^{r}u_0)=0$, \
$r=0,1,\dots,p^\nu-1$. Since the vectors $A_{s}^ru_0$, \
$r=0,1,\dots,p^\nu-1$ form a basis in the $p^\nu$-dimensional
space, we conclude that $A_{s}B_{s}=B_{s}A_{s}$, \,$s\in J_{p;m}$.

Thus all required unitary matrixes (\ref{105-11}) are given by
$$
D_{s}=\big(B_{s}u_0,B_{s}A_{\nu}u_0,\dots,B_{s}A_{s}^{p^\nu-1}u_0\big)^T,
$$
where $u_0=(1,0,\dots,0,0)^{T}$ and $B_{s}$ is a unitary matrix such that
$A_{s}B_{s}=B_{s}A_{s}$, \,$s\in J_{p;m}$.
It remains to describe all such matrixes $B_{s}$.

It is easy to see that the eigenvalues of $A_{s}$ and
the corresponding normalized eigenvectors are  respectively
\begin{equation}
\label{106-11}
\lambda_{s;r}=e^{2i\pi\frac{-s+r}{p^\nu}}
\end{equation}
and
$$
v_{s;r}=\big((v_{s;r})_0,\dots,(v_{s;r})_{p^\nu-1}\big)^T,
$$
where
\begin{equation}
\label{107-11}
(v_{s;r})_l=p^{-\nu/2}e^{-2i\pi\frac{-s+r}{p^\nu}l},
\quad l=0,1,2,\dots,p^\nu-1,
\end{equation}
$r=0,1,\dots,p^\nu-1$, \,$s\in J_{p;m}$.
Hence the matrix $A_{s}$ can be represented as
$A_{s}=C_{s}\widetilde{A}_{s}C_{s}^{-1}$, where
$$
\widetilde{A}_{s}=\left(
\begin{array}{ccccc}
\lambda_{0}&0&\ldots&0 \\
0&\lambda_{1}&\ldots&0 \\
\vdots&\vdots&\ddots&\vdots \\
0&0&\ldots&\lambda_{p^\nu-1} \\
\end{array}
\right)
$$
is a diagonal matrix, $C_{s}=\big(v_{s;0},\dots,v_{s;p^\nu-1}\big)$
is a unitary matrix. It follows that the matrix $B_{s}=C_{s}\widetilde{B}_{s}C_{s}^{-1}$
is unitary if and only if $\widetilde{B}_{s}$ is unitary. On the other hand,
$A_{s}B_{s}=B_{s}A_{s}$ if and only if
$\widetilde{A}_{s}\widetilde{B}_{s}=\widetilde{B}_{s}\widetilde{A}_{s}$.
Moreover, since according to (\ref{106-11}), $\lambda_{s;k}\ne\lambda_{s;l}$
whenever $k\ne l$, all unitary matrixes $\widetilde{B}_{s}$ such that
$\widetilde{A}_{s}\widetilde{B}_{s}=\widetilde{B}_{s}\widetilde{A}_{s}$, are given by
$$
\widetilde{B}_{s}=\left(
\begin{array}{ccccc}
\gamma_{s;0}&0&\ldots&0 \\
0&\gamma_{s;1}&\ldots&0 \\
\vdots&\vdots&\ddots&\vdots \\
0&0&\ldots&\gamma_{s;p^\nu-1} \\
\end{array}
\right),
$$
where $\gamma_{s;k}\in \bC$, \,$|\gamma_{s;k}|=1$, $k=0,1,\dots,p^\nu-1$, \,$s\in J_{p;m}$.
Hence all unitary matrixes $B_{s}$ such that $A_{s}B_{s}=B_{s}A_{s}$,
are given by $B_{s}=C\widetilde{B}_{s}C_{s}^{-1}$.
Using (\ref{107-11}), one can calculate
$$
\alpha_{s;k}=(B_{s}u_0)_k=(C_{s}\widetilde{B}_{s}C_{s}^{-1}u_0)_k
\qquad\qquad\qquad
$$
$$
\qquad\qquad
=\sum_{r=0}^{p^\nu-1}\gamma_{s;r}(v_{s;r})_k(\overline{v}_{s;r})_0
=p^{-\nu}\sum_{r=0}^{p^\nu-1}\gamma_{\nu;r}e^{-2i\pi\frac{-s+r}{p^\nu}k},
$$
where $\gamma_{s;k}\in \bC$, $|\gamma_{s;k}|=1$, $k=0,1,\dots,p^\nu-1,$, \,$s\in J_{p;m}$.

It remains to prove that
\begin{equation}
\label{109-11}
\{p^{-j/2}\psi_{s}^{(m)[\nu]}(p^{j}x-a), x\in\bQ_p: s\in J_{p;m}, \, j\in \bZ, \,a\in I_p\}
\end{equation}
is a basis for ${\cL}^2(\bQ_p)$ whenever $\psi_{s}^{(m)[\nu]}$ is defined by (\ref{101-11}),
(\ref{108-11}), \,$\nu=1,2,\dots$. Since according to Theorem~\ref{th2},
$$
\{p^{-j/2}\theta_{s}^{(m)}(p^{j}x-a), x\in\bQ_p: s\in J_{p;m},\, j\in \bZ, \,a\in I_p\}
$$
is a basis for ${\cL}^2(\bQ_p)$, it suffices to check that any function
$p^{-j/2}\theta_{s}^{(m)}(p^{j}x-c)$, $c\in I_p$, can be
decomposed with respect to the functions $p^{-j/2}\psi_{s}^{(m)[\nu]}(p^{j}x-a)$,
$a\in I_p$; where $s\in J_{p;m}$, $j\in \bZ$. Any $c\in I_p$, $c\ne 0$,
can be represented in the form $c=\frac{r}{p^\nu}+b$, where $r=0,1,\dots,p^\nu-1$,
$|b|_p\ge p^{\nu+1}$, $bp^\nu\in I_p$. Taking into account that
$\Xi_{s}^{[0]}=D_{s}^{-1}\Xi_{s}^{[\nu]}$, i.e.,
$$
\theta_{s}^{(m)}\Big(p^{j}x-\frac{r}{p^\nu}\Big)
=\sum_{k=0}^{p^\nu-1}\beta_{s;k}^{(r)}\psi_{s}^{(m)[\nu]}\Big(p^{j}x-\frac{k}{p^\nu}\Big),
\quad r=0,1,\dots,p^\nu-1,
$$
we have
$$
\theta_{s}^{(m)}\Big(p^{j}x-c\Big)=\theta_{s}^{(m)}\Big(p^{j}x-\frac{r}{p^\nu}-b\Big)=
\sum_{k=0}^{p^\nu-1}\beta_{s;k}^{(r)}\psi_{s}^{(m)[\nu]}\Big(p^{j}x-\frac{k}{p^\nu}-b\Big),
$$
and $\frac{k}{p^\nu}+b\in I_p$, $k=0,1,\dots,p^\nu-1$, \,$\nu=1,2,\dots$. Consequently, any function
$f\in {\cL}^2(\bQ_p)$ can be decomposed with respect to the system of functions (\ref{109-11}).
\end{proof}

Thus, we have constructed a countable family of non-Haar wavelet bases
given by formulas (\ref{109-11}), (\ref{101-11}), (\ref{108-11}).

\begin{Remark}
\label{rem2} \rm
Let $\nu=1$. According to (\ref{109-11}), (\ref{101-11}), (\ref{108-11});
(\ref{62.1-11}), (\ref{62.1}), in this case we have
$$
\psi_{s}^{(m)[1]}(x-a)
=\sum_{k=0}^{p-1}\alpha_{s;k}\theta_{s}^{(m)}\Big(x-a-\frac{k}{p}\Big),
\quad s\in J_{p;m}, \quad a\in I_p.
$$
Applying formulas (\ref{o-64.7}), (\ref{14}) to the last relation, we obtain
$$
F[\psi_{s}^{(m)[1]}(x-a)](\xi)
=\chi_p(a\xi)\Omega\big(|\xi+s|_p\big)\sum_{k=0}^{p-1}\alpha_{s;k}\chi_p\Big(\frac{k}{p}\xi\Big),
\quad s\in J_{p;m}, \quad a\in I_p.
$$
Thus the right-hand side of the last relation is not equal to zero only if $\xi=-s+\eta$,
$\eta=\eta_0+\eta_1p+\cdots \in \bZ_p$, i.e., $\xi \in B_0(-s)$.
Hence, for $\xi \in B_0(-s)$ we have
$$
F[\psi_{s}^{(m)[1]}(x-a)](\xi)
=\chi_p(-as)\chi_p(a\eta)\sum_{k=0}^{p-1}\alpha_{s;k}\chi_p\Big(\frac{k}{p}(-s+\eta_0)\Big),
\quad \eta\in \bZ_p,
$$
where $s\in J_{p;m}$, $a\in I_p,$.
It is easy to verify that for any $s\in J_{p;m}$ and $a\in I_p$ the right-hand
side of last relation {\em is not a characteristic function of any set}.
According to~\cite[Proposition~5.1.]{Ben-Ben}, {\em only} the functions whose Fourier
transforms are the characteristic functions of some sets may be wavelet
functions obtained by Benedettos' method. Consequently, the wavelet basis corresponding
to the generating wavelet functions $\psi_{s}^{(m)[1]}$, $s\in J_{p;m}$,
{\em cannot be constructed} by the method of~\cite{Ben-Ben}.
\end{Remark}

\subsection{Multidimensional non-Haar $p$-adic wavelets.}\label{s3.3}
Since the one- dimensional wavelets (\ref{62.1}) are non-Haar type,
we cannot construct the $n$-dimensional wavelet basis as the tensor products
of the one-dimensional MRAs (see~\cite{S-Sk-1}).
In this case we introduce $n$-dimensional non-Haar
wavelet functions as the $n$-direct product of the one-dimensional
non-Haar wavelets (\ref{62.1}). For $x=(x_1,\dots,x_n)\in \bQ_p^n$ and
$j=(j_1,\dots,j_n)\in \bZ^n$ we introduce a multi-dilatation
\begin{equation}
\label{62.8-md}
\widehat{p^{j}}x\stackrel{def}{=}(p^{j_1}x_1,\dots,p^{j_n}x_n)
\end{equation}
and define the $n$-direct products of the one-dimensional $p$-adic
wavelets (\ref{62.1}) as
\begin{equation}
\label{62.8}
\Theta_{s;\,j a}^{(m)\times}(x)=p^{-|j|/2}
\chi_p\big(s\cdot(\widehat{p^{j}}x-a)\big)
\Omega\big(|\widehat{p^{j}}x-a|_p\big), \quad x\in \bQ_p^n,
\end{equation}
where $j=(j_1,\dots,j_n)\in \bZ^n$; $|j|=j_1+\cdots+j_n$;
$a=(a_1,\dots,a_n)\in I_p^n$;
$s=(s_1,\dots,s_n)\in J_{p;m}^n$; $m=(m_1,\dots,m_n)$, $m_l\ge 1$
is a {\it fixed} positive integer, $l=1,2,\dots,n$. Here $I_p^n$, $J_{p;m}^n$
are the $n$- direct products of the corresponding sets (\ref{62.0**}) and
(\ref{62.0*}).

In view of (\ref{9}), Theorem~\ref{th2} implies the following
statement.

\begin{Theorem}
\label{th2.1}
The non-Haar wavelet functions {\rm(\ref{62.8})} form an orthonormal
basis in ${\cL}^2(\bQ_p^n)$.
\end{Theorem}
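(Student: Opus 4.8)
The plan is to derive the $n$-dimensional result by a tensor-product argument from the one-dimensional Theorem~\ref{th2}, using the product structure built into the definition. First I would observe that, by (\ref{62.8}), the characters factor through (\ref{8}) and the scalar product of characters,
\[
\chi_p\big(s\cdot(\widehat{p^{j}}x-a)\big)=\prod_{l=1}^{n}\chi_p\big(s_l(p^{j_l}x_l-a_l)\big),
\]
while by (\ref{9}) the balls split as $B_0^n(\widehat{p^j}x-a)=\prod_{l=1}^n B_0(p^{j_l}x_l-a_l)$, so
\[
\Omega\big(|\widehat{p^{j}}x-a|_p\big)=\prod_{l=1}^{n}\Omega\big(|p^{j_l}x_l-a_l|_p\big).
\]
Combining these with $p^{-|j|/2}=\prod_l p^{-j_l/2}$ shows
\[
\Theta_{s;\,j a}^{(m)\times}(x)=\prod_{l=1}^{n}\theta_{s_l;\,j_l a_l}^{(m_l)}(x_l),
\]
i.e.\ each $n$-dimensional wavelet is literally the tensor product of one-dimensional wavelets from (\ref{62.1}).

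Next I would establish orthonormality. Since the Haar measure $d^n x=dx_1\cdots dx_n$ factors and the integrand factors as above, Fubini gives
\[
\big(\Theta_{s';\,j'a'}^{(m)\times},\Theta_{s;\,ja}^{(m)\times}\big)
=\prod_{l=1}^{n}\big(\theta_{s'_l;\,j'_l a'_l}^{(m_l)},\theta_{s_l;\,j_l a_l}^{(m_l)}\big)
=\prod_{l=1}^{n}\delta_{s'_l s_l}\delta_{j'_l j_l}\delta_{a'_l a_l},
\]
where the last equality is (\ref{62.7}) applied in each coordinate. This product of Kronecker symbols equals $1$ exactly when $s'=s$, $j'=j$, $a'=a$ and $0$ otherwise, which is precisely orthonormality of the system (\ref{62.8}).

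For completeness I would invoke the standard fact that if $\{e^{(l)}_\alpha\}_{\alpha}$ is an orthonormal basis of $\cL^2(\bQ_p)$ for each $l=1,\dots,n$, then the tensor products $\{e^{(1)}_{\alpha_1}\otimes\cdots\otimes e^{(n)}_{\alpha_n}\}$ form an orthonormal basis of $\cL^2(\bQ_p^n)\cong\cL^2(\bQ_p)^{\otimes n}$. By Theorem~\ref{th2}, $\{\theta_{s;\,j a}^{(m)}\}$ with $s\in J_{p;m}$, $j\in\bZ$, $a\in I_p$ is such a basis of $\cL^2(\bQ_p)$ for each fixed $m_l$; taking the indices coordinatewise, $(s,j,a)$ ranges over $J_{p;m}^n\times\bZ^n\times I_p^n$, and the resulting tensor-product basis is exactly (\ref{62.8}). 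Hence (\ref{62.8}) is an orthonormal basis of $\cL^2(\bQ_p^n)$; it is non-Haar because already each factor is, for $m_l\ge 2$, not expressible via the Haar refinable function (as shown in the proof of Theorem~\ref{th2}). The only point requiring a little care — and the mildest obstacle — is justifying the tensor-product completeness cleanly in the $p$-adic setting; this follows from density of finite linear combinations of characteristic functions of balls $B^n_\gamma(a)=\prod_l B_\gamma(a_l)$ (see Proposition~\ref{pr1-int} and Lemma~\ref{lem-four-1}) together with Parseval applied factorwise, mirroring the completeness argument already carried out in Theorem~\ref{th2}.
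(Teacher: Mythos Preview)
Your proposal is correct and follows essentially the same approach as the paper: the paper's proof is the single sentence ``In view of (\ref{9}), Theorem~\ref{th2} implies the following statement,'' and you have simply unpacked this tensor-product argument in detail (factorization of $\Theta_{s;\,ja}^{(m)\times}$ into one-dimensional wavelets, Fubini for orthonormality, and tensor-product completeness). Nothing more is needed.
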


Using (\ref{9}), it is easy to verify that
$\int_{\bQ_p^n}\Theta_{s;\, j a}^{(m)\times}(x)\,d^nx=0$,
i.e., in view of Lemma~\ref{lem1}, the wavelet function
$\Theta_{s;\, j a}^{(m)\times}(x)$, $j\in \bZ^n$, $a\in I_p^n$, $s\in J_{p;m}^n$,
belongs to the Lizorkin space $\Phi(\bQ_p^n)$.

\begin{Corollary}
\label{cor3-1}
The $n$-direct product of one-dimensional Kozyrev's wavelets
{\rm (\ref{62.0-1})}
\begin{equation}
\label{64**}
\Theta_{k;j a}^{\times}(x)
=p^{-|j|/2}\chi_p\big(p^{-1}k\cdot(\widehat{p^{j}}x-a)\big)
\Omega\big(|\widehat{p^{j}}x-a|_p\big), \quad x\in \bQ_p^n,
\end{equation}
form an orthonormal complete basis in ${\cL}^2(\bQ_p^n)$, \ $k \in J_p^n$, \
$j\in \bZ^n$, $a\in I_p^n$.
\end{Corollary}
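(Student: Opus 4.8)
The plan is to obtain this corollary as the special case $m=(1,\dots,1)$ of Theorem~\ref{th2.1}. First I would recall, following Remark~\ref{rem1}, that for $m=1$ the one-dimensional wavelets $\theta_{s;\,j a}^{(m)}$ coincide with Kozyrev's wavelets: the set $J_{p;1}=\{p^{-1}s_0: s_0=1,\dots,p-1\}$ is exactly the index set of the $p-1$ Kozyrev generators, and under the identification $s=p^{-1}k$, $k=1,\dots,p-1$, one has $\theta_{s;\,j a}^{(1)}(x)=\chi_p\big(p^{-1}k(p^{j}x-a)\big)\Omega\big(|p^{j}x-a|_p\big)=\theta_{k;\,j a}(x)$.

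Next I would specialize the $n$-direct product~(\ref{62.8}) to the multi-index $m=(1,\dots,1)$. Then $J_{p;m}^n=J_{p;1}^n$ is in bijection with $J_p^n$ via coordinatewise multiplication by $p$, and, writing $s=p^{-1}k$ coordinatewise, $\chi_p\big(s\cdot(\widehat{p^{j}}x-a)\big)=\prod_{l=1}^{n}\chi_p\big(s_l(p^{j_l}x_l-a_l)\big)=\prod_{l=1}^{n}\chi_p\big(p^{-1}k_l(p^{j_l}x_l-a_l)\big)=\chi_p\big(p^{-1}k\cdot(\widehat{p^{j}}x-a)\big)$, while the factor $\Omega\big(|\widehat{p^{j}}x-a|_p\big)$ and the normalization $p^{-|j|/2}$ are unaffected. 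Hence $\Theta_{s;\,j a}^{(1)\times}(x)=\Theta_{k;\,j a}^{\times}(x)$, so the family~(\ref{64**}) is literally the family~(\ref{62.8}) with $m=(1,\dots,1)$, re-indexed by $k\in J_p^n$ in place of $s\in J_{p;1}^n$.

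Finally, Theorem~\ref{th2.1} states that~(\ref{62.8}) is an orthonormal basis of ${\cL}^2(\bQ_p^n)$ for every admissible $m$; applying it with $m=(1,\dots,1)$ yields that~(\ref{64**}) is an orthonormal complete basis in ${\cL}^2(\bQ_p^n)$, which is the assertion of the corollary. I do not expect any real obstacle here: the content is entirely in Theorem~\ref{th2.1}, and the only point requiring care is verifying that the bijection $J_p^n\leftrightarrow J_{p;1}^n$ is compatible with the remaining indices $j\in\bZ^n$ and $a\in I_p^n$, which is immediate since $j$ and $a$ enter both formulas in the same way.
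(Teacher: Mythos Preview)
Your proposal is correct and matches the paper's own argument exactly: the paper simply states that the corollary follows from Theorem~\ref{th2.1} by setting $m=1$, which is precisely the specialization $m=(1,\dots,1)$ you carry out. Your extra care in spelling out the bijection $J_{p;1}^n\leftrightarrow J_p^n$ via $s=p^{-1}k$ and checking that $j$ and $a$ are unaffected is appropriate and goes slightly beyond the paper's one-line justification.
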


The proof follows from Theorem~\ref{th2.1} if we set $m=1$.

\begin{Corollary}
\label{cor2}
The family of functions
\begin{equation}
\label{o-64.8*}
{\widetilde\Theta}_{s;\, j a}^{(m)\times}(\xi)
=F[\Theta_{s;\, j a}^{(m)\times}](\xi)
=p^{|j|/2}\chi_p\big(\widehat{p^{-j}}a\cdot\xi\big)
\Omega\big(|s+\widehat{p^{-j}}\xi|_p\big),\quad \xi\in \bQ_p^n,
\end{equation}
form an orthonormal complete basis in ${\cL}^2(\bQ_p^n)$, $j\in \bZ^n$;
$a\in I_p^n$; $s\in J_{p;m}^n$; $m=(m_1,\dots,m_n)$, $m_l\ge 1$ is
a {\it fixed} positive integer, $l=1,2,\dots,n$.
\end{Corollary}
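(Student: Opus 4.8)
The plan is to realize the family $\{\widetilde\Theta_{s;\,j a}^{(m)\times}\}$ as the image, under the Fourier transform, of the orthonormal basis $\{\Theta_{s;\,j a}^{(m)\times}\}$ provided by Theorem~\ref{th2.1}, and then to invoke the fact that $F$ extends from ${\cD}(\bQ_p^n)$ to a unitary operator on ${\cL}^2(\bQ_p^n)$ (the $p$-adic Plancherel theorem; see~\cite{Vl-V-Z}). Since a unitary operator sends an orthonormal basis onto an orthonormal basis, both the orthonormality and the completeness of $\{\widetilde\Theta_{s;\,j a}^{(m)\times}: s\in J_{p;m}^n,\ j\in\bZ^n,\ a\in I_p^n\}$ in ${\cL}^2(\bQ_p^n)$ will follow at once from Theorem~\ref{th2.1}, provided we verify that $F[\Theta_{s;\,j a}^{(m)\times}]$ is indeed given by the closed-form expression~(\ref{o-64.8*}).

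So the only computational content is the evaluation of $F[\Theta_{s;\,j a}^{(m)\times}]$. First I would use (\ref{9}) together with the multiplicativity of $\chi_p(\cdot)$ and of $\Omega(|\cdot|_p)$ with respect to the maximum norm to factor the $n$-dimensional wavelet into a tensor product of one-dimensional ones,
$$
\Theta_{s;\,j a}^{(m)\times}(x)=\prod_{l=1}^{n}p^{-j_l/2}\chi_p\big(s_l(p^{j_l}x_l-a_l)\big)\Omega\big(|p^{j_l}x_l-a_l|_p\big),
$$
whence $F[\Theta_{s;\,j a}^{(m)\times}](\xi)=\prod_{l=1}^{n}F[\theta_{s_l;\,j_l a_l}^{(m_l)}](\xi_l)$ and it suffices to treat the case $n=1$. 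In one variable, $\theta_{s;\,j a}^{(m)}(x)=p^{-j/2}\theta_{s}^{(m)}(p^{j}x-a)$, and applying the affine-change-of-variables rule~(\ref{14}) with the scalar dilation $A=p^{j}$ and shift $b=-a$ (so that $|\det A|_p^{-1}=p^{j}$ and $-A^{-1}b=p^{-j}a$), together with $F[\theta_{s}^{(m)}](\xi)=\Omega(|s+\xi|_p)$ — which is formula~(\ref{o-64.7}), itself immediate from~(\ref{14.1}) with $k=0$ — one obtains
$$
F[\theta_{s;\,j a}^{(m)}](\xi)=p^{j/2}\chi_p\big(p^{-j}a\,\xi\big)\,\Omega\big(|s+p^{-j}\xi|_p\big).
$$
Taking the product over $l=1,\dots,n$ and collecting the factors yields $p^{|j|/2}$ in front, the character $\chi_p(\widehat{p^{-j}}a\cdot\xi)$, and $\Omega(|s+\widehat{p^{-j}}\xi|_p)$, which is exactly~(\ref{o-64.8*}).

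Finally I would assemble the argument: by the $p$-adic Plancherel theorem $F$ is a unitary bijection of ${\cL}^2(\bQ_p^n)$ onto itself, so it carries the orthonormal basis of Theorem~\ref{th2.1} onto the orthonormal basis $\{F[\Theta_{s;\,j a}^{(m)\times}]\}=\{\widetilde\Theta_{s;\,j a}^{(m)\times}\}$, completing the proof. I do not expect a genuine obstacle here; the one point needing care is the bookkeeping of the powers $p^{\pm j/2}$, the Jacobian factor $|\det A|_p^{-1}$, and the character $\chi_p(-A^{-1}b\cdot\xi)$ when specializing~(\ref{14}) to $A=p^{j}$, and checking that the per-coordinate products assemble cleanly so that the one-variable computation really does suffice.
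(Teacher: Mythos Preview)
Your proposal is correct and follows essentially the same approach as the paper: compute $F[\Theta_{s;\,j a}^{(m)\times}]$ explicitly via the product structure, the affine rule~(\ref{14}), and~(\ref{14.1}), and then transport the orthonormal-basis property of Theorem~\ref{th2.1} through the unitary Fourier transform (the paper phrases this last step as an appeal to the Parseval formula). The only cosmetic difference is that the paper first computes $F[\Theta_s^{(m)\times}]$ in $n$ variables and then applies~(\ref{14}), whereas you reduce to $n=1$ first and take the product afterward; the content is identical.
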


\begin{proof}
Consider the function $\Theta_{s}^{(m)\times}(x)=\chi_p(s\cdot x)\Omega(|x|_p)$
generated by the direct product of functions (\ref{62.1-11}),
$x\in \bQ_p^n$, $s=(s_1,\dots,s_n)\in J_{p;m}^n$, $s_k\in J_{p;m_k}$,
$k=1,2,\dots,n$. Using (\ref{9}), (\ref{14.1}), (\ref{14}), we have
$$
F[\Theta_{s}^{(m)\times}(x)](\xi)
=F\Big[\prod_{k=1}^n\chi_p(x_ks_k)\Omega(|x_k|_p)\Big](\xi)
=\prod_{k=1}^nF\Big[\Omega(|x_k|_p)\Big](\xi_k+s_k|_p)
$$
\begin{equation}
\label{o-64.7}
=\prod_{k=1}^n\Omega\big(|\xi_k+s_{k}|_p\big)
=\Omega\big(|\xi+s|_p\big), \quad \xi\in \bQ_p^n.
\end{equation}
Here, in view of (\ref{9}), $\Omega\big(|\xi+s|_p\big)
=\Omega\big(|\xi_1+s_{1}|_p\big)\times\cdots\times
\Omega\big(|\xi_n+s_{n}|_p\big)$.
According to (\ref{62.0*}), $|s_{k}|_p=p^{m_k}$ and
$\Omega\big(|\xi_k+s_{k}|_p\big)\ne 0$ only if
$\xi_k=-s_{k}+\eta_k$, where $\eta_k\in \bZ_p$,
$s_k \in J_{p;m_k}$, \ $k=1,2,\dots,n$. This yields $\xi=-s+\eta$,
where $\eta \in \bZ_p^n$, $s\in J_{p;m}^n$, and in view
of (\ref{8}), $|\xi|_p=p^{\max\{m_1,\dots,m_n\}}$.

In view of formulas (\ref{62.8}), (\ref{o-64.7}), (\ref{14}), we have
$$
F[\Theta_{s;\, j a}^{(m)\times}(x)](\xi)
=p^{-|j|/2}F[\Theta_{s}^{(m)\times}(\widehat{p^{j}}x-a)](\xi)
\qquad\qquad\qquad\qquad\quad
$$
$$
=p^{|j|/2}\chi_p\big(\widehat{p^{-j}}a\cdot\xi\big)
\Omega\big(|s+\widehat{p^{-j}}\xi|_p\big),
$$
i.e., (\ref{o-64.8*}).

The formula (\ref{o-64.8*}), the Parseval formula~\cite[VII,(4.1)]{Vl-V-Z}, and
Theorem~\ref{th2.1} imply the statement.
\end{proof}

Similarly, one can construct $n$-dimensional non-Haar wavelet bases generated
by the one-dimensional non-Haar wavelets (\ref{109-11}) (as the $n$-direct product):
\begin{equation}
\label{62.8-1}
\Psi_{s;\,j a}^{(m)[\nu]\times}(x)=p^{-|j|/2}\psi_{s}^{(m_1)[\nu]}(p^{j_1}x_1-a_1)\cdots
\psi_{s}^{(m_n)[\nu]}(p^{j_n}x_n-a_n),
\end{equation}
where $x\in \bQ_p^n$ and $\psi_{s_{j_k}}^{(m_{j_k})[\nu]}$ is defined by (\ref{101-11}),
(\ref{108-11}), $j=(j_1,\dots,j_n)\in \bZ^n$; $|j|=j_1+\cdots+j_n$;
$a=(a_1,\dots,a_n)\in I_p^n$;
$s=(s_1,\dots,s_n)\in J_{p;m}^n$; $m=(m_1,\dots,m_n)$, $m_k\ge 1$ is
a {\it fixed} positive integer, $k=1,2,\dots,n$; \,$\nu=1,2,\dots$.

\subsection{$p$-Adic Lizorkin spaces and wavelets.}\label{s3.4}
Now we prove an analog of Lemma~\ref{lem-four-1} for the Lizorkin
test functions from $\Phi(\bQ_p^n)$.

\begin{Lemma}
\label{lem-w-1**}
Any function $\phi \in \Phi(\bQ_p^n)$ can be represented in the form of a {\em finite} sum
\begin{equation}
\label{wav-9.4=1}
\phi(x)=\sum_{s\in J_{p;m}^n,j\in \bZ^n,a\in I_p^n}c_{s;\,j, a}
\Theta_{s;\,j a}^{(m)\times}(x), \quad x \in \bQ_p^n,
\end{equation}
where $c_{s;\,j, a}$ are constants; $\Theta_{s;\,j a}^{(m)\times}(x)$ are elements
of the non-Haar wavelet basis {\rm(\ref{62.8})}; $s=(s_1,\dots,s_n)\in J_{p;m}^n$;
$j=(j_1,\dots,j_n)\in \bZ^n$, $|j|=j_1+\cdots+j_n$; $a=(a_1,\dots,a_n)\in I_p^n$;
$m=(m_1,\dots,m_n)$, $m_l\ge 1$ is a {\it fixed} positive integer, $l=1,2,\dots,n$.
\end{Lemma}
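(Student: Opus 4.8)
The plan is to reduce the $n$-dimensional statement to the one-dimensional case and then to use the orthonormality of the wavelet basis together with the fact that $\phi$ lies in a finite-dimensional subspace $\mathcal{D}^l_N(\bQ_p^n)$. First I would invoke Lemma~\ref{lem1}(b): since $\phi\in\Phi(\bQ_p^n)$ there are integers $l\le N$ with $\phi\in\mathcal{D}^l_N(\bQ_p^n)$ and $\int_{B^n_N}\phi\,d^nx=0$. By Lemma~\ref{lem-four-1}, $\phi$ is a finite linear combination of characteristic functions $\Delta_l(x-a^\nu)=\Omega(p^{-l}|x-a^\nu|_p)$ of balls $B^n_l(a^\nu)$, $\nu=1,\dots,p^{n(N-l)}$, covering $B^n_N$. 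So it suffices to expand each such characteristic function of a ball with respect to the wavelet system, and then collect terms — the expansion will automatically be finite because only finitely many $\Theta^{(m)\times}_{s;\,ja}$ have support meeting a fixed ball and non-vanishing inner product with $\Delta_l$.

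The key step is the one-dimensional computation: show that the characteristic function $\Omega(p^{-l}|x-a^\nu|_p)$ of a ball has a finite wavelet expansion $\sum_{s,j,a}c_{s;ja}\theta^{(m)}_{s;\,ja}(x)$, the sum being over $j$ in a bounded range (roughly $m\le j-(\text{scale of the ball})\le$ something finite) and $a\in I_p$ with only finitely many nonzero coefficients. Concretely, I would compute $\big(\Omega(p^{-l}|x-a^\nu|_p),\theta^{(m)}_{s;\,ja}(x)\big)$ using the change of variables and the identity (\ref{62.4-1}) exactly as in the proof of Theorem~\ref{th2} (the calculations (\ref{62.7-11}), (\ref{62.7-12}) are precisely this with $l=0$, $a^\nu=0$; the general case follows by a translation/dilatation). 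One finds the inner product vanishes unless $j$ lies in an explicit finite set and the ``position'' of $a$ is compatible with $a^\nu$, in which case it equals an explicit power of $p$ times a character value. Since $\theta^{(m)}_{s;\,ja}$ is an orthonormal basis and $\Omega(p^{-l}|x-a^\nu|_p)\in\cL^2(\bQ_p)$ with only finitely many nonzero Fourier–wavelet coefficients (because $\Omega(p^{-l}|x-a^\nu|_p)$ itself lies in the finite-dimensional span of the $\theta^{(m)}_{s;\,ja}$ with $j$ in a bounded window — this uses that $\theta^{(m)}_{s;\,ja}$ is supported in a ball of radius $p^{-j}$ and is constant on balls of radius $p^{-j-m}$), the expansion is a finite sum. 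For the $n$-dimensional case, (\ref{9}) lets me write $\Delta_l(x-a^\nu)=\prod_{k=1}^n\Omega(p^{-l}|x_k-a^\nu_k|_p)$ and $\Theta^{(m)\times}_{s;\,ja}$ is the corresponding product, so the one-dimensional finite expansion multiplies out to a finite $n$-dimensional one.

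The one subtlety I would be careful about is verifying that the resulting $n$-dimensional sum actually lands in $\Phi(\bQ_p^n)$ and not merely in $\mathcal{D}(\bQ_p^n)$ — but this is automatic, since every $\Theta^{(m)\times}_{s;\,ja}$ satisfies $\int_{\bQ_p^n}\Theta^{(m)\times}_{s;\,ja}\,d^nx=0$ (noted right after Theorem~\ref{th2.1}), so any finite linear combination has zero integral, and conversely we started from $\phi\in\Phi(\bQ_p^n)$ so no contradiction arises. The main obstacle is bookkeeping: pinning down exactly which triples $(s,j,a)$ contribute and confirming the range of $j$ is finite (bounded below because $\Omega(p^{-l}|x-a^\nu|_p)$ is locally constant with a fixed parameter of constancy, bounded above because it has compact support and the cancellation condition $\int\phi=0$ kills the top scale). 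Once the finiteness of the index set is established, the coefficients $c_{s;ja}$ are just the inner products $\big(\phi,\Theta^{(m)\times}_{s;\,ja}\big)$ and formula (\ref{wav-9.4=1}) follows from completeness of the basis (Theorem~\ref{th2.1}).
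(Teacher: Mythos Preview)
Your approach has a genuine gap: the claim that an individual characteristic function $\Omega(p^{-l}|x-a^\nu|_p)$ has a \emph{finite} wavelet expansion is false. The very computation you cite, (\ref{62.7-11})--(\ref{62.7-12}), shows that $\big(\Omega(|x|_p),\theta^{(m)}_{s;ja}\big)=p^{-j/2}$ for $a=0$ and \emph{every} $j\ge m$, so $\Omega(|x|_p)$ requires infinitely many wavelet terms (indeed, the Parseval check in Theorem~\ref{th2} sums the resulting series $\sum_{j\ge m}\sum_s p^{-j}$). This is no accident: characteristic functions of balls are not in $\Phi(\bQ_p)$ --- their integral is nonzero --- and the finiteness of the wavelet expansion is precisely a Lizorkin phenomenon. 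Decomposing $\phi$ first into balls via Lemma~\ref{lem-four-1} and then expanding each ball destroys the cancellation $\int\phi=0$ that you need; the tail in $j$ only disappears after you re-sum, and your outline does not explain how to control that.

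The paper avoids this entirely by passing to the Fourier side. Using Parseval and (\ref{o-64.8*}), the coefficient $c_{s;j,a}$ becomes an integral of $\psi=F^{-1}[\phi]$ against a character over the ball $\{|s+\widehat{p^{-j}}\xi|_p\le1\}$. By Lemma~\ref{lem1}(b), $\psi\in\Psi(\bQ_p^n)\cap\cD^{-N}_{-l}$ is supported in the \emph{annulus} $B^n_{-l}\setminus B^n_{-N}$ --- this is exactly where the Lizorkin condition enters --- and an easy support comparison forces all but finitely many $(s,j)$ to give zero. If you want to salvage the $x$-space route, you must argue directly with $\phi$ (not with individual balls): for instance, show that $\big(\phi,\Theta^{(m)\times}_{s;ja}\big)=0$ once every $j_k$ exceeds $N$ by using that $\Theta^{(m)\times}_{s;ja}$ is then constant on $\operatorname{supp}\phi\subset B^n_N$ and $\int\phi=0$. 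But that is a different argument from the one you sketched.
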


\begin{proof}
Let us calculate $\cL^2(\bQ_p^n)$-scalar product $(\phi(x),\Theta_{s;\,j a}^{(m)\times}(x))$.
Taking into account formula (\ref{o-64.8*}) and using the Parseval-Steklov theorem,
we obtain
$$
c_{s;\,j, a}=\big(\phi(x),\Theta_{s;\,j a}^{(m)\times}(x)\big)
=\big(F[\phi](\xi),F[\Theta_{s;\,j a}^{(m)\times}](\xi)\big)
$$
\begin{equation}
\label{wav-9.4=2}
=\big(\psi(\xi), p^{|j|/2}\chi_p\big(\widehat{p^{-j}}a\cdot\xi\big)
\Omega\big(|s+\widehat{p^{-j}}\xi|_p\big)\big),
\end{equation}
where $j\in \bZ^n$, $a\in I_p^n$, $s\in J_{p;m}^n$. Here,
according to Lemma~\ref{lem1}, any function
$\phi \in \Phi(\bQ_p^n)$ belongs to one of the spaces ${\cD}^l_N(\bQ_p^n)$,
$\psi=F^{-1}[\phi]\in \Psi(\bQ_p^n)\cap \cD^{-N}_{-l}(\bQ_p^n)$, and
${\rm supp}\,\psi\subset B^n_{-l}\setminus B^n_{-N}$.

Let $|s|_p\ne |\widehat{p^{-j}}\xi|_p$. Since $p^{-N}\le|\xi|_p\le p^{-l}$
and
$$
|s+\widehat{p^{-j}}\xi|_p
=\max(|s|_p,|\widehat{p^{-j}}\xi|_p)=\max\big(p^{\max(m_1,\dots,m_n)},\max(p^{j_k}|\xi_k|_p\big),
$$
in view of (\ref{wav-9.4=2}), it is clear that there are finite quantity of indexes
$s=(s_1,\dots,s_n)\in J_{p;m}^n$, $j=(j_1,\dots,j_n)\in \bZ^n$ such that $c_{s;\,j, a}\ne 0$.
The case $|s|_p=|\widehat{p^{-j}}\xi|_p=p^{\max(m_1,\dots,m_n)}$ can be considered
in the same way.

Thus equality (\ref{wav-9.4=1}) holds in the sense of ${\cL}^2(\bQ_p^n)$. Consequently,
it holds in the usual sense.
\end{proof}

Using standard results from the book~\cite{Schaefer} or repeating the
reasoning~\cite{Kh-Sh-Sm1}, \cite{Kh-Sh-Sm2} almost word for word,
we obtain the following assertion.

\begin{Proposition}
\label{pr-w-2**}
Any distribution $f \in \Phi'(\bQ_p^n)$ can be realized in the form of an
{\em infinite} sum of the form
\begin{equation}
\label{wav-9.4=3}
f(x)=\sum_{s\in J_{p;m}^n,j\in \bZ^n,a\in I_p^n}d_{s;\,j, a}
\Theta_{s;\,j a}^{(m)\times}(x), \quad x \in \bQ_p^n,
\end{equation}
where $d_{s;\,j, a}$ are constants; $\Theta_{s;\,j a}^{(m)\times}(x)$ are elements
of the non-Haar wavelet basis {\rm(\ref{62.8})}; $s=(s_1,\dots,s_n)\in J_{p;m}^n$;
$j=(j_1,\dots,j_n)\in \bZ^n$, $|j|=j_1+\cdots+j_n$; $a=(a_1,\dots,a_n)\in I_p^n$;
$m=(m_1,\dots,m_n)$, $m_l\ge 1$ is a {\it fixed} positive integer, $l=1,2,\dots,n$.
\end{Proposition}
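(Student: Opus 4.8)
The plan is to prove Proposition~\ref{pr-w-2**} by a duality argument. By Lemma~\ref{lem-w-1**} the non-Haar wavelets $\{\Theta_{s;ja}^{(m)\times}\}$ form a topological basis of the Lizorkin test space $\Phi(\bQ_p^n)$ in which every element is a \emph{finite} combination, so the dual space $\Phi'(\bQ_p^n)$ should be described by the corresponding coefficient functionals. Concretely, for $f\in\Phi'(\bQ_p^n)$ I would set
\[
 d_{s;j,a}=\big\langle f,\ \overline{\Theta_{s;ja}^{(m)\times}}\,\big\rangle ,
\]
which makes sense because complex conjugation sends $\chi_p(t)\mapsto\chi_p(-t)$ and hence maps the wavelet family onto itself, $\overline{\Theta_{s;ja}^{(m)\times}}=\Theta_{s^{*};ja}^{(m)\times}$ with $s^{*}=(-s_1\bmod\bZ_p,\dots,-s_n\bmod\bZ_p)\in J_{p;m}^{n}$; in particular $\overline{\Theta_{s;ja}^{(m)\times}}\in\Phi(\bQ_p^n)$ by the remark following Theorem~\ref{th2.1}. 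The goal is then to show $f=\sum_{s,j,a}d_{s;j,a}\Theta_{s;ja}^{(m)\times}$ in $\Phi'(\bQ_p^n)$.

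Two preliminary points will be settled first. (i) Convergence: for a \emph{fixed} $\phi\in\Phi(\bQ_p^n)$, the argument in the proof of Lemma~\ref{lem-w-1**} (the Fourier-side support bound, using $\operatorname{supp}F^{-1}[\phi]\subset B^n_{-l}\setminus B^n_{-N}$ and $|s+\widehat{p^{-j}}\xi|_p=\max(p^{\max m_k},\max_k p^{j_k}|\xi_k|_p)$) shows that $\int\Theta_{s;ja}^{(m)\times}(x)\phi(x)\,d^nx$ is nonzero for only finitely many triples $(s,j,a)$; hence the series, tested against any $\phi$, is a finite sum, and it therefore converges in $\Phi'(\bQ_p^n)$ automatically. (ii) Continuity of the coefficient functionals $\phi\mapsto(\phi,\Theta_{s;ja}^{(m)\times})$ on $\Phi(\bQ_p^n)$: since $\Phi(\bQ_p^n)=\varinjlim\big(\Phi(\bQ_p^n)\cap\cD^l_N(\bQ_p^n)\big)$ is an inductive limit of finite-dimensional spaces, on each of which every linear map is continuous, this follows from the universal property of the inductive limit. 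This is precisely the ``standard functional analysis'' that is invoked in the paper via \cite{Schaefer}, \cite{Kh-Sh-Sm1}, \cite{Kh-Sh-Sm2}.

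It then remains to verify the pairing identity $\langle f,\phi\rangle=\sum_{s,j,a}d_{s;j,a}\,\langle\Theta_{s;ja}^{(m)\times},\phi\rangle$ for every $\phi\in\Phi(\bQ_p^n)$. Writing $\phi=\sum_{\mu}c_{\mu}\Theta_{\mu}$ for the finite expansion of Lemma~\ref{lem-w-1**}, with $c_{\mu}=(\phi,\Theta_{\mu})$ and $\mu=(s,j,a)$, applying $f$ and using linearity gives $\langle f,\phi\rangle=\sum_{\mu}c_{\mu}\langle f,\Theta_{\mu}\rangle$. On the other hand $\langle\Theta_{\lambda},\phi\rangle=\int\Theta_{\lambda}\phi\,d^nx=\sum_{\mu}c_{\mu}\int\Theta_{\lambda}\Theta_{\mu}\,d^nx=\sum_{\mu}c_{\mu}\,(\Theta_{\lambda},\overline{\Theta_{\mu}})=c_{\lambda^{*}}$, using orthonormality of the $\Theta_{\mu}$ together with $\overline{\Theta_{\mu}}=\Theta_{\mu^{*}}$; substituting and reindexing by the involution $\mu\mapsto\mu^{*}$ turns $\sum_{\lambda}d_{\lambda}\langle\Theta_{\lambda},\phi\rangle$ into $\sum_{\mu}d_{\mu^{*}}c_{\mu}$, which coincides with $\langle f,\phi\rangle$ exactly because $d_{\mu^{*}}=\langle f,\overline{\Theta_{\mu^{*}}}\rangle=\langle f,\Theta_{\mu}\rangle$. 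Thus $f=\sum_{s,j,a} d_{s;j,a}\Theta_{s;ja}^{(m)\times}$ as elements of $\Phi'(\bQ_p^n)$. The only genuinely delicate point is the topological bookkeeping around the inductive-limit structure of $\Phi(\bQ_p^n)$ — continuity of the coefficient functionals and the sense of convergence of the infinite series — but this is all ``soft,'' resting on the fact that $\Phi(\bQ_p^n)$ is built from finite-dimensional blocks and that each test function interacts with only finitely many wavelets; the conjugation/reindexing $s\mapsto s^{*}$ is a cosmetic nuisance reflecting the clash between the bilinear distributional pairing and the sesquilinear $\cL^2$ product, and can be absorbed at the outset by indexing the wavelet family so that it is closed under conjugation.
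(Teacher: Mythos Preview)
Your approach is correct and is essentially the same as the paper's: both deduce the representation by duality from Lemma~\ref{lem-w-1**} (the finite wavelet expansion of test functions), the paper merely citing \cite{Schaefer}, \cite{Kh-Sh-Sm1}, \cite{Kh-Sh-Sm2} in lieu of the details you supply. Your handling of the bilinear/sesquilinear mismatch via the involution $s\mapsto s^{*}$ is more careful than the paper's sketch, which simply sets $d_{s;j,a}=\langle f,\Theta_{s;ja}^{(m)\times}\rangle$ without a conjugate and interprets the series through the formula $\langle f,\phi\rangle=\sum d_{s;j,a}c_{s;j,a}$; the two choices of coefficients differ precisely by your reindexing.
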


Here any distribution $f \in \Phi'(\bQ_p^n)$ is associated with the representation
(\ref{wav-9.4=3}), where the coefficients
\begin{equation}
\label{wav-9.4=4}
d_{s;\,j, a}\stackrel{def}{=}\bigl\langle f,\Theta_{s;\,j a}^{(m)\times}\bigr\rangle,
\quad s\in J_{p;m}^n, \quad j\in \bZ^n, \quad a\in I_p^n.
\end{equation}
And vice versa, taking into account
Lemma~\ref{lem-w-1**} and orthonormality of the wavelet basis (\ref{62.8}), any
infinite sum is associated with the distribution $f \in \Phi'(\bQ_p^n)$ whose
action on a test function $\phi \in \Phi(\bQ_p^n)$ is defined as
\begin{equation}
\label{wav-9.4=5}
\langle f,\phi\rangle=\sum_{s\in J_{p;m}^n,j\in \bZ^n,a\in I_p^n}d_{s;\,j, a}c_{s;\,j, a},
\end{equation}
where the sum is finite.

It is clear that in Lemma~\ref{lem-w-1**} and Proposition~\ref{pr-w-2**}
instead of the basis (\ref{62.1}) or its multidimensional generalization
(\ref{62.8}), one can use the bases (\ref{109-11}), (\ref{101-11}), (\ref{108-11})
and their multidimensional generalizations.

In~\cite{Al-Koz}, the assertions of the type of
Lemma~\ref{lem-w-1**} and Proposition~\ref{pr-w-2**} were stated for ultrametric
Lizorkin spaces.

\section{Spectral theory of $p$-adic pseudo-differential operators}
\label{s4}

\subsection{Pseudo-differential operators in the Lizorkin spaces.}\label{s4.1}
In this subsection we present some facts on pseudo-differential operators which
were introduced in~\cite{Al-Kh-Sh3},~\cite{Al-Kh-Sh4}. Consider a class of
pseudo-differential operators in the Lizorkin space $\Phi(\bQ_p^n)$
$$
(A\phi)(x)=F^{-1}\big[\cA(\cdot)\,F[\phi](\cdot)\big](x)
\qquad\qquad\qquad\qquad\qquad\qquad\qquad\qquad\qquad
$$
\begin{equation}
\label{64.3}
=\int_{\bQ_p^n}\int_{\bQ_p^n}\chi_p\big((y-x)\cdot \xi\big)
\cA(\xi)\phi(y)\,d^n\xi\,d^ny,
\quad \phi \in \Phi(\bQ_p^n)
\end{equation}
with symbols $\cA\in \cE(\bQ_p^n\setminus \{0\})$.

\begin{Remark}
\label{rem3} \rm
The class of operators (\ref{64.3}) includes the Taibleson fractional
operator with the symbol of the form $|\xi|^{\alpha}_p$ (see (\ref{59**}));
the Kochubei operator with the symbol of the form
$\cA(\xi)=|f(\xi_1,\dots,\xi_n)|_p^{\alpha}$, $\alpha> 0$, where
$f(\xi_1,\dots,\xi_n)$ is a quadratic form such that
$f(\xi_1,\dots,\xi_n)\ne 0$ when $|\xi_1|_p+\cdots |\xi_n|_p\ne 0$
(see~\cite{Koch3},~\cite{Koch4}); the Zuniga-Galindo operator
with the symbol of the form $\cA(\xi)=|f(\xi_1,\dots,\xi_n)|_p^{\alpha}$,
$\alpha> 0$, where $f(\xi_1,\dots,\xi_n)$ is a non-constant polynomial
(see~\cite{Z1},~\cite{Z2}).
\end{Remark}

If we define a conjugate pseudo-differential operator $A^{T}$ as
$$
(A^{T}\phi)(x)=F^{-1}[\cA(-\xi)F[\phi](\xi)](x)
=\int_{\bQ_p^n}\chi_p(-x\cdot \xi)\cA(-\xi)F[\phi](\xi)\,d^n\xi
$$
then one can define the operator $A$ in the Lizorkin space of
distributions: for $f \in \Phi'(\bQ_p^n)$ we have
\begin{equation}
\label{64.4}
\langle Af,\phi\rangle=\langle f,A^{T}\phi\rangle,
\qquad \forall \, \phi \in \Phi(\bQ_p^n).
\end{equation}

\begin{Lemma}
\label{lem4}
{\rm(~\cite{Al-Kh-Sh3})}
The Lizorkin spaces $\Phi(\bQ_p^n)$ and $\Phi'(\bQ_p^n)$ are invariant under the
pseudo-differential operators {\rm(\ref{64.3})}.
\end{Lemma}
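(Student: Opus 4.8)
\textbf{Proof plan for Lemma~\ref{lem4}.}
The plan is to reduce the invariance statement to a Fourier-side computation, exploiting the fact that on the Fourier side the operator $A$ acts by multiplication by the symbol $\cA$, and that the Fourier transform intertwines $\Phi(\bQ_p^n)$ with $\Psi(\bQ_p^n)$. First I would recall from Subsec.~\ref{s2.2} that $\phi\in\Phi(\bQ_p^n)$ exactly means $\psi:=F^{-1}[\phi]\in\Psi(\bQ_p^n)$, i.e.\ $\psi\in\cD(\bQ_p^n)$ with $\psi(0)=0$; by Lemma~\ref{lem1}(b) this is equivalent to $\psi\in\cD^{-N}_{-l}(\bQ_p^n)$ with $\psi$ vanishing on a neighbourhood $B^n_{-N}$ of the origin, for suitable $N\ge l$. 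By definition (\ref{64.3}), $F[A\phi](\xi)=\cA(\xi)\psi(\xi)$. The key observation is that since $\psi$ vanishes on $B^n_{-N}$, the product $\cA(\xi)\psi(\xi)$ never ``sees'' the singularity of $\cA$ at $0$: on the support of $\psi$, which is contained in the compact set $B^n_{-l}\setminus B^n_{-N}$, the symbol $\cA$ is locally constant, so $\cA\psi$ is again a locally constant function with compact support, i.e.\ an element of $\cD(\bQ_p^n)$, and it still vanishes on $B^n_{-N}$, hence on a neighbourhood of $0$. Therefore $\cA\psi\in\Psi(\bQ_p^n)$, and applying $F$ we get $A\phi=F[\cA\psi]\in\Phi(\bQ_p^n)$. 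This proves invariance of the test function space.

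For the space of distributions, I would argue by duality using (\ref{64.4}). The conjugate operator $A^{T}$ has symbol $\cA(-\xi)$, which again lies in $\cE(\bQ_p^n\setminus\{0\})$, so by the first part $A^{T}$ maps $\Phi(\bQ_p^n)$ continuously into itself. Hence for $f\in\Phi'(\bQ_p^n)$ the formula $\langle Af,\phi\rangle=\langle f,A^{T}\phi\rangle$ defines a well-defined continuous linear functional on $\Phi(\bQ_p^n)$, i.e.\ $Af\in\Phi'(\bQ_p^n)$. To make the duality argument fully rigorous one should note that $A^{T}$ is continuous in the topology of $\Phi(\bQ_p^n)$ (the topology induced from $\cD(\bQ_p^n)$): this follows because multiplication by the locally constant function $\cA(-\cdot)$ preserves each finite-dimensional piece $\cD^{-N}_{-l}(\bQ_p^n)\cap\Psi(\bQ_p^n)$ and $F$ is a topological isomorphism by Lemma~\ref{lem-four-2}.

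The only genuinely delicate point — and the one I expect to be the main obstacle — is the control of the singularity of $\cA$ at $0$; everything hinges on the Lizorkin condition $\psi(0)=0$, more precisely on $\psi$ vanishing on a \emph{whole ball} around $0$ rather than just at the single point $0$. That $\psi$ vanishes on a ball (not merely at the origin) is exactly what Lemma~\ref{lem1}(b) provides, and it is what guarantees that $\cA\psi$ is locally constant near $0$ (indeed identically zero there) despite $\cA$ being defined only on $\bQ_p^n\setminus\{0\}$ and possibly wildly behaved near $0$. Once this is in hand, the rest is bookkeeping with the spaces $\cD^l_N$ and the Fourier isomorphism of Lemma~\ref{lem-four-2}, with no further analytic difficulty.
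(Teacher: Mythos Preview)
Your proof is correct and follows essentially the same approach as the paper's: show that $\cA\cdot F[\phi]\in\Psi(\bQ_p^n)$ because $F[\phi]$ vanishes on a ball around the origin (so the possible singularity of $\cA$ at $0$ is never seen), hence $A\phi\in\Phi(\bQ_p^n)$, and then pass to $\Phi'(\bQ_p^n)$ by duality via (\ref{64.4}). One harmless notational slip: you set $\psi:=F^{-1}[\phi]$ but then write $F[A\phi](\xi)=\cA(\xi)\psi(\xi)$, which requires $\psi=F[\phi]$; since $F[\phi](\xi)=\psi(-\xi)$ also lies in $\Psi(\bQ_p^n)$, the argument is unaffected.
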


\begin{proof}
In view of (\ref{12}) and results of Subsec.~\ref{s2.2}, both functions
$F[\phi](\xi)$ and $\cA(\xi)F[\phi](\xi)$ belong to $\Psi(\bQ_p^n)$,
and, consequently, $(A\phi)(x)\in \Phi(\bQ_p^n)$, i.e.,
$A(\Phi(\bQ_p^n))\subset \Phi(\bQ_p^n)$. Thus the
pseudo-differential operators (\ref{64.3}) are well defined, and the
Lizorkin space $\Phi(\bQ_p^n)$ is invariant under them. In view of
(\ref{64.4}), the Lizorkin space of distributions $\Phi'(\bQ_p^n)$
is also invariant under pseudo-differential operator $A$.
\end{proof}

\subsection{The Taibleson fractional operator in the Lizorkin spaces.}\label{s4.2}
In particular, setting $\cA(\xi)=|\xi|_p^{\alpha}$, $\xi\in \bQ_p^n$, we obtain
the multi-dimensional Taibleson fractional operator. This operator was introduced
in~\cite[\S2]{Taib1},~\cite[III.4.]{Taib3} on the space of distributions
${\cD}'(\bQ_p^n)$ for $\alpha\in \bC$, $\alpha\ne -n$. Next, in~\cite{Al-Kh-Sh3},
the Taibleson fractional operator was defined and studied in the Lizorkin space of
distributions $\Phi'(\bQ_p^n)$ for all $\alpha\in \bC$.
According to (\ref{64.3}), (\ref{64.4}),
\begin{equation}
\label{61**}
\big(D^{\alpha}f\big)(x)
=F^{-1}\big[|\cdot|^{\alpha}_pF[f](\cdot)\big](x),
\quad f \in \Phi'(\bQ_p^n).
\end{equation}
Representation (\ref{61**}) can be rewritten as a convolution
\begin{equation}
\label{59**}
\big(D^{\alpha}f\big)(x)=\kappa_{-\alpha}(x)*f(x)
=\langle \kappa_{-\alpha}(x),f(x-\xi)\rangle,
\quad f\in \Phi'(\bQ_p^n), \quad \alpha \in \bC,
\end{equation}
where according to~\cite{Al-Kh-Sh3}, the multidimensional {\it Riesz kernel\/}
is given by the formula
$$
\kappa_{\alpha}(x)=\left\{
\begin{array}{lll}
\frac{|x|_p^{\alpha-n}}{\Gamma_p^{(n)}(\alpha)}, &&
\alpha \ne 0, \, \,  n, \\
\delta(x) && \alpha=0, \\
-\frac{1-p^{-n}}{\log p}\log|x|_p && \alpha=n \\
\end{array}
\right.
$$
the function $|x|_p$, \ $x\in \bQ_p^n$ is defined by (\ref{8}). Here the
multidimensional homogeneous distribution
$|x|_p^{\alpha-n}\in {\cD}'(\bQ_p^n)$ of degree $\alpha-n$ was defined
in~\cite[(*)]{Taib1},~\cite[III,(4.3)]{Taib3},~\cite[VIII,(4.2)]{Vl-V-Z},
$\Gamma^{(n)}_p(\alpha)$ is the $n$-dimensional $\Gamma$-{\it function\/}
defined in~\cite[Theorem~1.]{Taib1},~\cite[III,Theorem~(4.2)]{Taib3},
~\cite[VIII,(4.4)]{Vl-V-Z}.

According to Lemma~\ref{lem4} and (\ref{64.3}), (\ref{64.4}), the Lizorkin
space $\Phi'(\bQ_p^n)$ is invariant under the Taibleson fractional operator
$D^{\alpha}$  for all $\alpha \in \bC$~\cite{Al-Kh-Sh3}.

\subsection{$p$-Adic wavelets as eigenfunctions of pseudo-differential operators.}\label{s4.3}
As mentioned above in Sec.~\ref{s1}, it is typical that $p$-adic
compactly supported wavelets are eigenfunctions of $p$-adic pseudo-differential
operators. For example, in~\cite{Koz0} S.~V.~Kozyrev proved that wavelets
(\ref{62.0-1}) are eigenfunctions of the one-dimensional fractional operator
(\ref{61**}), (\ref{59**}) for $\alpha>0$:
\begin{equation}
\label{62.2-v}
D^{\alpha}\theta_{k;\,j a}(x)=p^{\alpha(1-j)}\theta_{k;\,j a}(x),
\quad x\in \bQ_p,
\end{equation}
where $k=1,2,\dots p-1$, \ $j\in \bZ$, $a\in I_p$.
Since wavelet functions (\ref{62.0-1}) belong to the Lizorkin space,
the relation (\ref{62.2-v}) holds for all $\alpha\in \bC$.

Now we study the spectral problem for pseudo-differential operators (\ref{64.3})
in connection with the wavelet functions (\ref{62.8}) and (\ref{62.8-1}).

\begin{Theorem}
\label{th4.1}
Let $A$ be a pseudo-differential operator {\rm(\ref{64.3})}
with a symbol $\cA(\xi)\in \cE(\bQ_p^n\setminus \{0\})$.
Then the $n$-dimensional non-Haar wavelet function {\rm (\ref{62.8})}
$$
\Theta_{s;\,j a}^{(m)\times}(x)=p^{-|j|/2}
\chi_p\big(s\cdot(\widehat{p^{j}}x-a)\big)
\Omega\big(|\widehat{p^{j}}x-a|_p\big), \quad x\in \bQ_p^n,
$$
is an eigenfunction of $A$ if and only if
\begin{equation}
\label{64.1***}
\cA\big(\widehat{p^{j}}(-s+\eta)\big)=\cA\big(-\widehat{p^{j}}s\big),
\qquad \forall \, \eta \in \bZ_p^n,
\end{equation}
where $j=(j_1,\dots,j_n)\in \bZ^n$; $a\in I_p^n$;
$s\in J_{p;m}^n$; and $m=(m_1,\dots,m_n)$, $m_j\ge 1$ is
a {\it fixed} positive integer, $j=1,2,\dots,n$.
The corresponding eigenvalue is $\lambda=\cA\big(-\widehat{p^{j}}s\big)$, i.e.,
$$
A\Theta_{s;\,j a}^{(m)\times}(x)
=\cA(-\widehat{p^{j}}s)\Theta_{s;\,j a}^{(m)\times}(x).
$$
Here the multi-dilatation is defined by {\rm(\ref{62.8-md})},
$I_p^n$ and $J_{p;m}^n$ are the $n$-direct products of the
corresponding sets {\rm(\ref{62.0**})} and {\rm(\ref{62.0*})}.
\end{Theorem}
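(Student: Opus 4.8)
The plan is to pass to the Fourier side, where by (\ref{64.3}) the operator $A$ acts as multiplication by the symbol $\cA$, and then to use the explicit formula for the Fourier transform of the wavelet established in Corollary~\ref{cor2}. Since $\Theta_{s;\,j a}^{(m)\times}\in\Phi(\bQ_p^n)$ (as noted right after Theorem~\ref{th2.1}), Lemma~\ref{lem4} guarantees that $A\Theta_{s;\,j a}^{(m)\times}$ is well defined and again lies in $\Phi(\bQ_p^n)$, so all the manipulations below take place inside the Lizorkin space.

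First I would write, using (\ref{64.3}) and (\ref{o-64.8*}),
$$
A\Theta_{s;\,j a}^{(m)\times}(x)
=F^{-1}\Big[\cA(\xi)\,p^{|j|/2}\chi_p\big(\widehat{p^{-j}}a\cdot\xi\big)
\Omega\big(|s+\widehat{p^{-j}}\xi|_p\big)\Big](x),
$$
and then analyse the support of this Fourier transform. The factor $\Omega\big(|s+\widehat{p^{-j}}\xi|_p\big)$ equals $1$ precisely when $\widehat{p^{-j}}\xi\in -s+\bZ_p^n$, i.e. when $\xi=\widehat{p^{j}}(-s+\eta)$ for some $\eta\in\bZ_p^n$, and vanishes otherwise; as a set this is $-\widehat{p^{j}}s+\widehat{p^{j}}\bZ_p^n$, a product of discs centered at $-\widehat{p^{j}}s$. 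This set does not contain the origin, since $|-s+\eta|_p=|s|_p=p^{\max(m_1,\dots,m_n)}>1$ for every $\eta\in\bZ_p^n$ (because $s_k\in J_{p;m_k}$ forces $|s_k|_p=p^{m_k}$). Hence $\cA\in\cE(\bQ_p^n\setminus\{0\})$ is defined at every point of ${\rm supp}\,F[\Theta_{s;\,j a}^{(m)\times}]$; it is locally constant but need not be constant on that set, so (\ref{64.1***}) is a genuine restriction on $\cA$.

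The key step is then the following equivalence. On ${\rm supp}\,F[\Theta_{s;\,j a}^{(m)\times}]$ the Fourier transform is nonzero at every point, so $\cA(\xi)F[\Theta_{s;\,j a}^{(m)\times}](\xi)$ equals $\cA\big(\widehat{p^{j}}(-s+\eta)\big)F[\Theta_{s;\,j a}^{(m)\times}](\xi)$. If $A\Theta_{s;\,j a}^{(m)\times}=\lambda\,\Theta_{s;\,j a}^{(m)\times}$, then $\cA(\xi)F[\Theta_{s;\,j a}^{(m)\times}](\xi)=\lambda\,F[\Theta_{s;\,j a}^{(m)\times}](\xi)$ forces $\cA$ to be the constant $\lambda$ on the whole product of discs; taking $\eta=0$ identifies $\lambda=\cA(-\widehat{p^{j}}s)$, which is precisely (\ref{64.1***}). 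Conversely, if (\ref{64.1***}) holds, then $\cA(\xi)F[\Theta_{s;\,j a}^{(m)\times}](\xi)=\cA(-\widehat{p^{j}}s)F[\Theta_{s;\,j a}^{(m)\times}](\xi)$ identically (both sides vanish off the product of discs), and applying $F^{-1}$ gives $A\Theta_{s;\,j a}^{(m)\times}=\cA(-\widehat{p^{j}}s)\,\Theta_{s;\,j a}^{(m)\times}$. The computation is routine once (\ref{o-64.8*}) is in hand; I expect the only points requiring care to be the exact description of ${\rm supp}\,F[\Theta_{s;\,j a}^{(m)\times}]$ and the check that it avoids $0$, and no serious obstacle beyond that, since the assertion amounts to saying that Fourier multiplication by $\cA$ preserves the line through $\Theta_{s;\,j a}^{(m)\times}$ if and only if $\cA$ is constant on ${\rm supp}\,F[\Theta_{s;\,j a}^{(m)\times}]$.
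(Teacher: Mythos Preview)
Your proposal is correct and follows essentially the same approach as the paper: both arguments pass to the Fourier side via (\ref{o-64.8*}), identify ${\rm supp}\,F[\Theta_{s;\,j a}^{(m)\times}]$ with the set $\{\widehat{p^{j}}(-s+\eta):\eta\in\bZ_p^n\}$, and reduce the eigenfunction relation to constancy of $\cA$ on that set. The only cosmetic difference is that the paper carries out the ``if'' direction by an explicit change of variables $\xi=\widehat{p^{j}}(\eta-s)$ in the inverse Fourier integral, whereas you argue the same point more conceptually.
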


\begin{proof}
Let condition (\ref{64.1***}) be satisfied. Then (\ref{64.3})
and the above formula (\ref{o-64.8*}) imply that
$$
A\Theta_{s;\,j a}^{(m)\times}(x)
=F^{-1}\big[\cA(\xi)F[\Theta_{s;\,j a}^{(m)\times}](\xi)\big](x)
\qquad\qquad\qquad\qquad\qquad\qquad\quad
$$
\begin{equation}
\label{o-64.9}
=p^{|j|/2}F^{-1}\big[\cA(\xi)\chi_p\big(\widehat{p^{-j}}a\cdot\xi\big)
\Omega\big(|s+\widehat{p^{-j}}\xi|_p\big)\big](x).
\end{equation}
Making the change of variables $\xi=\widehat{p^{j}}(\eta-s)$
and using (\ref{14.1}), we obtain
$$
A\Theta_{s;\,j a}^{(m)\times}(x)
=p^{-|j|/2}
\int\limits_{\bQ_p^n}\chi_p\big(-(\widehat{p^{j}}x-a)\cdot (\eta-s)\big)
\cA(\widehat{p^{j}}(\eta-s))\,\Omega(|\eta|_p)\,d^n\eta
$$
$$
\qquad\qquad\quad
=p^{-|j|/2}\cA(-\widehat{p^{j}}s)
\chi_p\big(s\cdot(\widehat{p^{j}}x-a)\big)
\int_{B_{0}^n}\chi_p(-(\widehat{p^{j}}x-a)\cdot\eta)\,d^n\eta
$$
$$
=\cA(-\widehat{p^{j}}s)\Theta_{s;\,j a}^{(m)}(x).
\qquad\qquad\qquad\qquad\qquad\qquad
$$
Consequently,
$A\Theta_{s;\,j a}^{(m)\times}(x)=\lambda\Theta_{s;\,j a}^{(m)\times}(x)$,
where $\lambda=\cA(-\widehat{p^{j}}s)$.

Conversely, if
$A\Theta_{s;\,j a}^{(m)\times}(x)=\lambda\Theta_{s;\,j a}^{(m)\times}(x)$,
$\lambda\in \bC$, taking the Fourier transform of
both left- and right-hand sides of this identity and
using (\ref{64.3}), (\ref{o-64.8*}), (\ref{o-64.9}), we have
\begin{equation}
\label{o-64.9==}
\big(\cA(\xi)-\lambda\big)\chi_p\big(\widehat{p^{-j}}a\cdot\xi\big)
\Omega\big(|s+\widehat{p^{-j}}\xi|_p\big)=0, \quad \xi\in \bQ_p^n.
\end{equation}
Now, if $s+\widehat{p^{-j}}\xi=\eta$, $\eta \in \bZ_p^n$, then
$\xi=\widehat{p^{j}}(-s+\eta)$. Since
$\chi_p\big(\widehat{p^{-j}}a\cdot\xi\big)\ne0$,
$\Omega\big(|s+\widehat{p^{-j}}\xi|_p\big)\ne0$,
it follows from (\ref{o-64.9==}) that
$\lambda=\cA\big(\widehat{p^{j}}(-s+\eta)\big)$ for any
$\eta \in\bZ_p^n$. Thus $\lambda=\cA(-\widehat{p^{j}}s)$, and,
consequently, (\ref{64.1***}) holds.

The proof of the theorem is complete.
\end{proof}

\begin{Corollary}
\label{cor3}
Let $A$ be a pseudo-differential operator {\rm(\ref{64.3})}
with the symbol $\cA(\xi)\in \cE(\bQ_p^n\setminus \{0\})$.
Then the $n$-dimensional wavelet function {\rm (\ref{64**})}
$$
\Theta_{k;j a}^{\times}(x)
=p^{-|j|/2}\chi_p\big(p^{-1}k\cdot(\widehat{p^{j}}x-a)\big)
\Omega\big(|\widehat{p^{j}}x-a|_p\big), \quad x\in \bQ_p^n,
$$
is an eigenfunction of $A$ if and only if
$$
\cA\big(\widehat{p^{j}}(-p^{-1}k+\eta)\big)
=\cA\big(-\widehat{p^{j-I}}k\big),
\qquad \forall \, \eta \in \bZ_p^n,
$$
where $k \in J_p^n$, $j\in \bZ^n$, $a\in I_p^n$, $I=(1,\dots,1)$.
The corresponding eigenvalue is $\lambda=\cA\big(-\widehat{p^{j-I}}j\big)$,
i.e.,
$$
A\Theta_{k;j a}^{\times}(x)=\cA(-\widehat{p^{j-I}}k)
\Theta_{k;j a}^{\times}(x).
$$
\end{Corollary}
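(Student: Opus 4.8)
The plan is to obtain Corollary~\ref{cor3} as the special case $m=(1,\dots,1)$ of Theorem~\ref{th4.1}, together with a change of bookkeeping that rewrites everything in terms of $k\in J_p^n$ rather than $s=p^{-1}k\in J_{p;1}^n$. First I would recall from Remark~\ref{rem1} (in its one-dimensional form, applied coordinatewise) and from the definition (\ref{64**}) that the Kozyrev wavelet $\Theta_{k;j a}^{\times}$ coincides with $\Theta_{s;\,j a}^{(m)\times}$ when $m_l=1$ for all $l$ and $s=\widehat{p^{-I}}k$, i.e. $s_l=p^{-1}k_l$, where $I=(1,\dots,1)$; here $k\in J_p^n$ exactly corresponds to $s\in J_{p;1}^n$ since $J_{p;1}=\{p^{-1}k:k=1,\dots,p-1\}$. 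Thus $\Theta_{k;j a}^{\times}$ is literally an element of the basis (\ref{62.8}) for this choice of parameters.

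Next I would invoke Theorem~\ref{th4.1} directly: $\Theta_{k;j a}^{\times}=\Theta_{s;\,j a}^{(1)\times}$ is an eigenfunction of $A$ if and only if $\cA\big(\widehat{p^{j}}(-s+\eta)\big)=\cA\big(-\widehat{p^{j}}s\big)$ for all $\eta\in\bZ_p^n$, with eigenvalue $\lambda=\cA\big(-\widehat{p^{j}}s\big)$. The only remaining task is to substitute $s=\widehat{p^{-I}}k$ and simplify the multi-dilatation composition. Since $\widehat{p^{j}}\,\widehat{p^{-I}}k=\widehat{p^{j-I}}k$ componentwise (the multi-dilatation (\ref{62.8-md}) just multiplies the $l$-th coordinate by a power of $p$, and these scalar multiplications commute and compose additively in the exponent), the condition becomes $\cA\big(\widehat{p^{j}}(-p^{-1}k+\eta)\big)=\cA\big(\widehat{p^{j}}(-\widehat{p^{-I}}k)\big)=\cA\big(-\widehat{p^{j-I}}k\big)$ for all $\eta\in\bZ_p^n$, and the eigenvalue is $\lambda=\cA\big(-\widehat{p^{j-I}}k\big)$, which is exactly the assertion of the corollary.

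There is essentially no hard step here: the corollary is a transcription of Theorem~\ref{th4.1}. The one point deserving a line of care is the identification $s\leftrightarrow k$ together with the equality $\widehat{p^{j}}\widehat{p^{-I}}=\widehat{p^{j-I}}$ of multi-dilatations, and checking that $\eta$ ranging over $\bZ_p^n$ is unaffected by this reparametrization (it is, since the translation by $\eta$ in the argument of $\cA$ after the change of variable $\xi=\widehat{p^{j}}(\eta-s)$ is independent of whether we write $s$ or $p^{-1}k$). I would therefore present the proof as: ``The statement follows from Theorem~\ref{th4.1} by setting $m=(1,\dots,1)$ and $s=\widehat{p^{-I}}k$, $k\in J_p^n$, and using $\widehat{p^{j}}\widehat{p^{-I}}k=\widehat{p^{j-I}}k$,'' possibly spelling out the substituted form of (\ref{64.1***}) and of the eigenvalue for the reader's convenience.
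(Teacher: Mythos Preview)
Your proposal is correct and matches the paper's approach: the paper states Corollary~\ref{cor3} immediately after Theorem~\ref{th4.1} with no separate proof, treating it as the direct specialization $m=(1,\dots,1)$, $s=\widehat{p^{-I}}k$ of that theorem. Your explicit verification of the identification $\Theta_{k;ja}^{\times}=\Theta_{s;\,ja}^{(1)\times}$ and of the multi-dilatation identity $\widehat{p^{j}}\widehat{p^{-I}}k=\widehat{p^{j-I}}k$ is exactly what is needed to carry out this substitution.
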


Representation (\ref{101-11}) and Theorem~\ref{th4.1} imply the following statement.

\begin{Theorem}
\label{th4.1-1}
Let $A$ be a pseudo-differential operator {\rm(\ref{64.3})}
with the symbol $\cA(\xi)\in \cE(\bQ_p^n\setminus \{0\})$.
Then the $n$-dimensional wavelet function {\rm (\ref{62.8-1})}
$$
\Psi_{s;\,j a}^{(m)[\nu]\times}(x)=p^{-|j|/2}\psi_{s}^{(m_1)[\nu]}(p^{j_1}x_1-a_1)\cdots
\psi_{s}^{(m_n)[\nu]}(p^{j_n}x_n-a_n), \quad x\in \bQ_p^n,
$$
is an eigenfunction of $A$ if and only if condition {\rm(\ref{64.1***})} holds,
where $\psi_{s_{j_k}}^{(m_{j_k})[\nu]}$ is defined by {\rm(\ref{101-11})},
{\rm(\ref{108-11})}, $j=(j_1,\dots,j_n)\in \bZ^n$; $|j|=j_1+\cdots+j_n$;
$a=(a_1,\dots,a_n)\in I_p^n$;
$s=(s_1,\dots,s_n)\in J_{p;m}^n$; $m=(m_1,\dots,m_n)$, $m_k\ge 1$ is
a {\it fixed} positive integer, $k=1,2,\dots,n$; \,$\nu=1,2,\dots$.
The corresponding eigenvalue is $\lambda=\cA(-\widehat{p^{j}}s)$,
i.e.,
$$
A\Psi_{s;\,j a}^{(m)[\nu]\times}(x)=\cA(-\widehat{p^{j}}s)
\Psi_{s;\,j a}^{(m)[\nu]\times}(x).
$$
\end{Theorem}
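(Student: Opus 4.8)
The plan is to deduce the statement from Theorem~\ref{th4.1} after writing $\Psi_{s;\,j a}^{(m)[\nu]\times}$ as a \emph{finite} linear combination of the basic wavelets~(\ref{62.8}). Substituting~(\ref{101-11}) into~(\ref{62.8-1}) and collecting terms one gets
$$
\Psi_{s;\,j a}^{(m)[\nu]\times}(x)
=\sum_{k}\Big(\prod_{l=1}^{n}\alpha_{s_l;k_l}\Big)\,
\Theta_{s;\,j\,a(k)}^{(m)\times}(x),
$$
where $k=(k_1,\dots,k_n)$ runs over $\{0,1,\dots,p^\nu-1\}^n$ and $a(k)=\bigl(a_1+k_1/p^\nu,\dots,a_n+k_n/p^\nu\bigr)\in I_p^n$. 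The decisive observation is that the criterion~(\ref{64.1***}) and the eigenvalue $\cA(-\widehat{p^{j}}s)$ produced by Theorem~\ref{th4.1} depend only on $s$ and $j$, not on the shift. Hence, for the sufficiency, if~(\ref{64.1***}) holds then Theorem~\ref{th4.1} applied to each summand gives $A\Theta_{s;\,j\,a(k)}^{(m)\times}=\cA(-\widehat{p^{j}}s)\,\Theta_{s;\,j\,a(k)}^{(m)\times}$, and summing against the coefficients $\prod_l\alpha_{s_l;k_l}$ and using linearity of $A$ yields $A\Psi_{s;\,j a}^{(m)[\nu]\times}=\cA(-\widehat{p^{j}}s)\,\Psi_{s;\,j a}^{(m)[\nu]\times}$.

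For the necessity I would pass to the Fourier side. Applying $F$ termwise to the expansion above and using~(\ref{o-64.8*}), $F\bigl[\Psi_{s;\,j a}^{(m)[\nu]\times}\bigr]$ is supported on $\{\xi=\widehat{p^{j}}(-s+\eta):\eta\in\bZ_p^n\}$ and on that set equals $p^{|j|/2}\chi_p\bigl(\widehat{p^{-j}}a\cdot\xi\bigr)\prod_{l=1}^{n}h_{s_l}(\eta_l)$, where $h_{s_l}(\eta_l)=\sum_{k=0}^{p^\nu-1}\alpha_{s_l;k}\,\chi_p\bigl((k/p^\nu)(-s_l+\eta_l)\bigr)$ depends only on $\eta_l$ modulo $p^\nu\bZ_p$. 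If $A\Psi_{s;\,j a}^{(m)[\nu]\times}=\lambda\Psi_{s;\,j a}^{(m)[\nu]\times}$, then by~(\ref{64.3}) we have $\bigl(\cA(\xi)-\lambda\bigr)F\bigl[\Psi_{s;\,j a}^{(m)[\nu]\times}\bigr](\xi)=0$, so for every $\eta\in\bZ_p^n$
$$
\bigl(\cA(\widehat{p^{j}}(-s+\eta))-\lambda\bigr)\prod_{l=1}^{n}h_{s_l}(\eta_l)=0 .
$$
Now~(\ref{108-11}) inverts the relation $\gamma_{s_l;r}=\sum_{k=0}^{p^\nu-1}\alpha_{s_l;k}\,\chi_p\bigl((k/p^\nu)(-s_l+r)\bigr)$, so $h_{s_l}(\eta_l)=\gamma_{s_l;r}$ whenever $\eta_l\equiv r\pmod{p^\nu}$, whence $|h_{s_l}(\eta_l)|=1$ for all $\eta_l$ (equivalently, the non-vanishing of $h_{s_l}$ on this ball is the Plancherel reformulation of the orthonormality of $\{\psi_{s_l}^{(m_l)[\nu]}(\cdot-c):c\in I_p\}$ furnished by Theorem~\ref{th4-11}). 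Therefore $\cA(\widehat{p^{j}}(-s+\eta))=\lambda$ for all $\eta\in\bZ_p^n$; the choice $\eta=0$ gives $\lambda=\cA(-\widehat{p^{j}}s)$, and the resulting identity is exactly~(\ref{64.1***}).

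The expansion and the termwise Fourier computation are routine, mirroring the proof of Theorem~\ref{th4.1}. The one genuinely load-bearing step, and the point I expect to be the main obstacle, is the non-vanishing of $\prod_l h_{s_l}$ on the whole support of $F\bigl[\Psi_{s;\,j a}^{(m)[\nu]\times}\bigr]$: without it the eigenvalue equation would constrain $\cA$ only on part of the ball $\widehat{p^{j}}(-s+\bZ_p^n)$ and the implication toward~(\ref{64.1***}) would fail. This is where the explicit structure of the coefficients~(\ref{108-11}) (the unit moduli $|\gamma_{s_l;r}|=1$) has to be used.
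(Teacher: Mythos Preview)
Your argument is correct and follows exactly the route the paper indicates: the paper's entire proof is the single sentence ``Representation~(\ref{101-11}) and Theorem~\ref{th4.1} imply the following statement,'' and you have carried out precisely that reduction, with the added merit of spelling out the necessity direction (non-vanishing of $\prod_l h_{s_l}$ on the full Fourier support via~(\ref{108-11})), which the paper leaves implicit. One small cosmetic point: the shifted parameters $a(k)=a+(k_1/p^\nu,\dots,k_n/p^\nu)$ need not literally lie in $I_p^n$ because $p$-adic carries can push part of the sum into $\bZ_p^n$; this is harmless, since either one observes that the computation in the proof of Theorem~\ref{th4.1} nowhere uses $a\in I_p^n$, or one invokes the periodicity~(\ref{100-11}) to absorb the integer part into a unimodular phase.
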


\subsection{$p$-Adic wavelets as eigenfunctions of the Taibleson fractional operator.}\label{s4.4}
As mentioned above, the Taibleson fractional operator $D^{\alpha}$
has the symbol $\cA(\xi)=|\xi|_p^{\alpha}$.
The symbol $\cA(\xi)=|\xi|_p^{\alpha}$ satisfies the condition (\ref{64.1***}):
$$
\cA\big(\widehat{p^{j}}(-s+\eta)\big)
=|\widehat{p^{j}}(-s+\eta)|_p^{\alpha}
=\Big(\max_{1\le r\le n}\big(p^{-j_r}|-s_r|_p\big)\Big)^{\alpha}
\qquad
$$
$$
\quad
=\cA\big(-\widehat{p^{j}}s\big)
=p^{\alpha\max_{1\le r\le n}\{m_r-j_r\}}
$$
for all $\eta \in \bZ_p^n$. Consequently, according to Theorem~\ref{th4.1},
we have

\begin{Corollary}
\label{cor5}
The $n$-dimensional non-Haar $p$-adic wavelet {\rm(\ref{62.8})} is an
eigenfunction of the Taibleson fractional operator {\rm (\ref{59**})}:
$$
D^{\alpha}\Theta_{s;\,j a}^{(m)\times}(x)
=p^{\alpha\max_{1\le r\le n}\{m_r-j_r\}}
\Theta_{s;\,j a}^{(m)\times}(x), \quad \alpha \in\bC, \qquad x\in \bQ_p^n,
$$
$s \in J_{p;m}^n$, $j\in \bZ^n$, $a\in I_p^n$.
\end{Corollary}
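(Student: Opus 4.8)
The plan is to deduce the corollary directly from Theorem~\ref{th4.1} once the Taibleson operator is identified with one of the operators (\ref{64.3}). First I would recall from Subsec.~\ref{s4.2} that $D^{\alpha}$ is precisely the pseudo-differential operator (\ref{64.3}) with symbol $\cA(\xi)=|\xi|_p^{\alpha}$. Since $|\cdot|_p$ takes only the discrete values $p^{k}$, $k\in\bZ$, on $\bQ_p^n\setminus\{0\}$, the map $\xi\mapsto|\xi|_p^{\alpha}$ is locally constant there, so $\cA\in\cE(\bQ_p^n\setminus\{0\})$ for \emph{every} $\alpha\in\bC$; hence Theorem~\ref{th4.1} applies with no restriction on $\alpha$, and it only remains to check the criterion (\ref{64.1***}) for this symbol and to read off the eigenvalue.

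The one computation to perform is that of $|\widehat{p^{j}}(-s+\eta)|_p$ for $s\in J_{p;m}^n$, $\eta\in\bZ_p^n$. By (\ref{62.0*}) each component satisfies $|s_r|_p=p^{m_r}$ with $m_r\ge 1$, so $|s_r|_p\ge p>1\ge|\eta_r|_p$, and the strong triangle inequality forces $|-s_r+\eta_r|_p=|s_r|_p=p^{m_r}$. Using the definition (\ref{8}) of the norm on $\bQ_p^n$ together with the multi-dilatation (\ref{62.8-md}),
\[
\cA\big(\widehat{p^{j}}(-s+\eta)\big)
=\Big(\max_{1\le r\le n}p^{-j_r}|-s_r+\eta_r|_p\Big)^{\alpha}
=p^{\alpha\max_{1\le r\le n}\{m_r-j_r\}},
\]
which is independent of $\eta$ and equals $\cA(-\widehat{p^{j}}s)$; this is exactly (\ref{64.1***}).

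Theorem~\ref{th4.1} then yields at once that $\Theta_{s;\,j a}^{(m)\times}$ is an eigenfunction of $D^{\alpha}$ with eigenvalue $\lambda=\cA(-\widehat{p^{j}}s)=p^{\alpha\max_{1\le r\le n}\{m_r-j_r\}}$, which is the asserted identity. If one preferred not to invoke Theorem~\ref{th4.1}, the same conclusion follows by computing $F^{-1}\big[|\xi|_p^{\alpha}F[\Theta_{s;\,j a}^{(m)\times}](\xi)\big]$ directly from the explicit Fourier transform (\ref{o-64.8*}), using that $|\xi|_p$ is constant, equal to $p^{\max_r\{m_r-j_r\}}$, on the support $\widehat{p^{j}}(-s+\bZ_p^n)$ of $F[\Theta_{s;\,j a}^{(m)\times}]$ --- this is just the proof of Theorem~\ref{th4.1} specialized to this symbol.

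I do not expect a genuine obstacle here: the whole content is the elementary ultrametric estimate $|-s_r+\eta_r|_p=|s_r|_p$, which \emph{crucially} uses $m_r\ge 1$ (for $m_r=0$ there could be cancellation and the corollary would fail), plus bookkeeping of the max-norm under the multi-dilatation. The only point worth stressing is that nothing in the argument restricts $\alpha$ to positive reals, so the eigenvalue relation holds for all $\alpha\in\bC$, in keeping with the fact (Lemma~\ref{lem4}, Subsec.~\ref{s4.2}) that $D^{\alpha}$ is well defined and leaves the Lizorkin space invariant for every complex $\alpha$.
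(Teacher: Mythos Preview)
Your proof is correct and follows exactly the paper's approach: verify condition (\ref{64.1***}) for the symbol $\cA(\xi)=|\xi|_p^{\alpha}$ via the ultrametric computation $|-s_r+\eta_r|_p=|s_r|_p=p^{m_r}$, then invoke Theorem~\ref{th4.1}. The additional remarks you supply (local constancy of the symbol, the role of $m_r\ge 1$, the alternative direct computation from (\ref{o-64.8*})) are accurate but go slightly beyond what the paper records.
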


In particular, in view of Corollary~\ref{cor3}, we have

\begin{Corollary}
\label{cor6}
The $n$-dimensional $p$-adic wavelet {\rm(\ref{64**})} is an
eigenfunction of the Taibleson fractional operator {\rm (\ref{59**})}:
$$
D^{\alpha}\Theta_{k;\, j a}^{\times}(x)
=p^{\alpha(1-\min_{1\le r\le n}j_r)}\Theta_{k;\, j a}^{\times}(x),
\quad \alpha \in\bC, \quad x\in \bQ_p^n,
$$
$k \in J_p^n$, $j\in \bZ^n$, $a\in I_p^n$.
\end{Corollary}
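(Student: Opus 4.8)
The plan is to obtain Corollary~\ref{cor6} as the special case $m=(1,\dots,1)$ of Corollary~\ref{cor5}, equivalently by applying Corollary~\ref{cor3} to the symbol $\cA(\xi)=|\xi|_p^{\alpha}$ of the Taibleson operator $D^{\alpha}$. First I would record that the wavelet (\ref{64**}) is exactly $\Theta_{k;j a}^{\times}(x)=\Theta_{s;\,j a}^{(m)\times}(x)$ with $m=(1,\dots,1)$ and $s=p^{-1}k=(p^{-1}k_1,\dots,p^{-1}k_n)$, $k_r\in\{1,\dots,p-1\}$; this is precisely the identification already used in Corollary~\ref{cor3-1}. In particular $|s_r|_p=|p^{-1}k_r|_p=p$, so each $m_r=1$.

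Next I would check the eigenfunction criterion. By Corollary~\ref{cor3} it suffices to verify $\cA(\widehat{p^{j}}(-p^{-1}k+\eta))=\cA(-\widehat{p^{j-I}}k)$ for all $\eta\in\bZ_p^n$, with $\cA(\xi)=|\xi|_p^{\alpha}$. Coordinatewise, $|\eta_r|_p\le 1<p=|p^{-1}k_r|_p$, so the strong triangle inequality gives $|-p^{-1}k_r+\eta_r|_p=p$; applying the multi-dilatation and (\ref{8}) yields $|\widehat{p^{j}}(-p^{-1}k+\eta)|_p=\max_{1\le r\le n}\bigl(p^{-j_r}p\bigr)=p^{1-\min_{1\le r\le n}j_r}$, which is independent of $\eta$. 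Hence the criterion holds, and the eigenvalue is $\lambda=\cA(-\widehat{p^{j-I}}k)=\bigl(p^{1-\min_{1\le r\le n}j_r}\bigr)^{\alpha}=p^{\alpha(1-\min_{1\le r\le n}j_r)}$, valid for every $\alpha\in\bC$ because the wavelet lies in the Lizorkin space $\Phi(\bQ_p^n)$, on which $D^{\alpha}$ is defined for all $\alpha$ (Subsec.~\ref{s4.2}).

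Finally, as a consistency check I would match this against Corollary~\ref{cor5}: with $m_r=1$ the eigenvalue there, $p^{\alpha\max_{1\le r\le n}\{m_r-j_r\}}$, becomes $p^{\alpha\max_{1\le r\le n}\{1-j_r\}}=p^{\alpha(1-\min_{1\le r\le n}j_r)}$, the same value. There is essentially no genuine obstacle here; the statement is a direct specialization of the preceding corollary, and the only points needing (routine) care are the non-Archimedean evaluation $|-p^{-1}k_r+\eta_r|_p=p$ and the elementary identity $\max_{1\le r\le n}\{1-j_r\}=1-\min_{1\le r\le n}j_r$.
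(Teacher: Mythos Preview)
Your proposal is correct and follows exactly the paper's approach: the paper simply notes that Corollary~\ref{cor6} follows ``in view of Corollary~\ref{cor3}'' (equivalently, as the $m=(1,\dots,1)$ case of Corollary~\ref{cor5}), and your computation of the eigenvalue via $|{-}p^{-1}k_r+\eta_r|_p=p$ and $\max_r\{1-j_r\}=1-\min_r j_r$ is precisely the verification this entails.
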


\begin{Corollary}
\label{cor7}
The $n$-dimensional $p$-adic wavelet {\rm(\ref{62.8-1})} is an
eigenfunction of the Taibleson fractional operator {\rm (\ref{59**})}:
$$
D^{\alpha}\Psi_{s;\,j a}^{(m)[\nu]\times}(x)
=p^{\alpha\max_{1\le r\le n}\{m_r-j_r\}}\Psi_{s;\,j a}^{(m)[\nu]\times}(x),
\quad \alpha \in\bC, \quad x\in \bQ_p^n,
$$
$s \in J_{p;m}^n$, $j\in \bZ^n$, $a\in I_p^n$,
\,$\nu=1,2,\dots$.
\end{Corollary}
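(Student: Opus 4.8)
\emph{Proof proposal.}
The plan is to obtain Corollary~\ref{cor7} as an immediate consequence of Theorem~\ref{th4.1-1}: it suffices to check that the symbol $\cA(\xi)=|\xi|_p^{\alpha}$ of the Taibleson fractional operator $D^{\alpha}$ satisfies the eigenfunction criterion~(\ref{64.1***}), and then to read off the eigenvalue as $\cA(-\widehat{p^{j}}s)$. Since $|\xi|_p$ is constant on every sphere $S_{\gamma}^n$, the function $\xi\mapsto|\xi|_p^{\alpha}$ belongs to $\cE(\bQ_p^n\setminus\{0\})$ for every $\alpha\in\bC$, so $D^{\alpha}$ is a pseudo-differential operator of the form~(\ref{64.3}) on the Lizorkin space and Theorem~\ref{th4.1-1} applies for all complex $\alpha$ (not merely $\alpha>0$), exactly as in the passage preceding Corollary~\ref{cor5}.

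The one computation to carry out is the verification of~(\ref{64.1***}). Fix $s=(s_1,\dots,s_n)\in J_{p;m}^n$ and $\eta=(\eta_1,\dots,\eta_n)\in\bZ_p^n$. By~(\ref{62.0*}) one has $|s_r|_p=p^{m_r}\ge p>1\ge|\eta_r|_p$ for each $r$, so the strong triangle inequality forces $|-s_r+\eta_r|_p=|s_r|_p=p^{m_r}$. Combining this with~(\ref{8}) and~(\ref{62.8-md}) gives
$$
|\widehat{p^{j}}(-s+\eta)|_p=\max_{1\le r\le n}p^{-j_r}|-s_r+\eta_r|_p
=\max_{1\le r\le n}p^{m_r-j_r}=|-\widehat{p^{j}}s|_p ,
$$
and raising to the power $\alpha$ yields
$\cA\big(\widehat{p^{j}}(-s+\eta)\big)=\cA\big(-\widehat{p^{j}}s\big)=p^{\alpha\max_{1\le r\le n}\{m_r-j_r\}}$
for every $\eta\in\bZ_p^n$, which is precisely~(\ref{64.1***}).

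Theorem~\ref{th4.1-1} then gives that $\Psi_{s;\,j a}^{(m)[\nu]\times}$ is an eigenfunction of $D^{\alpha}$ with eigenvalue $\cA\big(-\widehat{p^{j}}s\big)=p^{\alpha\max_{1\le r\le n}\{m_r-j_r\}}$, which is the claimed identity. There is no real obstacle here: the argument is entirely a reduction to Theorem~\ref{th4.1-1}, and the only substantive point is the ultrametric observation that translating a vector all of whose coordinates have norm $\ge p$ by an element of $\bZ_p^n$ leaves its $p$-adic norm unchanged — this is what makes the symbol $|\xi|_p^{\alpha}$ constant on the relevant ball $-\widehat{p^{j}}s+\widehat{p^{j}}\bZ_p^n$ and thereby forces~(\ref{64.1***}). (One could instead run a direct Fourier-side computation mimicking the proof of Theorem~\ref{th4.1} starting from representation~(\ref{101-11}), but, since $D^{\alpha}$ on $\bQ_p^n$ is not a tensor product of one-dimensional fractional operators, the route through Theorem~\ref{th4.1-1} is the clean one.)
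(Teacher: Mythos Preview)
Your proof is correct and follows essentially the same route as the paper: the paper verifies condition~(\ref{64.1***}) for the symbol $|\xi|_p^{\alpha}$ in the lines preceding Corollary~\ref{cor5}, and Corollary~\ref{cor7} then follows immediately from Theorem~\ref{th4.1-1} with the same eigenvalue computation. The paper does not write out a separate proof of Corollary~\ref{cor7} at all; your argument makes explicit exactly what the paper leaves to the reader.
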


\section{Application of $p$-adic wavelets to evolutionary pseudo-differential equations}
\label{s5}

\subsection{Linear equations.}\label{s5.1}
{\bf (a)} Let us consider the Cauchy problem for the {\em linear evolutionary
pseudo-differential equation}
\begin{equation}
\label{eq-70}
\left\{
\begin{array}{rclrcl}
\frac{\partial u(x,t)}{\partial t}+A_xu(x,t)&=&0,
&&\text{in} \quad \bQ_p^n\times (0, \ \infty), \medskip \\
u(x,t)&=&u^0(x),
&&\text{in} \quad \bQ_p^n\times \{t=0\}, \\
\end{array}
\right.
\end{equation}
where $t\in \bR$, $u^0\in \Phi'(\bQ_p^n)$ and
\begin{equation}
\label{eq-70-op}
A_xu(x,t)=F^{-1}\big[\cA(\xi)\,F[u(\cdot,t)](\xi)\big](x)
\end{equation}
is a pseudo-differential operator (\ref{64.3}) (with
respect to $x$) with symbols $\cA(\xi)\in \cE(\bQ_p^n\setminus \{0\})$,
$u(x,t)$ is the desired distribution such that $u(x,t)\in \Phi'(\bQ_p^n)$
for any $t\ge 0$.

In particular, we will consider the Cauchy problem
\begin{equation}
\label{eq-70-c}
\left\{
\begin{array}{rclrcl}
\frac{\partial u(x,t)}{\partial t}+D^{\alpha}_xu(x,t)&=&0,
&&\text{in} \quad \bQ_p^n\times (0, \ \infty), \medskip \\
u(x,t)&=&u^0(x),
&&\text{in} \quad \bQ_p^n\times \{t=0\}, \\
\end{array}
\right.
\end{equation}
where $D^{\alpha}_xu(x,t)=F^{-1}\big[|\xi|_p^{\alpha}\,F[u(\cdot,t)](\xi)\big](x)$
is the Taibleson fractional operator (\ref{61**})
with respect to $x$, \ $\alpha\in\bC$.

\begin{Theorem}
\label{eq-th5.1}
The Cauchy problem {\rm (\ref{eq-70})} has a unique solution
\begin{equation}
\label{eq-71}
u(x,t)=F^{-1}\big[F[u^{0}(\cdot)](\xi)e^{-\cA(\xi)\,t}\big](x).
\end{equation}
\end{Theorem}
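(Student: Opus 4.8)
The plan is to reduce the problem to a family of scalar ordinary differential equations by means of the Fourier transform in the space variable. Put $\hat u(\xi,t)=F[u(\cdot,t)](\xi)$ and $\hat u^0=F[u^0]$. Applying $F$ (with respect to $x$) to (\ref{eq-70}) and using (\ref{eq-70-op}), the Cauchy problem becomes $\partial_t\hat u(\xi,t)+\cA(\xi)\hat u(\xi,t)=0$ with $\hat u(\xi,0)=\hat u^0(\xi)$, whose solution is $\hat u(\xi,t)=\hat u^0(\xi)\,e^{-\cA(\xi)t}$; applying $F^{-1}$ then gives (\ref{eq-71}). Before verifying this, one must check that every operation is meaningful in $\Phi'(\bQ_p^n)$, respectively $\Psi'(\bQ_p^n)$: since $\cA\in\cE(\bQ_p^n\setminus\{0\})$, the function $\xi\mapsto e^{-\cA(\xi)t}$ lies in $\cE(\bQ_p^n\setminus\{0\})$ for each $t$, and any $\psi\in\Psi(\bQ_p^n)$ is compactly supported away from $0$, where $\cA$, being locally constant, is bounded; hence $\psi\mapsto e^{-\cA(\cdot)t}\psi$ is a well-defined linear bijection of $\Psi(\bQ_p^n)$ onto itself (with inverse $\psi\mapsto e^{\cA(\cdot)t}\psi$), and therefore also of $\Psi'(\bQ_p^n)$. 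In particular $u(\cdot,t):=F^{-1}\big[\hat u^0\,e^{-\cA(\cdot)t}\big]\in\Phi'(\bQ_p^n)$ is well defined for every $t\ge0$, and, because we work in the Lizorkin framework, no growth hypothesis on $\cA$ is needed.

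For existence I would show that $u$ given by (\ref{eq-71}) satisfies (\ref{eq-70}). The initial condition is immediate, since $e^{-\cA(\cdot)\cdot 0}\equiv 1$ forces $u(\cdot,0)=F^{-1}[F[u^0]]=u^0$. For the equation, fix $\phi\in\Phi(\bQ_p^n)$ and set $\psi_0=F^{-1}[\phi]\in\Psi(\bQ_p^n)$; then $\langle u(\cdot,t),\phi\rangle=\langle\hat u^0,\,e^{-\cA(\cdot)t}\psi_0\rangle$, while (\ref{eq-70-op}) gives $\langle A_xu(\cdot,t),\phi\rangle=\langle\hat u^0,\,\cA(\cdot)\,e^{-\cA(\cdot)t}\psi_0\rangle$. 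It remains to differentiate $t\mapsto\langle\hat u^0,\,e^{-\cA(\cdot)t}\psi_0\rangle$; this is legitimate because $\psi_0$ lies in some finite-dimensional space $\cD^l_N(\bQ_p^n)$ whose elements are supported where $\cA$ has a fixed parameter of constancy, so the whole family $\{e^{-\cA(\cdot)t}\psi_0\}_{t\ge0}$ lies in one finite-dimensional space, on which differentiation in $t$ is taken coordinatewise with derivative $-\cA(\cdot)\,e^{-\cA(\cdot)t}\psi_0$; applying the continuous functional $\hat u^0$ yields $\frac{d}{dt}\langle u(\cdot,t),\phi\rangle=-\langle\hat u^0,\,\cA(\cdot)\,e^{-\cA(\cdot)t}\psi_0\rangle=-\langle A_xu(\cdot,t),\phi\rangle$. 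Since $\phi$ was arbitrary, $\partial_tu+A_xu=0$ in $\Phi'(\bQ_p^n)$.

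For uniqueness, let $w=u_1-u_2$ be the difference of two solutions, so that $w(\cdot,t)\in\Phi'(\bQ_p^n)$, $\partial_tw+A_xw=0$ and $w(\cdot,0)=0$; applying $F$ gives $\hat w(\cdot,t)\in\Psi'(\bQ_p^n)$ with $\partial_t\hat w+\cA\hat w=0$ and $\hat w(\cdot,0)=0$, i.e. $\langle\partial_t\hat w(\cdot,t),\eta\rangle=-\langle\hat w(\cdot,t),\cA(\cdot)\eta\rangle$ for every $\eta\in\Psi(\bQ_p^n)$. Fix $\psi\in\Psi(\bQ_p^n)$ and consider $g(t)=\langle\hat w(\cdot,t),\,e^{\cA(\cdot)t}\psi\rangle$. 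Differentiating the pairing --- both slots now depend on $t$, but $t\mapsto e^{\cA(\cdot)t}\psi$ is differentiable with values in a fixed finite-dimensional subspace of $\Psi(\bQ_p^n)$ and $t\mapsto\hat w(\cdot,t)$ is differentiable by hypothesis --- and using the relation above with $\eta=e^{\cA(\cdot)t}\psi$, one obtains $g'(t)=\langle\partial_t\hat w(\cdot,t),e^{\cA(\cdot)t}\psi\rangle+\langle\hat w(\cdot,t),\cA(\cdot)e^{\cA(\cdot)t}\psi\rangle=0$. Hence $g(t)\equiv g(0)=\langle\hat w(\cdot,0),\psi\rangle=0$ for all $t$. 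Since $\psi\mapsto e^{\cA(\cdot)t}\psi$ maps $\Psi(\bQ_p^n)$ onto itself, this gives $\langle\hat w(\cdot,t),\eta\rangle=0$ for every $\eta\in\Psi(\bQ_p^n)$, so $\hat w(\cdot,t)=0$ and therefore $w=0$. The one point that is not a routine Fourier computation --- and hence the part I expect to be the main obstacle --- is the justification of the product and chain rules for the distributional pairing: namely, that the $t$-dependent test functions $e^{\pm\cA(\cdot)t}\psi$ move differentiably inside a single finite-dimensional subspace of $\Phi(\bQ_p^n)$ (so that the difference quotients converge there), which is exactly what is supplied by the local constancy of $\cA$ on the supports of Lizorkin test functions.
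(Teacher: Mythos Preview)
Your proof is correct and follows the same route as the paper: apply the Fourier transform in $x$ to reduce to the scalar ODE $\partial_t\hat u+\cA(\xi)\hat u=0$, solve it, and invert. The paper's own argument is only a few lines and does not spell out the well-definedness of multiplication by $e^{-\cA(\cdot)t}$ in $\Psi'(\bQ_p^n)$ or the uniqueness step; your careful justification of these points via the finite-dimensionality of $\cD^l_N$ and the integrating-factor pairing $g(t)=\langle\hat w(\cdot,t),e^{\cA(\cdot)t}\psi\rangle$ is more detailed than what the authors provide, but the underlying idea is identical.
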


\begin{proof}
Since $u(x,t)$ is a distribution such that $u(x,t)\in \Phi'(\bQ_p^n)$
for any $t\ge 0$, the relation (\ref{eq-70}) is well-defined. Applying
the Fourier transform to (\ref{eq-70}), we obtain the following equation
$$
\frac{\partial F[u(\cdot,t)](\xi)}{\partial t}+\cA(\xi)\,F[u(\cdot,t)](\xi)=0.
$$
Solving this equation, we obtain
$$
F[u(x,t)](\xi)=F[u(x,0)](\xi)e^{-\cA(\xi)\,t}.
$$
This implies (\ref{eq-71}).
\end{proof}

\begin{Theorem}
\label{eq-th5}
Let a pseudo-differential operator $A_{x}$ in {\rm (\ref{eq-70})} be
such that its symbol $\cA(\xi)$ satisfies the condition {\rm (\ref{64.1***})}:
$$
\cA\big(\widehat{p^{j}}(-s+\eta)\big)=\cA\big(-\widehat{p^{j}}s\big),
\qquad \forall \, \eta \in \bZ_p^n,
$$
for any $j\in \bZ^n$, $s\in J_{p;m}^n$.
Then the Cauchy problem {\rm (\ref{eq-70})} has a unique solution
\begin{equation}
\label{eq-75}
u(x,t)=\sum_{s\in J_{p;m}^n,j\in \bZ^n,a\in I_p^n}
\bigl\langle u^{0}(x),\Theta_{s;\,j a}^{(m)\times}\bigr\rangle
e^{-\cA(-\widehat{p^{j}}s)t}\Theta_{s;\,j a}^{(m)\times}(x),
\end{equation}
for $t\ge 0$, where $\Theta_{s;\,j a}^{(m)\times}(x)$ are $n$-dimensional
$p$-adic wavelets {\rm (\ref{62.8})}.
\end{Theorem}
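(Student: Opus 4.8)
The plan is to combine the uniqueness-and-representation result of Theorem~\ref{eq-th5.1} with the wavelet decomposition of the initial datum furnished by Proposition~\ref{pr-w-2**} and the eigenfunction property of Theorem~\ref{th4.1}. The key observation is that, by Corollary~\ref{cor2}, the Fourier transform $F[\Theta_{s;\,j a}^{(m)\times}]$ is supported in the ball $\widehat{p^{j}}(-s+\bZ_p^n)$, and by hypothesis (\ref{64.1***}) the symbol $\cA$ is constant, equal to $\cA(-\widehat{p^{j}}s)$, on that ball; hence the Fourier multiplier $e^{-\cA(\xi)t}$ acts on each wavelet simply as multiplication by the scalar $e^{-\cA(-\widehat{p^{j}}s)t}$.

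Concretely, first I would write the initial datum, via Proposition~\ref{pr-w-2**}, as
\[
u^{0}(x)=\sum_{s\in J_{p;m}^n,\,j\in\bZ^n,\,a\in I_p^n}d^{0}_{s;\,j,a}\,\Theta_{s;\,j a}^{(m)\times}(x),
\qquad d^{0}_{s;\,j,a}=\bigl\langle u^{0},\Theta_{s;\,j a}^{(m)\times}\bigr\rangle,
\]
the series converging in $\Phi'(\bQ_p^n)$. Next, by Theorem~\ref{eq-th5.1} the Cauchy problem (\ref{eq-70}) has the unique solution $u(x,t)=G_t u^{0}$, where $G_t f=F^{-1}\bigl[e^{-\cA(\cdot)t}F[f]\bigr]$. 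Since $\cA\in\cE(\bQ_p^n\setminus\{0\})$, also $e^{-\cA(\cdot)t}\in\cE(\bQ_p^n\setminus\{0\})$, so exactly the argument of Lemma~\ref{lem4} shows that $G_t$ is a well-defined continuous linear operator on $\Phi'(\bQ_p^n)$; therefore it may be applied term by term to the series above. Finally I would compute $G_t\Theta_{s;\,j a}^{(m)\times}$: using the formula for $F[\Theta_{s;\,j a}^{(m)\times}]$ from Corollary~\ref{cor2}, its support equals $\widehat{p^{j}}(-s+\bZ_p^n)$, on which $e^{-\cA(\xi)t}\equiv e^{-\cA(-\widehat{p^{j}}s)t}$ by (\ref{64.1***}), so $G_t\Theta_{s;\,j a}^{(m)\times}=e^{-\cA(-\widehat{p^{j}}s)t}\Theta_{s;\,j a}^{(m)\times}$ — this is precisely the computation already carried out in the proof of Theorem~\ref{th4.1}, with $A$ replaced by $G_t$. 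Substituting back yields exactly (\ref{eq-75}), and uniqueness is inherited from Theorem~\ref{eq-th5.1}.

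The only point that needs genuine care is the term-by-term action of $G_t$ on the infinite distributional series (\ref{wav-9.4=3}) for $u^{0}$. This is handled by the continuity of $G_t$ on $\Phi'(\bQ_p^n)$ noted above together with the convergence of that series in $\Phi'(\bQ_p^n)$; alternatively, one can bypass continuity by pairing with an arbitrary test function $\phi\in\Phi(\bQ_p^n)$, using Lemma~\ref{lem-w-1**} to reduce $\langle u(\cdot,t),\phi\rangle$ to the \emph{finite} sum $\sum d^{0}_{s;\,j,a}e^{-\cA(-\widehat{p^{j}}s)t}c_{s;\,j,a}$ (notation as in (\ref{wav-9.4=5})), in which differentiation in $t$ under the sum is trivial and reproduces (\ref{eq-70}) tested against $\phi$. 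I would also record explicitly that $u(\cdot,t)\in\Phi'(\bQ_p^n)$ for every $t\ge0$, so that (\ref{eq-70}) is meaningful, and note that in the particular case $\cA(\xi)=|\xi|_p^{\alpha}$ formula (\ref{eq-75}) specializes, via Corollary~\ref{cor5}, to the solution with coefficient weights $e^{-p^{\alpha\max_{1\le r\le n}\{m_r-j_r\}}t}$.
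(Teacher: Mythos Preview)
Your argument is correct, but it follows a different path from the paper's own proof. The paper proceeds by a classical eigenfunction-expansion (separation of variables) ansatz: it writes $u(x,t)=\sum_{s,j,a}\Lambda_{s;j,a}(t)\,\Theta_{s;\,j a}^{(m)\times}(x)$ with unknown coefficient functions, substitutes into the equation, invokes Theorem~\ref{th4.1} to reduce to the scalar ODEs $\Lambda'_{s;j,a}(t)+\cA(-\widehat{p^{j}}s)\Lambda_{s;j,a}(t)=0$ (justifying the decoupling via Lemma~\ref{lem-w-1**} by pairing against test functions), solves these, and fixes the constants from the initial datum via (\ref{wav-9.4=4}). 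Your approach instead starts from the already-proved closed-form solution of Theorem~\ref{eq-th5.1}, decomposes $u^{0}$ in the wavelet basis, and pushes the Fourier multiplier $G_t=F^{-1}[e^{-\cA(\cdot)t}F[\cdot]]$ through the series using the support/constancy argument from Corollary~\ref{cor2} and hypothesis (\ref{64.1***}). What you gain is that uniqueness is inherited immediately from Theorem~\ref{eq-th5.1} rather than having to be extracted from the uniqueness of wavelet coefficients, and you avoid solving ODEs; what the paper's route gains is that it is self-contained in the wavelet picture and makes transparent why the same template extends to the semi-linear problem of Theorem~\ref{th4.2-sl}, where no analogue of Theorem~\ref{eq-th5.1} is available.
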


\begin{proof}
According to the formula (\ref{wav-9.4=3}) from Proposition~\ref{pr-w-2**},
we will seek a solution of the Cauchy problem (\ref{eq-70}) in the form
of an infinite sum
\begin{equation}
\label{eq-73}
u(x,t)=\sum_{s\in J_{p;m}^n,j\in \bZ^n,a\in I_p^n}\Lambda_{s;\,j, a}(t)
\Theta_{s;\,j a}^{(m)\times}(x),
\end{equation}
where $\Lambda_{s;\,j, a}(t)$ are the desired functions, $s\in J_{p;m}^n$,
$j\in \bZ^n$, $a\in I_p^n$.

Substituting (\ref{eq-73}) into equation (\ref{eq-70}), in view of
of Theorem~\ref{th4.1}, we obtain
$$
\sum_{s\in J_{p;m}^n,j\in \bZ^n,a\in I_p^n}
\Big(\frac{d\Lambda_{s;\,j, a}(t)}{dt}
+\cA(-\widehat{p^{j}}s)\Lambda_{s;\,j, a}(t)\Big)
\Theta_{s;\,j a}^{(m)\times}(x)=0.
$$
The last equation is understood in the weak sense, i.e.,
\begin{equation}
\label{eq-73-1}
\sum_{s\in J_{p;m}^n,j\in \bZ^n,a\in I_p^n}
\Bigl\langle\Big(\frac{d\Lambda_{s;\,j, a}(t)}{dt}
+\cA(-\widehat{p^{j}}s)\Lambda_{s;\,j, a}(t)\Big)
\Theta_{s;\,j a}^{(m)\times}(x),\phi(x)\Bigr\rangle=0,
\end{equation}
for all $\phi \in \Phi(\bQ_p^n)$.
Since according to Lemma~\ref{lem-w-1**}, any test function $\phi \in \Phi(\bQ_p^n)$
is represented in the form of a {\em finite} sum (\ref{wav-9.4=1}), the equality
(\ref{eq-73-1}) implies that
$$
\frac{d\Lambda_{s;\,j, a}(t)}{dt}
+\cA(-\widehat{p^{j}}s)\Lambda_{s;\,j, a}(t)=0, \quad
\forall \, s\in J_{p;m}^n, \, j\in \bZ^n, \, a\in I_p^n,
$$
for all $t\ge 0$.
Solving this differential equation, we obtain
\begin{equation}
\label{eq-74}
\Lambda_{s;\,j, a}(t)
=\Lambda_{s;\,j, a}(0)e^{-\cA(-\widehat{p^{j}}s)t},
\quad s\in J_{p;m}^n, \quad j\in \bZ^n, \quad a\in I_p^n.
\end{equation}

By substituting (\ref{eq-74}) into (\ref{eq-73}) we find a solution of
the Cauchy problem (\ref{eq-70}) in the form
\begin{equation}
\label{eq-75*}
u(x,t)=\sum_{s\in J_{p;m}^n,j\in \bZ^n,a\in I_p^n}
\Lambda_{s;\,j, a}(0)e^{-\cA(-\widehat{p^{j}}s)t}
\Theta_{s;\,j a}^{(m)\times}(x).
\end{equation}
Setting $t=0$, we find
$$
u^{0}(x)=\sum_{s\in J_{p;m}^n,j\in \bZ^n,a\in I_p^n}
\Lambda_{s;\,j, a}(0)\Theta_{s;\,j a}^{(m)\times}(x),
$$
where $u^0\in \Phi'(\bQ_p^n)$ and according to (\ref{wav-9.4=3}),
the coefficients $\Lambda_{s;\,j, a}(0)$ are uniquely determined by
(\ref{wav-9.4=4}) as
\begin{equation}
\label{eq-75*-1}
\Lambda_{s;\,j, a}(0)
=\bigl\langle u^{0}(x),\Theta_{s;\,j a}^{(m)\times}\bigr\rangle,
\quad s\in J_{p;m}^n, \quad j\in \bZ^n, \quad a\in I_p^n.
\end{equation}
The relations (\ref{eq-75*}), (\ref{eq-75*-1}) imply (\ref{eq-75}).
In view of (\ref{wav-9.4=5}), the sum (\ref{eq-75}) is finite on
any test function from the Lizorkin space $\Phi(\bQ_p^n)$.

The theorem is thus proved.
\end{proof}

Theorem~\ref{eq-th5} and Corollary~\ref{cor5} imply the following assertion.

\begin{Corollary}
\label{eq-cor5}
The Cauchy problem {\rm (\ref{eq-70-c})} has a unique solution
\begin{equation}
\label{eq-75-c}
u(x,t)=\sum_{s\in J_{p;m}^n,j\in \bZ^n,a\in I_p^n}
\bigl\langle u^{0}(x),\Theta_{s;\,j a}^{(m)\times}\bigr\rangle
e^{-p^{\alpha\max_{1\le r\le n}\{m_r-j_r\}}t}\Theta_{s;\,j a}^{(m)\times}(x),
\end{equation}
for $t\ge 0$, where $\Theta_{s;\,j a}^{(m)\times}(x)$ are $n$-dimensional
$p$-adic wavelets {\rm (\ref{62.8})}.
\end{Corollary}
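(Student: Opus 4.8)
The plan is to obtain Corollary~\ref{eq-cor5} as a direct specialization of Theorem~\ref{eq-th5}. First I would identify the operator in the Cauchy problem (\ref{eq-70-c}), namely $A_x = D^{\alpha}_x$, with the pseudo-differential operator (\ref{64.3}) whose symbol is $\cA(\xi) = |\xi|_p^{\alpha}$, and then verify that this symbol satisfies the eigenfunction condition (\ref{64.1***}) occurring in the hypothesis of Theorem~\ref{eq-th5}. This is the short computation carried out just before Corollary~\ref{cor5}: for every $s = (s_1,\dots,s_n) \in J_{p;m}^n$, $j \in \bZ^n$ and $\eta \in \bZ_p^n$ one has $|{-s_r+\eta_r}|_p = |s_r|_p = p^{m_r}$ by the strong triangle inequality (since $|s_r|_p = p^{m_r} \ge p > |\eta_r|_p$), hence, using (\ref{8}),
$$ \cA\big(\widehat{p^{j}}(-s+\eta)\big) = \big|\widehat{p^{j}}(-s+\eta)\big|_p^{\alpha} = \Big(\max_{1\le r\le n}\big(p^{-j_r}|s_r|_p\big)\Big)^{\alpha} = p^{\alpha\max_{1\le r\le n}\{m_r-j_r\}} = \cA\big(-\widehat{p^{j}}s\big), $$
for all $\alpha \in \bC$. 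Thus the hypothesis of Theorem~\ref{eq-th5} is met for every $j$ and $s$.

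Next I would invoke Theorem~\ref{eq-th5}: it yields existence and uniqueness of a solution $u(\cdot,t) \in \Phi'(\bQ_p^n)$ given by formula (\ref{eq-75}) with eigenvalue $\cA(-\widehat{p^{j}}s)$. Substituting the value $\cA(-\widehat{p^{j}}s) = p^{\alpha\max_{1\le r\le n}\{m_r-j_r\}}$ computed above into (\ref{eq-75}) gives precisely (\ref{eq-75-c}). Equivalently, one may re-run the termwise argument from the proof of Theorem~\ref{eq-th5}: expand $u$ in the wavelet basis (\ref{62.8}) via Proposition~\ref{pr-w-2**}, apply $D^{\alpha}_x$ termwise using the eigenrelation of Corollary~\ref{cor5}, solve the scalar ODEs $\Lambda'_{s;j,a}(t) + p^{\alpha\max_{1\le r\le n}\{m_r-j_r\}}\Lambda_{s;j,a}(t) = 0$, and fix the constants from the initial data through $\Lambda_{s;j,a}(0) = \langle u^{0}, \Theta_{s;\,j a}^{(m)\times}\rangle$ as in (\ref{wav-9.4=4}). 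Either route yields (\ref{eq-75-c}), and uniqueness is inherited verbatim from Theorem~\ref{eq-th5}.

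The only point deserving a word of care — and the closest thing to an obstacle, though a mild one — is that the statement is claimed for all $\alpha \in \bC$, not merely $\alpha > 0$. This causes no difficulty: each wavelet (\ref{62.8}) lies in the Lizorkin space $\Phi(\bQ_p^n)$, so $D^{\alpha}$ acts on it for every complex $\alpha$ (Lemma~\ref{lem4} and Subsec.~\ref{s4.2}); and by Lemma~\ref{lem-w-1**} every test function $\phi \in \Phi(\bQ_p^n)$ has only finitely many nonzero wavelet coefficients, so the infinite sum (\ref{eq-75-c}), paired with any such $\phi$, reduces to a finite sum (cf. (\ref{wav-9.4=5})). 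Hence (\ref{eq-75-c}) defines a bona fide element of $\Phi'(\bQ_p^n)$ for each $t \ge 0$ irrespective of the sign of $\mathrm{Re}\,\alpha$, and since the $t$-dependence enters only through the scalar factors $e^{-\cA(-\widehat{p^{j}}s)t}$ multiplying a locally finite sum, differentiation in $t$ term by term is justified when testing against $\phi \in \Phi(\bQ_p^n)$. With these observations the corollary follows.
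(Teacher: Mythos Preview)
Your proposal is correct and follows essentially the same approach as the paper, which simply states that Theorem~\ref{eq-th5} together with Corollary~\ref{cor5} imply the result. Your verification of condition~(\ref{64.1***}) for the symbol $|\xi|_p^{\alpha}$ and the subsequent substitution of the eigenvalue $p^{\alpha\max_{1\le r\le n}\{m_r-j_r\}}$ into (\ref{eq-75}) is exactly what is needed; the additional remarks on $\alpha\in\bC$ and the finiteness of the pairing are sound elaborations already implicit in the cited results.
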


Solutions of the Cauchy problems (\ref{eq-70}) and (\ref{eq-70-c})
describe the diffusion processes in the space $\bQ_p^n$.

If $\cA(-\widehat{p^{j}}s)>0$, according to (\ref{eq-75}), $u(x,t)\to 0$,
as $t\to \infty$. In particular, this fact holds for the solution
of the Cauchy problem (\ref{eq-70-c}).

\begin{Example}
\label{ex1}\rm
Consider the one-dimensional Cauchy problem (\ref{eq-70-c}) for the
initial data
$$
u^{0}(x)=\Omega(|x|_p)=\left\{
\begin{array}{rcl}
1,  &&|x|_p\le 1, \\
0,  &&|x|_p> 1. \\
\end{array}
\right.
$$
Substituting (\ref{62.1}), (\ref{62.7-11}), (\ref{62.7-12}) into (\ref{eq-75-c}),
we obtain a solution of this Cauchy problem:
$$
u(x,t)=\sum_{s\in J_{p;m}}\sum_{j=m}^{\infty}p^{-j}
e^{-p^{\alpha(m-j)}t}\chi_p\big(sp^{j}x\big)
\Omega\big(|p^{j}x|_p\big).
$$
\end{Example}

{\bf (b)} Now we consider the Cauchy problem
\begin{equation}
\label{eq-70.1}
\left\{
\begin{array}{rclrcl}
i\frac{\partial u(x,t)}{\partial t}-A_xu(x,t)&=&0,
&&\text{in} \quad \bQ_p^n\times (0, \ \infty), \medskip \\
u(x,t)&=&u^0(x),
&&\text{in} \quad \bQ_p^n\times \{t=0\}, \\
\end{array}
\right.
\end{equation}
where $u^0\in \Phi'(\bQ_p^n)$ and a pseudo-differential operator operator $A_x$
is given by (\ref{eq-70-op}). In particular, we have the Cauchy problem
\begin{equation}
\label{eq-70.1-c}
\left\{
\begin{array}{rclrcl}
i\frac{\partial u(x,t)}{\partial t}-D^{\alpha}_xu(x,t)&=&0,
&&\text{in} \quad \bQ_p^n\times (0, \ \infty), \medskip \\
u(x,t)&=&u^0(x),
&&\text{in} \quad \bQ_p^n\times \{t=0\}, \\
\end{array}
\right.
\end{equation}
where $D^{\alpha}_x$ is the Taibleson fractional operator (\ref{61**})
with respect to $x$, \ $\alpha\in\bC$.

Using the above results, one can construct a solution of
the Cauchy problems (\ref{eq-70.1}) and  (\ref{eq-70.1-c}).

\begin{Theorem}
\label{eq-th6}
Let a pseudo-differential operator $A_{x}$ in {\rm (\ref{eq-70.1})} be
such that its symbol $\cA(\xi)$ satisfies the condition {\rm (\ref{64.1***})}:
$$
\cA\big(\widehat{p^{j}}(-s+\eta)\big)=\cA\big(-\widehat{p^{j}}s\big),
\qquad \forall \, \eta \in \bZ_p^n,
$$
for any $j\in \bZ^n$, $s\in J_{p;m}^n$.
Then the Cauchy problem {\rm (\ref{eq-70.1})} has a unique solution
\begin{equation}
\label{eq-75.1}
u(x,t)=\sum_{s\in J_{p;m}^n,j\in \bZ^n,a\in I_p^n}
\bigl\langle u^{0}(x),\Theta_{s;\,j a}^{(m)\times}\bigr\rangle
e^{-i\cA(-\widehat{p^{j}}s)t}\Theta_{s;\,j a}^{(m)\times}(x),
\end{equation}
for $t\ge 0$, where $\Theta_{s;\,j a}^{(m)\times}(x)$ are $n$-dimensional
$p$-adic wavelets {\rm (\ref{62.8})}.
\end{Theorem}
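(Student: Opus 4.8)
The plan is to follow the proof of Theorem~\ref{eq-th5} almost verbatim; the only change is the scalar ordinary differential equation governing the wavelet coefficients. By Proposition~\ref{pr-w-2**} I would seek the solution as an infinite wavelet series
$$
u(x,t)=\sum_{s\in J_{p;m}^n,\,j\in \bZ^n,\,a\in I_p^n}\Lambda_{s;\,j, a}(t)\,\Theta_{s;\,j a}^{(m)\times}(x),
$$
with unknown coefficient functions $\Lambda_{s;\,j, a}(t)$ of the real variable $t\ge 0$. Substituting this series into $i\partial_t u-A_xu=0$ and using Theorem~\ref{th4.1} --- which applies precisely because the symbol $\cA$ satisfies condition (\ref{64.1***}), so that $A_x\Theta_{s;\,j a}^{(m)\times}=\cA(-\widehat{p^{j}}s)\,\Theta_{s;\,j a}^{(m)\times}$ --- one obtains
$$
\sum_{s,\,j,\,a}\Big(i\frac{d\Lambda_{s;\,j, a}(t)}{dt}-\cA(-\widehat{p^{j}}s)\Lambda_{s;\,j, a}(t)\Big)\Theta_{s;\,j a}^{(m)\times}(x)=0,
$$
to be understood in the weak sense against test functions $\phi\in\Phi(\bQ_p^n)$.

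Since by Lemma~\ref{lem-w-1**} every such $\phi$ has a \emph{finite} wavelet expansion, pairing with $\phi$ and invoking orthonormality of the basis (\ref{62.8}) forces each bracket to vanish, i.e. $i\Lambda_{s;\,j, a}'(t)=\cA(-\widehat{p^{j}}s)\Lambda_{s;\,j, a}(t)$ for all indices and all $t\ge 0$. Solving this linear ODE gives $\Lambda_{s;\,j, a}(t)=\Lambda_{s;\,j, a}(0)\,e^{-i\cA(-\widehat{p^{j}}s)t}$. Imposing the initial condition $u(x,0)=u^0(x)$ and comparing with the expansion (\ref{wav-9.4=3}) of $u^0\in\Phi'(\bQ_p^n)$, the coefficients are uniquely determined by (\ref{wav-9.4=4}) as $\Lambda_{s;\,j, a}(0)=\langle u^{0},\Theta_{s;\,j a}^{(m)\times}\rangle$, which yields (\ref{eq-75.1}). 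Uniqueness is built into the argument: any Lizorkin-distributional solution must have wavelet coefficients satisfying this same ODE with the same initial value, and by (\ref{wav-9.4=5}) the resulting series acts as a finite sum on each test function, so it indeed defines an element of $\Phi'(\bQ_p^n)$ for every $t\ge 0$.

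The only point requiring care --- the analogue of the (mild) obstacle already present in Theorem~\ref{eq-th5} --- is the interpretation of the infinite series: one must justify term-by-term differentiation in $t$ and term-by-term application of $A_x$ within $\Phi'(\bQ_p^n)$. This is handled exactly as before, since testing against a fixed $\phi\in\Phi(\bQ_p^n)$ collapses the sum to finitely many terms by Lemma~\ref{lem-w-1**}; all manipulations thus reduce to finite sums of smooth functions of $t$, and no convergence or growth issue involving the factor $e^{-i\cA(-\widehat{p^{j}}s)t}$ (for a possibly complex-valued symbol $\cA$) can arise.
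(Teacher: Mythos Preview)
Your proof is correct and follows exactly the approach the paper intends: the paper does not write out a separate proof for Theorem~\ref{eq-th6} but simply remarks that ``using the above results'' (i.e., the method of Theorem~\ref{eq-th5}) one constructs the solution, which is precisely what you do. The only modification --- replacing the ODE $\Lambda'+\cA\Lambda=0$ by $i\Lambda'-\cA\Lambda=0$ --- is handled correctly, yielding the factor $e^{-i\cA(-\widehat{p^{j}}s)t}$ in (\ref{eq-75.1}).
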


\begin{Corollary}
\label{eq-cor6}
The Cauchy problem {\rm (\ref{eq-70.1-c})} has a unique solution
\begin{equation}
\label{eq-75.1-c}
u(x,t)=\sum_{s\in J_{p;m}^n,j\in \bZ^n,a\in I_p^n}
\bigl\langle u^{0}(x),\Theta_{s;\,j a}^{(m)\times}\bigr\rangle
e^{-ip^{\alpha\max_{1\le r\le n}\{m_r-j_r\}}t}\Theta_{s;\,j a}^{(m)\times}(x),
\end{equation}
for $t\ge 0$, where $\Theta_{s;\,j a}^{(m)\times}(x)$ are $n$-dimensional
$p$-adic wavelets {\rm (\ref{62.8})}.
\end{Corollary}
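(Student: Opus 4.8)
The plan is to derive this corollary by specializing Theorem~\ref{eq-th6} to the operator $A_x=D^{\alpha}_x$, the Taibleson fractional operator~(\ref{61**}) whose symbol is $\cA(\xi)=|\xi|_p^{\alpha}$. First I would record that this symbol is admissible, i.e. $\cA\in\cE(\bQ_p^n\setminus\{0\})$, since the $p$-adic norm $|\cdot|_p$ is locally constant on $\bQ_p^n\setminus\{0\}$ and hence so is any complex power of it; thus $D^{\alpha}_x$ is an operator of the form~(\ref{64.3}) and the Cauchy problem~(\ref{eq-70.1-c}) is a genuine special case of~(\ref{eq-70.1}).

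Next I would check the only hypothesis of Theorem~\ref{eq-th6} that is not automatic, namely that $\cA(\xi)=|\xi|_p^{\alpha}$ satisfies condition~(\ref{64.1***}). This is the elementary computation already carried out in Subsec.~\ref{s4.4}: for $s=(s_1,\dots,s_n)\in J_{p;m}^n$ the defining constraint $s_0\neq 0$ in~(\ref{62.0*}) forces $|s_r|_p=p^{m_r}$, so that for every $\eta=(\eta_1,\dots,\eta_n)\in\bZ_p^n$ one has $|p^{j_r}(-s_r+\eta_r)|_p=p^{-j_r}|s_r|_p=p^{m_r-j_r}$, because $|\eta_r|_p\le 1<|s_r|_p$. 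Taking the maximum over $r$ and raising to the power $\alpha$ yields
$$
\cA\big(\widehat{p^{j}}(-s+\eta)\big)=p^{\alpha\max_{1\le r\le n}\{m_r-j_r\}}=\cA\big(-\widehat{p^{j}}s\big),\qquad\forall\,\eta\in\bZ_p^n,
$$
for all $j\in\bZ^n$ and $s\in J_{p;m}^n$, which is exactly~(\ref{64.1***}); equivalently, this is the eigenvalue relation of Corollary~\ref{cor5}.

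With the hypothesis in hand, Theorem~\ref{eq-th6} applies verbatim and produces the unique solution~(\ref{eq-75.1}) of~(\ref{eq-70.1-c}) with $\cA(-\widehat{p^{j}}s)$ replaced by the explicit value $p^{\alpha\max_{1\le r\le n}\{m_r-j_r\}}$ just computed; this is precisely formula~(\ref{eq-75.1-c}), and the statement that the series acts as a finite sum on each Lizorkin test function is inherited from Proposition~\ref{pr-w-2**}. Since the whole argument is nothing more than this substitution, I do not expect any real obstacle; the only step demanding (minor) attention is the norm identity $|s_r|_p=p^{m_r}$, which is exactly what makes the eigenvalue insensitive to the perturbation $\eta$ and hence makes~(\ref{64.1***}) hold.
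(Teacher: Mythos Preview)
Your proposal is correct and is exactly the intended argument: the paper states Corollary~\ref{eq-cor6} without proof, as a direct specialization of Theorem~\ref{eq-th6} to $\cA(\xi)=|\xi|_p^{\alpha}$, with the verification of~(\ref{64.1***}) and the identification $\cA(-\widehat{p^{j}}s)=p^{\alpha\max_{1\le r\le n}\{m_r-j_r\}}$ already recorded in Subsec.~\ref{s4.4} (cf.\ Corollary~\ref{cor5}). There is nothing to add.
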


\subsection{Semi-linear equations}\label{s5.2}
Consider the Cauchy problem for the semi-linear pseudo-differential equation:
\begin{equation}
\label{76-sl}
\left\{
\begin{array}{rcl}
\frac{\partial u(x,t)}{\partial t}+A_xu(x,t) + u(x,t)|u(x,t)|^{2m}=0,
&&\text{in} \quad \bQ_p^n\times (0, \ \infty), \medskip \\
u(x,t)=u^0(x), &&\text{in} \quad \bQ_p^n\times \{t=0\}, \\
\end{array}
\right.
\end{equation}
where pseudo-differential operator $A_x$ is given by (\ref{eq-70-op}), $m\in \bN$,
$u(x,t)$ is the desired distribution such that $u(x,t)\in \Phi'(\bQ_p^n)$
for any $t\ge 0$.

According to Proposition~\ref{pr-w-2**}, a distribution $u(x,t)$
can be realized as an infinite sum of the form
\begin{equation}
\label{77-sl}
u(x,t)=\sum_{s\in J_{p;m}^n,j\in \bZ^n,a\in I_p^n}\Lambda_{s;\,j, a}(t)
\Theta_{s;\,j a}^{(m)\times}(x),
\end{equation}
where $\Lambda_{s;\,j, a}(t)$ are the desired functions,
$\Theta_{s;\,j a}^{(m)\times}(x)$ are elements of the wavelet basis (\ref{62.8}).
We will solve the Cauchy problem in a particular class of distributions $u(x,t)$
such that in representation (\ref{77-sl})
\begin{equation}
\label{82.0-1-sl}
\widehat{p^{j'-j}}a-a' \notin \bZ_p^n, \quad \text{if}\quad j_k<j_k', \quad
k=1,\dots,n.
\end{equation}
In view of (\ref{62.4-1}), in this case all sets
$\{x\in \bQ_p^n:|\widehat{p^{j}}x-a|_p\le 1\}$,
$\{x\in \bQ_p^n:|\widehat{p^{j'}}x-a'|_p\le 1\}$
are disjoint.

\begin{Theorem}
\label{th4.2-sl}
Let a pseudo-differential operator $A_{x}$ in {\rm (\ref{76-sl})} be
such that its symbol $\cA(\xi)$ satisfies the condition {\rm (\ref{64.1***})}:
$$
\cA\big(\widehat{p^{j}}(-s+\eta)\big)=\cA\big(-\widehat{p^{j}}s\big),
\qquad \forall \, \eta \in \bZ_p^n,
$$
for any $j\in \bZ^n$, $s\in J_{p;m}^n$.
Then in the above-mentioned class of distributions {\rm(\ref{77-sl})}, {\rm(\ref{82.0-1-sl})}
the Cauchy problem {\rm(\ref{76-sl})} has a unique solution
$$
u(x,t)
\qquad\qquad\qquad\qquad\qquad\qquad\qquad\qquad\qquad\qquad\qquad\qquad\qquad
$$
\begin{equation}
\label{82.0-sl}
\sum_{s\in J_{p;m}^n,j\in \bZ^n,a\in I_p^n}
\frac{\langle u^{0}(x),\Theta_{s;\,j a}^{(m)\times}\rangle\big(\cA(-\widehat{p^{j}}s)\big)^{1/2m}
e^{-\cA(-\widehat{p^{j}}s)t}\,\Theta_{s;\,j a}^{(m)\times}(x)}
{\big(\cA(-\widehat{p^{j}}s)+\langle u^{0}(x),\Theta_{s;\,j a}^{(m)\times}\rangle^{2m}p^{-m|j|}
\big(1-e^{-2m\cA(-\widehat{p^{j}}s)t}\big)\big)^{1/2m}}
\end{equation}
for $t\ge 0$, where $\Theta_{s;\,j a}^{(m)\times}(x)$ are $n$-dimensional
$p$-adic wavelets {\rm (\ref{62.8})}. Moreover, this formula is applicable
for the case $\cA\equiv 0$.
\end{Theorem}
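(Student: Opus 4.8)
The plan is to follow the proof of Theorem~\ref{eq-th5} closely: seek the solution in the wavelet form (\ref{77-sl}) provided by Proposition~\ref{pr-w-2**}, substitute it into (\ref{76-sl}), and use that each $\Theta_{s;\,j a}^{(m)\times}$ is an eigenfunction of $A_x$ (Theorem~\ref{th4.1}, applicable because of the hypothesis (\ref{64.1***})). The only genuinely new ingredient compared with the linear case is the treatment of the term $u|u|^{2m}$, and this is precisely where the restriction to the class (\ref{82.0-1-sl}) is used.

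First I would compute the wavelet expansion of $u|u|^{2m}$ for $u$ of the form (\ref{77-sl}) in the class (\ref{82.0-1-sl}). By (\ref{62.4-1}) and (\ref{82.0-1-sl}) the supports (which are balls) of the wavelets $\Theta_{s;\,j a}^{(m)\times}$ occurring in (\ref{77-sl}) with nonzero coefficient are pairwise disjoint, so on each such ball $u(x,t)$ coincides with a single term $\Lambda_{s;\,j, a}(t)\,\Theta_{s;\,j a}^{(m)\times}(x)$; moreover on its support $|\Theta_{s;\,j a}^{(m)\times}(x)|\equiv p^{-|j|/2}$ since $|\chi_p(\cdot)|=1$. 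Hence $|u(x,t)|^{2m}=p^{-m|j|}|\Lambda_{s;\,j, a}(t)|^{2m}$ there, and consequently
\[
u(x,t)\,|u(x,t)|^{2m}=\sum_{s\in J_{p;m}^n,\,j\in\bZ^n,\,a\in I_p^n}p^{-m|j|}\Lambda_{s;\,j, a}(t)\,|\Lambda_{s;\,j, a}(t)|^{2m}\,\Theta_{s;\,j a}^{(m)\times}(x).
\]
Substituting (\ref{77-sl}) into (\ref{76-sl}), using $A_x\Theta_{s;\,j a}^{(m)\times}=\cA(-\widehat{p^{j}}s)\Theta_{s;\,j a}^{(m)\times}$ together with the display above, testing against an arbitrary $\phi\in\Phi(\bQ_p^n)$ --- which is a {\em finite} wavelet sum by Lemma~\ref{lem-w-1**} --- and invoking orthonormality of the basis (\ref{62.8}), I obtain that (\ref{76-sl}) is equivalent to the decoupled family of scalar ODEs
\[
\frac{d\Lambda_{s;\,j, a}(t)}{dt}+\cA(-\widehat{p^{j}}s)\,\Lambda_{s;\,j, a}(t)+p^{-m|j|}\Lambda_{s;\,j, a}(t)\,|\Lambda_{s;\,j, a}(t)|^{2m}=0,
\]
$s\in J_{p;m}^n$, $j\in\bZ^n$, $a\in I_p^n$, with initial conditions $\Lambda_{s;\,j, a}(0)=\langle u^{0},\Theta_{s;\,j a}^{(m)\times}\rangle$ dictated by (\ref{wav-9.4=3})--(\ref{wav-9.4=4}).

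Next I would solve each of these ODEs. Each is a Bernoulli-type equation: the substitution $w(t)=\Lambda_{s;\,j, a}(t)^{-2m}$ (for real coefficients, $w=|\Lambda_{s;\,j, a}|^{-2m}$) reduces it to the linear equation $w'-2m\cA(-\widehat{p^{j}}s)\,w=2m\,p^{-m|j|}$, whose solution is
\[
w(t)=e^{2m\cA(-\widehat{p^{j}}s)t}\Big(\Lambda_{s;\,j, a}(0)^{-2m}+\frac{p^{-m|j|}}{\cA(-\widehat{p^{j}}s)}\big(1-e^{-2m\cA(-\widehat{p^{j}}s)t}\big)\Big);
\]
inverting $\Lambda_{s;\,j, a}=w^{-1/2m}$ and simplifying gives exactly the coefficient appearing in (\ref{82.0-sl}). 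In the degenerate case $\cA\equiv0$ the equation reduces to $\Lambda'=-p^{-m|j|}\Lambda|\Lambda|^{2m}$, directly integrable to $\Lambda_{s;\,j, a}(t)=\Lambda_{s;\,j, a}(0)\big(1+2m\,p^{-m|j|}|\Lambda_{s;\,j, a}(0)|^{2m}t\big)^{-1/2m}$, which is precisely the limit of (\ref{82.0-sl}) as $\cA(-\widehat{p^{j}}s)\to0$ (the apparent $0/0$ in (\ref{82.0-sl}) cancels). Uniqueness within the class follows from uniqueness of solutions of these ODEs together with the fact that the expansion (\ref{77-sl}) and the data $\Lambda_{s;\,j, a}(0)$ are uniquely determined by $u^{0}$; that (\ref{82.0-sl}) defines an element of $\Phi'(\bQ_p^n)$ for each $t$ and that the equation holds in $\Phi'(\bQ_p^n)$ is checked exactly as in Theorem~\ref{eq-th5}, using (\ref{wav-9.4=5}) (the pairing with any test function is a finite sum).

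The step I expect to be the crux is the wavelet expansion of $u|u|^{2m}$: for a general distribution in $\Phi'(\bQ_p^n)$ the nonlinearity couples wavelets across different scales, locations, and generating characters, and it is only the geometric disjointness encoded in (\ref{82.0-1-sl}) --- so that on every ball $u$ reduces to a single wavelet, whose modulus is constant there --- that makes $u|u|^{2m}$ diagonal in the basis and the coefficient equations decouple. A secondary point to handle carefully is the interplay between the (generally complex) eigenvalue $\cA(-\widehat{p^{j}}s)$, the complex wavelet coefficients, and the power $\Lambda^{-2m}$ in the Bernoulli substitution; this is harmless when the relevant data are real (e.g.\ for real symbols), and in general is handled by tracking $|\Lambda_{s;\,j, a}|^{-2m}$ and the phase of $\Lambda_{s;\,j, a}$ separately.
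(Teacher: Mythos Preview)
Your proposal is correct and follows essentially the same route as the paper: expand $u$ in the wavelet basis (\ref{77-sl}), use the disjointness condition (\ref{82.0-1-sl}) together with $|\chi_p(\cdot)|=1$ to diagonalize the nonlinearity $u|u|^{2m}$, invoke Theorem~\ref{th4.1} for the linear part, reduce to decoupled Bernoulli ODEs, solve them, and match the initial data via (\ref{wav-9.4=3})--(\ref{wav-9.4=4}); the case $\cA\equiv 0$ is then obtained by passing to the limit. The only cosmetic difference is that the paper writes the nonlinear coefficient as $\Lambda^{2m+1}$ and presents the integrated ODE in the form $\Lambda^{2m}/(\cA+p^{-m|j|}\Lambda^{2m})=E\,e^{-2m\cA t}$ before solving for the integration constant, whereas you go through the explicit substitution $w=\Lambda^{-2m}$; your extra care with $|\Lambda|$ versus $\Lambda$ for complex data is a refinement the paper does not discuss.
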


\begin{proof}
Since $|\chi_p\big(p^{-1}k\cdot(p^{j}x-a)\big)|=1$, taking into account formulas
(\ref{77-sl}), (\ref{82.0-1-sl}), (\ref{62.8}), we obtain
$$
|u(x,t)|^{2}=\sum_{j\in \bZ,k\in J_{p \, 0}^n,a\in I_p^n}
\Lambda_{s;\,j, a}^2(t)p^{-|j|}\Omega\big(|\widehat{p^{j}}x-a|_p\big)
$$
and
\begin{equation}
\label{77.3-sl}
u(x,t)|u(x,t)|^{2m}=\sum_{j\in \bZ,k\in J_{p \, 0}^n,a\in I_p^n}
\Lambda_{s;\,j, a}^{2m+1}(t)p^{-m|j|}\Theta_{s;\,j a}^{(m)\times}(x),
\end{equation}
where the indexes in the above sums satisfy the condition (\ref{82.0-1-sl}).

Substituting (\ref{77.3-sl}) and (\ref{77-sl}) into (\ref{76-sl}),
in view of of Theorem~\ref{th4.1}, we find that
$$
\sum_{j\in \bZ,k\in J_{p \, 0}^n,a\in I_p^n}
\Big(\frac{d\Lambda_{s;\,j, a}(t)}{dt}
+\cA\big(-\widehat{p^{j}}s\big)\Lambda_{s;\,j, a}(t)
\qquad\qquad\qquad\qquad\qquad\qquad
$$
\begin{equation}
\label{78-sl}
\qquad\qquad
+p^{-m|j|}\Lambda_{s;\,j, a}^{2m+1}(t)\Big)\Theta_{s;\,j a}^{(m)\times}(x)=0,
\end{equation}
where the last equation is understood in the weak sense.
Since, according to Lemma~\ref{lem-w-1**}, any test function $\phi \in \Phi(\bQ_p^n)$
is represented in the form of a {\em finite} sum (\ref{wav-9.4=1}), the equality
(\ref{78-sl}) implies that
$$
\frac{d\Lambda_{s;\,j, a}(t)}{dt}
+\cA\big(-\widehat{p^{j}}s\big)\Lambda_{s;\,j, a}(t)
\qquad\qquad\qquad\qquad\qquad\qquad\qquad\qquad
$$
\begin{equation}
\label{78.1-sl}
+p^{-m|j|}\Lambda_{s;\,j, a}^{2m+1}(t)=0, \quad
\forall \, s\in J_{p;m}^n, \, j\in \bZ^n, \, a\in I_p^n,
\end{equation}
for all $t\ge 0$.
Integrating (\ref{78.1-sl}), we obtain
$$
\frac{\Lambda_{s;\,j, a}^{2m}(t)}
{\cA(-\widehat{p^{j}}s)+p^{-m|j|}\Lambda_{s;\,j, a}^{2m}(t)}
=E_{s;\,j, a}e^{-2m\cA(-\widehat{p^{j}}s)t},
$$
i.e.,
\begin{equation}
\label{81.0-sl}
\Lambda_{s;\,j, a}(t)=\frac{E_{s;\,j, a}^{1/2m}\big(\cA(-\widehat{p^{j}}s)\big)^{1/2m}
e^{-\cA(-\widehat{p^{j}}s)t}}
{\big(1-E_{s;\,j, a}p^{-m|j|}e^{-2m\cA(-\widehat{p^{j}}s)t}\big)^{1/2m}},
\end{equation}
where $E_{s;\,j, a}$ is a constant, $s\in J_{p;m}^n$, $j\in \bZ^n$, $a\in I_p^n$.
Substituting (\ref{81.0-sl}) into (\ref{77-sl}), we find
a solution of the problem (\ref{76-sl})
\begin{equation}
\label{81.0-2-sl}
u(x,t)=\sum_{s\in J_{p;m}^n,j\in \bZ^n,a\in I_p^n}
\frac{E_{s;\,j, a}^{1/2m}\big(\cA(-\widehat{p^{j}}s)\big)^{1/2m}
e^{-\cA(-\widehat{p^{j}}s)t}}
{\big(1-E_{s;\,j, a}p^{-m|j|}e^{-2m\cA(-\widehat{p^{j}}s)t}\big)^{1/2m}}
\Theta_{s;\,j a}^{(m)\times}(x),
\end{equation}
$x\in \bQ_p^n$, $t\ge 0$. Setting in (\ref{81.0-2-sl}) $t=0$,
we obtain that
$$
u^{0}(x)=\sum_{s\in J_{p;m}^n,j\in \bZ^n,a\in I_p^n}
\bigg(\frac{E_{s;\,j, a}\cA(-\widehat{p^{j}}s)}
{1-E_{s;\,j, a}p^{-m|j|}}\bigg)^{1/2m}\Theta_{s;\,j a}^{(m)\times}(x),
$$
where $u^0\in \Phi'(\bQ_p^n)$. Hence, according to (\ref{wav-9.4=3}),
the coefficients $E_{s;\,j, a}$ are uniquely determined by
(\ref{wav-9.4=4}) as
$$
\bigg(\frac{E_{s;\,j, a}\cA(-\widehat{p^{j}}s)}
{1-E_{s;\,j, a}p^{-m|j|}}\bigg)^{1/2m}
=\bigl\langle u^{0}(x),\Theta_{s;\,j a}^{(m)\times}\bigr\rangle.
$$
The last equation implies that
$$
E_{s;\,j, a}=\frac{\langle u^{0}(x),\Theta_{s;\,j a}^{(m)\times}\rangle^{2m}}
{\cA(-\widehat{p^{j}}s)+p^{-m|j|}\langle u^{0}(x),\Theta_{s;\,j a}^{(m)\times}\rangle^{2m}}
$$
Substituting $E_{s;\,j, a}$ into (\ref{81.0-2-sl}), we obtain (\ref{82.0-sl}).
In view of (\ref{wav-9.4=5}), the sum (\ref{82.0-sl}) is finite on any test function
from the Lizorkin space $\Phi(\bQ_p^n)$.

Now by passing to the limit as $\cA\to0$ in formula (\ref{82.0-sl}),
one can easily see that this formula (\ref{82.0-sl}) is applicable
for the case $\cA\equiv 0$.

The theorem is thus proved.
\end{proof}

\section*{Acknowledgments}

The authors are greatly indebted to M.~A.~Skopina for fruitful discussions.

\end{document}